\documentclass[journal,twoside,web]{ieeecolor}
\usepackage{generic}
\usepackage{cite}
\usepackage{textcomp}

\usepackage{amssymb,latexsym,amsfonts,amsmath,bbm}
\usepackage[cal=cm,bb=pazo]{mathalfa}
\usepackage{mathrsfs}
\usepackage{graphicx}
\usepackage{paralist}
\usepackage{dsfont}
\usepackage{eso-pic}
\usepackage{epsfig}
\usepackage{mathtools}
\usepackage{epstopdf}
\usepackage{lipsum}
\usepackage{cite}
\usepackage{subcaption}
\usepackage{booktabs}
\usepackage{multirow}
\newcommand*{\myalign}[2]{\multicolumn{1}{#1}{#2}}

\usepackage[nocomma]{optidef}

\usepackage{tikz}
\usetikzlibrary{calc,shapes,arrows}

\usepackage{algorithm}
\usepackage{algorithmic}

% Create separate counters for each type of theorem-like environment
\newcounter{theorem}
\newcounter{definition}
\newcounter{lemma}
\newcounter{claim}
\newcounter{problem}
\newcounter{proposition}
\newcounter{corollary}
\newcounter{construction}
\newcounter{example}
\newcounter{xca}
\newcounter{comments}
\newcounter{remark}
\newcounter{assumption}

% Define theorem-like environments with independent numbering
\newtheorem{theorem}[theorem]{Theorem}
\newtheorem{lemma}[lemma]{Lemma}

\newtheorem{problem}[problem]{Problem}

\newtheorem{corollary}[corollary]{Corollary}

% Define environments with independent numbering
\newtheorem{definition}[definition]{Definition}

\newtheorem{remark}[remark]{Remark}
\newtheorem{assumption}[assumption]{Assumption}

% Set numbering to be continuous for each type

\numberwithin{equation}{section}

\DeclareFontFamily{U}{stix2bb}{}
\DeclareFontShape{U}{stix2bb}{m}{n} {<-> stix2-mathbb}{}

\NewDocumentCommand{\stixbbdigit}{m}{%
	\text{\usefont{U}{stix2bb}{m}{n}#1}%
}
\newcommand{\bbzero}{\stixbbdigit{0}}

\usepackage[many]{tcolorbox}
\usetikzlibrary{calc}
\tcbuselibrary{skins}

\newtcolorbox{resp}[1][]{%
	enhanced jigsaw,%
	colback=gray!5!white,%
	colframe=gray!80!black,%
	size=small,%
	boxrule=1pt,%
	halign title=flush center,%
	coltitle=black,%
	breakable,%
	drop shadow=black!50!white,%
	attach boxed title to top left={xshift=1cm,yshift=-\tcboxedtitleheight/2,yshifttext=-\tcboxedtitleheight/2},%
	minipage boxed title=3cm,%
	boxed title style={%
		colback=white,%
		size=fbox,%
		boxrule=1pt,%
		boxsep=2pt,%
		underlay={%
			\coordinate (dotA) at ($(interior.west) + (-0.5pt,0)$);
			\coordinate (dotB) at ($(interior.east) + (0.5pt,0)$);
			\begin{scope}[gray!80!black]
				\fill (dotA) circle (2pt);
				\fill (dotB) circle (2pt);
			\end{scope}
		}%
	},%
	#1%
}

\usepackage{xspace}

\newcommand{\R}{{\mathbb{R}}}
\newcommand{\Rpz}{{\mathbb{R}_{\geq 0}}}
\newcommand{\Rp}{{\mathbb{R}_{> 0}}}
\newcommand{\N}{{\mathbb{N}}}
\newcommand{\Np}{{\mathbb{N}_{\geq 1}}}
\newcommand{\I}{{\mathbb{I}}}
\newcommand{\One}{{\mathbb{1}}}
\newcommand{\Subsys}{{\textsc{ctia}-UNCS}}
\newcommand{\V}{{\mathbfcal{V}}}

\newcommand{\MPI}{{\mathbfcal{P}_i}}

\newcommand{\Ii}{{\mathbfcal{I}_i}}
\newcommand{\Si}{{\mathbfcal{S}_i}}
\newcommand{\Wi}{{\mathbfcal{W}_i}}
\newcommand{\Sip}{{\mathbfcal{S}^+_i}}
\newcommand{\PD}{{\pmb{\Delta}_i}}

\newcommand{\barIi}{{\bar{\mathbfcal{I}}_i}}
\newcommand{\barSi}{{\bar{\mathbfcal{S}}_i}}
\newcommand{\barWi}{{\bar{\mathbfcal{W}}_i}}
\newcommand{\barSip}{{\bar{\mathbfcal{S}}^+_i}}
\newcommand{\barPD}{{\bar{\pmb{\Delta}}_i}}

\def\BibTeX{{\rm B\kern-.05em{\sc i\kern-.025em b}\kern-.08em
		T\kern-.1667em\lower.7ex\hbox{E}\kern-.125emX}}
\markboth{\hskip25pc}
{From Data to  Sliding Mode Control of Uncertain Large-Scale Networks with Unknown Dynamics}

\definecolor{blue(ryb)}{rgb}{0.01, 0.28, 1.0}
\definecolor{fashionfuchsia}{rgb}{0.96, 0.0, 0.63}
\makeatletter
\let\NAT@parse\undefined
\makeatother
\usepackage[colorlinks=true, citecolor=fashionfuchsia, linkcolor=blue(ryb), final]{hyperref}

\counterwithout*{equation}{section}

% Define the new style for definitions
\makeatletter
\def\@opargbegintheorem#1#2#3{\textit{#1\ #2} \textit{(#3):}}
\makeatother

\begin{document}
	
\title{From Data to  Sliding Mode Control of Uncertain Large-Scale Networks with Unknown Dynamics}
 \author{Behrad Samari, \IEEEmembership{Student Member,~IEEE}, Gian Paolo Incremona, \IEEEmembership{Senior Member,~IEEE}
 	\\ Antonella Ferrara, \IEEEmembership{Fellow,~IEEE}, and Abolfazl Lavaei, \IEEEmembership{Senior Member,~IEEE}
 	\thanks{B. Samari and A. Lavaei are with the School of Computing, Newcastle University, NE4 5TG Newcastle Upon Tyne, United Kingdom (e-mails: {\tt\small{\{b.samari2,abolfazl.lavaei\}@newcastle.ac.uk}}).}
 	 \thanks{G. P. Incremona is with the Dipartimento di Elettronica, Informazione e Bioingegneria, Politecnico di Milano, Piazza Leonardo da Vinci 32, 20133, Milano, Italy (e-mail: {\tt\small{gianpaolo.incremona@polimi.it}}). Supported by the Italian Ministry for Research in the framework of the PRIN 2022 PRIDE, grant no. 2022LP77J4.}
 	\thanks{A. Ferrara is with the Dipartimento di Ingegneria Industriale e dell'Informazione,
         University of Pavia, Via Ferrata 5, 27100 Pavia, Italy (e-mail: {\tt\small{antonella.ferrara@unipv.it}}).}
}

\maketitle
\begin{abstract}
	Large-scale interconnected networks, composed of multiple \emph{low-dimensional} subsystems, serve as a crucial framework for modeling a wide range of real-world applications. Despite offering computational scalability, the \emph{inherent interdependence} among subsystems poses significant challenges to the effective control of such networks. This complexity is further exacerbated in the presence of \emph{external perturbations} and when the dynamics of individual subsystems, and accordingly the overall network, are \emph{unknown}—scenarios frequently encountered in modern practical applications. In this paper, we develop a \emph{compositional data-driven approach} to ensure the global asymptotic stability (GAS) of large-scale \emph{nonlinear} networks with unknown mathematical models, subjected to external perturbations. To achieve this, we first gather two sets of data from each unknown nominal subsystem without perturbation, which we refer to as \emph{two input-state trajectories}. The collected data from each subsystem is then utilized to design an input-to-state stable (ISS) Lyapunov function and its corresponding controller for each nominal subsystem, rendering them ISS. To accomplish this, we propose sufficient conditions as data-dependent semidefinite programs, which result in designing ISS Lyapunov functions and their corresponding local controllers, simultaneously. To cancel the effect of external perturbations on the dynamic of each subsystem, and accordingly the whole network, we then design a local \emph{integral sliding mode (ISM)} controller for each subsystem using the collected data. Under a \emph{small-gain} compositional condition, we employ data-driven ISS Lyapunov functions designed for subsystems and construct a \emph{control Lyapunov function} for the network, rendering the assurance of GAS property over the nominal network. We then extend this compositional result to network perturbed models, demonstrating that the synthesized ISM controllers ensure the GAS property even in the presence of perturbations. We showcase the efficacy of our data-driven findings on large-scale interconnected networks with five distinct interconnection topologies.
\end{abstract}

\begin{IEEEkeywords}
	Large-scale networks, data-driven control, integral sliding mode control, compositional techniques.
\end{IEEEkeywords}

\section{Introduction}\label{Sec: Introduction}
\IEEEPARstart{I}{n} recent years, there has been significant interest in developing control frameworks for \emph{large-scale interconnected networks} due to their wide applicability in modeling real-world systems, ranging from power networks to biological ones. However, formally designing control schemes to ensure the stability of such networks remain inherently challenging, primarily due to the mutual dependencies among subsystems. Additionally, modern applications introduce two further challenges that exacerbate these difficulties: \emph{(i)} the lack of an exact mathematical model for each subsystem, and consequently for the entire network, and \emph{(ii)}  the presence of \emph{external unknown-but-bounded perturbations} (\emph{e.g.,} fluctuating power demand in power networks).

To address the challenge of mutual dependencies among subsystems in an interconnected network, the literature advocates for input-to-state (ISS) stability, which examines the network's robust behavior under internal disturbances treated as adversarial inputs \cite{isidori1985nonlinear, neisic2002integral, agrachev2008input}. 
More precisely, ISS Lyapunov functions are well-suited for quantifying the influence of individual subsystems on others by constructing interaction gain functions. These functions play a crucial role in characterizing the network's interconnection structure and capturing the impact of neighboring subsystems. By applying \emph{small-gain reasoning}, the overall stability of the network is guaranteed afterward if the constructed gains adhere to specific compositional conditions \cite{jiang1994small, jiang1996lyapunov, dashkovskiy2007iss, dashkovskiy2010small, kawan2020lyapunov}. Despite the effectiveness of ISS Lyapunov functions and small-gain reasoning in addressing mutual dependencies among subsystems, these approaches commonly assume access to an \emph{exact mathematical model} of the network. However, in modern practical systems, models are often unknown or too complex to be used directly, highlighting the necessity of data-driven approaches.

To address challenges arising from perturbations, sliding mode control (SMC) has been widely adopted over the past two decades due to its ability to provide simple, low-computation solutions for various control problems in the presence of uncertainty \cite{utkin1992smc,incremona2019smc}.
Its design relies on the appropriate selection of the so-called sliding manifold—a subspace of the system state to which the system is driven in finite time and along which it evolves under a discontinuous control law. Such a control law ensures that the system state remains on the manifold, thus giving rise to a sliding mode \cite[Ch. 7]{slotine1991nonlin}. The original SMC formulation, however, has certain limitations: it can induce a chattering effect, does not inherently account for state constraints, and requires a transient phase—known as the \emph{reaching phase}—to reach the manifold, during which robustness to uncertainties is not guaranteed. Among other methods, higher-order SMCs and SMCs with optimal reaching 
(see \emph{e.g.,} \cite{levant2014hosmc,dinuzzo2009hosmc}) provide a solution to chattering alleviation and to constrained control problems, whereas the so-called \emph{integral sliding mode (ISM) control} \cite{utkin1996integral} enforces robustness from the outset by eliminating the reaching phase.

A peculiar feature of ISM control is that its design relies on the nominal model of the system, making it a \emph{model-based} approach.
The ISM control law comprises two components: the so-called \emph{ideal controller}, which stabilizes the nominal dynamics of the plant, and a discontinuous component designed to reject uncertainties. Moreover, the sliding variable also consists of two terms: one designed as in classical SMC \cite[Ch. 1]{incremona2019smc}, and the other, known as the \emph{transient function}, which depends on the nominal dynamics controlled by the ideal controller. The transient function is intended to ensure that the system states remain on the sliding manifold from the initial time instant. This design philosophy has proven effective when combined with other control strategies to meet additional requirements, such as constraint fulfillment or cost function optimization (see, \emph{e.g.}, \cite{incremona2017mpcism,incremona2017robot}, where model predictive control and ISM control were integrated). However, classical ISM control relies on exact knowledge of the system's mathematical model.

To address the challenge of unavailable exact mathematical models and instead leverage system data, two distinct yet valuable approaches have been proposed: \textit{(i)} indirect data-driven methods and \textit{(ii)} direct data-driven methods. In the first \emph{indirect} approach, system identification schemes are used to derive a valid model, which is then leveraged in model-based control techniques. However, accurately identifying the system, particularly when dealing with complex nonlinear dynamics, can be computationally demanding—if not infeasible \cite{kerschen2006past, hou2013model}. In contrast, direct data-driven approaches aim to bypass the identification phase and use data to directly learn the desired controller. This approach eliminates the need for a two-step process, as required in indirect data-driven methods, allowing for a more streamlined solution \cite{dorfler2022bridging}.

\subsection{Related work on data-based robust control}

In the pursuit of stabilizing unknown dynamical systems, multiple data-driven approaches have recently been proposed. To mention a few, building on Willems \emph{et al.’s} fundamental lemma \cite{willems2005note}, \cite{guo2021data} introduces a data-driven framework for stabilizing input-affine nonlinear systems with polynomial dynamics, while \cite{strasser2021data} extends this approach to more general systems beyond polynomials. In \cite{de2023learning}, a data-driven scheme is proposed to (approximately) cancel system nonlinearities and achieve stabilization. The work \cite{berberich2020data} introduces a robust data-driven model-predictive control strategy for linear time-invariant systems, while \cite{taylor2021towards} develops a data-driven approach for robust control synthesis of nonlinear systems, explicitly addressing model uncertainty. A data-driven approach for the certified learning of \emph{incremental} ISS controllers for unknown nonlinear polynomial dynamics has recently been studied in~\cite{zaker2024certified}.
 Data-driven stability analyses of unknown systems are further explored in various settings, including switched systems \cite{kenanian2019data}, continuous-time systems \cite{boffi2021learning}, and discrete-time systems \cite{lavaei2022data}. For noisy measured data, \cite{chen2024data} employs overapproximation techniques to characterize the set of polynomial dynamics consistent with the data. This enables the construction of an ISS Lyapunov function and the design of a corresponding ISS control law for unknown nonlinear input-affine systems with polynomial dynamics.

While studies such as \cite{guo2021data, strasser2021data, de2023learning, berberich2020data, taylor2021towards,zaker2024certified,kenanian2019data, boffi2021learning, lavaei2022data, chen2024data} primarily focus on stability analysis and controller synthesis using data, their methodologies are tailored for \emph{monolithic systems} of relatively low dimensions. As a result, they are not directly applicable to large-scale networks due to the challenges posed by \emph{sample complexity}. Recent efforts have sought to extend stability analysis to higher-dimensional systems. In this regard, \cite{sperl2023approximation} explores the use of deep neural networks (NN) to approximate control Lyapunov functions, aiming to mitigate the curse of dimensionality. Leveraging the compositional structure of interconnected nonlinear systems, \cite{liu2024compositionally} introduces a framework for training and validating neural Lyapunov functions for stability analysis. Similarly, \cite{zhang2023compositional} employs a compositional approach to derive neural certificates based on ISS Lyapunov functions and controllers for networked dynamical systems. Despite these advancements, \cite{sperl2023approximation, liu2024compositionally, zhang2023compositional} assume that the system model is available. Additionally, \cite{zhang2023compositional} requires a second step involving Satisfiability Modulo Theories (SMT) solvers, such as Z3 \cite{de2008z3}, dReal \cite{gao_-complete_2012} or MathSat \cite{cimatti_mathsat5_2013}, to verify the correctness of the candidate ISS Lyapunov functions. However, SMT solvers may encounter termination issues~\cite{wongpiromsarn2015automata}, or face scalability challenges depending on the NN size or the system complexity~\cite{abate2022neural}. Then, even if the problem is solved for \emph{high-dimensional} systems using NN, verifying the results in the second step would be intractable using SMT solvers, thus lacking formal guarantees. In contrast, our approach assumes the system model is unknown and directly determines ISS Lyapunov functions with correctness guarantees in a single step.

Finally, \cite{lavaei2023data} proposes a \emph{scenario-based} method \cite{calafiore2006scenario, campi2009scenario} for verifying the stability of interconnected networks. However, this approach requires the collected data to be \textit{independent and identically distributed (i.i.d.)}, meaning only one input-state sample pair can be obtained per trajectory \cite{esfahani2014performance}. As a result, multiple independent trajectories are needed to certify network stability. In contrast, our approach requires only two input-state trajectories per subsystem. Moreover, the method proposed in \cite{lavaei2023data} is limited to nonlinear \emph{homogeneous} systems, whereas our approach accommodates a significantly broader class of nonlinear systems, making it more practical. Additionally, \cite{lavaei2023data} focuses solely on \emph{verification} and cannot be applied when each subsystem is subject to external perturbations. In contrast, our approach effectively handles \emph{controller synthesis} in the presence of disturbances through local ISM controllers.

\subsection{Related works on data-based sliding mode control}

Recently, the literature has introduced neural network (NN) add-on to ISM control with successful results. In \cite{sacchi2024dnn} and  related works (see \emph{e.g.,} \cite{ferrara2024nnsmc} for an overview), NNs-based ISM control laws were proposed. Specifically, deep-NNs allow to identify the unknown model dynamics in order to design the transient function, thus reducing the reaching phase, similarly to classical ISM control. Yet, such strategies require some assumption on the knowledge of the bounds of the disturbance, and of the ideal weights characterizing the adopted NNs, which are difficult to retrieve in practice. Furthermore, the estimation of the weights can generate further uncertainty which could cause the temporary loss of the sliding mode. A recent variation proposed in \cite{riva2024ddsmc} addressed these issues by introducing a virtual reference feedback tuning (VRFT) \cite{campi2022vrft} combined with ISM control, for \emph{single-input-single-output} linear perturbed systems. Specifically, the ideal controller is achieved in a data-driven fashion as the solution to an optimization procedure minimizing the $2$-norm of the difference between the closed-loop transfer function and a reference model playing the role of the nominal dynamics in the design of the sliding variable. It should be noted that none of the NNs-based SMC \cite{sacchi2024dnn, ferrara2024nnsmc, riva2024ddsmc, campi2022vrft} investigate \emph{large-scale interconnected networks} consisting of multiple coupled subnetworks with unknown dynamics.

\subsection{Main contribution}
Motivated by the research gap mentioned above, this paper aims to propose a \emph{compositional data-driven} methodology to ensure the GAS property of unknown large-scale nonlinear networks affected by \emph{external disturbances}. To achieve this, two sets of input-state trajectories are collected from each \emph{nominal} subsystem in the absence of external perturbations. These data sets are then employed to simultaneously design an ISS Lyapunov function and a corresponding controller for each nominal subsystem, ensuring ISS properties. The design process involves deriving sufficient conditions formulated as data-dependent semidefinite programs (SDPs), which guide the construction of ISS Lyapunov functions and their associated local ISS controllers. We acknowledge that since the proposed sufficient conditions are data-dependent SDPs, we remarkably alleviate the scalability challenges compared to potential studies that leverage data-dependent sum-of-squares (SOS) programs. 

To reject the influence of external perturbations on the dynamic evolution of both individual subsystems and the entire interconnected network, a local  \emph{integral sliding mode (ISM)} controller is developed for each subsystem based on the collected data. By leveraging a \emph{small-gain} compositional condition, the data-driven ISS Lyapunov functions designed for each nominal subsystem are combined to construct a \emph{control Lyapunov function (CLF)} for the nominal interconnected network, guaranteeing its GAS property. Our approach is further extended to networks subject to external perturbations, which demonstrates that the ISM controllers ensure the GAS property of the network even under external perturbations. Our proposed methodology is validated through large-scale interconnected networks with five distinct interconnection topologies, highlighting the practical applicability and effectiveness of our data-driven framework.

\subsection{Outline of the paper}
The remainder of the paper is structured as follows. Section \ref{Sec: Prob_Desc} introduces the notations and mathematical preliminaries, together with the mathematical models for individual subsystems and the interconnected network. In the same section, the definitions of a CLF and an ISS Lyapunov function for the nominal network and its subsystems, respectively, are reported, and the ISM control design for non-nominal subsystems is recalled. In Section \ref{Section: Data Framework}, we obtain the closed-loop data-based representation for each nominal subsystem, while proposing its sufficient conditions to synthesize the ISS Lyapunov function together with its corresponding ISS controller. We also propose our framework to design ISM controllers for non-nominal subsystems based on the collected data. Section \ref{Section: Compositional} is dedicated to presenting our compositional approach, where we first establish that the nominal network possesses the GAS property and subsequently extend this result to the non-nominal network. Finally, simulation results are presented in Section \ref{Section: Simulation}, followed by conclusions in Section \ref{Section: Conclusion}.

\section{Problem Formulation}\label{Sec: Prob_Desc}

\subsection{Notation}
In this paper, we adhere to the following notational conventions. The symbol $\R$ represents the set of real numbers, while $\Rpz$ and $\Rp$ denote the sets of non-negative and positive real numbers, respectively. The set of non-negative integers is given by $\N = \{0, 1, 2, \ldots\}$, while $\Np = \{1, 2, \ldots\}$ represents the set of positive integers. The $n \times n$ identity matrix is represented by $\I_n$, and the vector of ones with dimension $n$ is denoted by $\One_n$. The notation $\bbzero_{n \times m}$ refers to an $n \times m$ matrix consisting entirely of zeros, whereas $\bbzero_{n}$ denotes the zero vector of dimension $n$. For $N$ vectors $x_i \in \R^{n_i}$, the notation $x = [x_1;\ldots;x_N]$ represents the corresponding \emph{column vector} with a dimension of $\sum_i n_i$. Additionally, the expressions $[x_1\;\ldots\; x_N]$ and $[A_1\;\ldots\; A_N]$ denote the horizontal concatenation of vectors $x_i \in \R^n$ and matrices $A_i \in \R^{n \times m}$, resulting in matrices of dimensions $n \times N$ and $n \times mN$, respectively. The Euclidean norm of a vector $x \in \R^n$ is denoted by $\vert x \vert$, whereas the induced 2-norm of a matrix $A$ is represented by $\Vert A \Vert$. A \emph{symmetric} matrix $P$ is denoted as positive definite by $P \succ 0$, and as positive semi-definite by $P \succeq 0$. For a square matrix $A$, its minimum and maximum eigenvalues are denoted by $\lambda_{\min}(A)$ and $\lambda_{\max}(A)$, respectively. The transpose of a matrix $B$ is written as $B^\top \!$, and its \emph{right} pseudoinverse is denoted by $B^\dagger$, defined as $B^\dagger \coloneq B^\top (BB^\top)^{-1}$. A block diagonal matrix in $\R^{N \times N}$ with diagonal blocks $(A_1, \ldots, A_N)$ is denoted by $\mathsf{blkdiag}(A_1, \ldots, A_N)$, while a diagonal matrix with \emph{scalar} entries $(a_1, \ldots, a_N)$ is represented by $\mathsf{diag}(a_1, \ldots, a_N)$. The notation $\{a_{ij}\}$ denotes the construction of a matrix whose elements $a_{ij}$ are located in the $i$-th row and $j$-th column. A function $\beta: \Rpz \rightarrow \Rpz$ is classified as a $\mathcal{K}$ function if it is continuous, strictly increasing, and satisfies $\beta(0) = 0$. Similarly, a function $\beta: \Rpz \times \Rpz \rightarrow \Rpz$ is categorized as belonging to class $\mathcal{KL}$ if, for a fixed $s$, the mapping $\beta(r,s)$ is a $\mathcal{K}$ function with respect to $r$, and for a fixed $r \in \Rp$, the mapping $\beta(r,s)$ is decreasing with respect to $s$ and approaches $0$ as $s \rightarrow \infty$. For a given matrix $A$, $\bold{b} \in \mathsf{span}\{A\}$ implies that the vector $\bold{b}$ can be expressed as a linear combination of the columns of the matrix $A$, meaning that $\bold{b}$ lies in the column space of $A$.

\subsection{Individual Subsystems}
We define individual subsystems as continuous-time input-affine uncertain nonlinear control systems, as formalized in the following definition, which are subsequently used to construct the interconnected large-scale network.

\begin{definition}[\textbf{\Subsys}]\label{def: ctia-UNCS}
	A continuous-time input-affine uncertain nonlinear control system (\Subsys) is defined as
	\begin{align}
		\Upsilon_i \!: \dot{x}_i = f_i(x_i) + \mathcal{B}_i u_i + \mathcal{D}_i w_i + \mathcal{E}_i (x_i, t), \label{eq: original_subsys}
	\end{align}
	where $f_i: \R^{n_i} \rightarrow \R^{n_i}$ is a continuous vector field of the state variables $x_i \in \R^{n_i},$ with $f(\bbzero_{n_i}) = \bbzero_{n_i},$ $\mathcal{B}_i \in \R^{n_i \times m_i}$ denotes the control matrix, $\mathcal{D}_i \in \R^{n_i \times \psi_i}$ is the \emph{internal input} matrix with $\psi_i = \sum_{j = 1,j\neq i}^{N} n_j$, where $N$ is the number of subsystems, and $\mathcal{E}_i : \R^{n_i} \times \Rpz \rightarrow \R^{n_i}$ denotes the \emph{external perturbation} at time $t \in \Rpz$, satisfying the  \emph{matching} condition, \emph{i.e.,} $\mathcal{E}_i(x_i, t) \in \mathsf{span}\{\mathcal{B}_i\}$. Additionally, $u_i \in \R^{m_i}$ is the \emph{control} input, while $w_i \in \R^{\psi_i}$ represents the \emph{internal} input.
\end{definition}

Without loss of generality, the dynamics of the \Subsys\ in \eqref{eq: original_subsys} can be reformulated as
\begin{align}
	\Upsilon_i \!: \dot{x}_i = \mathcal{A}_i \mathcal{Z}_i(x_i) + \mathcal{B}_i u_i + \mathcal{D}_i w_i + \mathcal{E}_i (x_i, t), \label{eq: final_subsys}
\end{align}
where $\mathcal{A}_i \in \R^{n_i \times z_i}$ denotes the system matrix, and $\mathcal{Z}_i : \R^{n_i} \rightarrow \R^{z_i}$ is a continuous map satisfying $\mathcal{Z}_i(\bbzero_{n_i}) = \bbzero_{z_i}$.
This map consists of both \emph{linear and nonlinear} components, expressed as
\begin{align}
	\mathcal{Z}_i(x_i) = \begin{bmatrix}
		x_i\vspace{-.225cm}\\
		\tikz\draw [thin,dashed] (0,0) -- (1.25,0);\\
		\mathcal{M}_i(x_i)
	\end{bmatrix}\!\!, \label{eq: dictionary}
\end{align}
where $\mathcal{M}_i : \R^{n_i} \rightarrow \R^{z_i - n_i}$ contains only \emph{nonlinear} basis functions. In this work, we assume that $\mathcal{M}_i(x_i)$ represents \emph{matched} nonlinearities, implying that these nonlinear terms are aligned with the same channel as the control input $u_i$, thereby allowing direct influence by the control input. We use the tuple $\Upsilon_i = (\mathcal{A}_i, \mathcal{Z}_i, \mathcal{B}_i, \mathcal{D}_i, \R^{n_i}, \R^{m_i}, \R^{\psi_i}, \R^{n_i})$ to represent the reformulated \Subsys\ in \eqref{eq: final_subsys}.

In this paper, the matrices $\mathcal{A}_i$ and $\mathcal{B}_i$, along with the precise structure of $\mathcal{Z}_i(x_i)$, are assumed to be \emph{unknown}. However, it is presumed that a \emph{dictionary} for $\mathcal{Z}_i(x_i)$ is available, which is sufficiently comprehensive to include the true system dynamics. Specifically, this dictionary is extensive enough to capture all possible terms in the actual dynamics, \emph{i.e.,} all linear functions of the state variables as well as nonlinear terms derived from system-specific insights, albeit at the cost of incorporating \emph{extraneous terms}. Furthermore, the matrix $\mathcal{D}_i$ is assumed to be known, as it encapsulates the \emph{interconnection weights} between subsystems, which are often known a priori in interconnected network settings. Thus, the \Subsys\ $\Upsilon_i$ is referred to as a system with a \emph{(partially)} unknown model, as its matrices $\mathcal{A}_i$ and $\mathcal{B}_i$ are entirely unknown, while $\mathcal{Z}_i(x_i)$ is partially unknown, which reflects a scenario commonly encountered in real-world interconnected networks.

\begin{remark}[\textbf{Dictionary $\mathcal{Z}_i(x_i)$}]\label{Remark: dictionary}
	The assumption of an extensive dictionary $\mathcal{Z}_i(x_i)$ being available is not  restrictive in most cases. This is because, in numerous practical applications—such as electrical and mechanical engineering systems—the dynamics of the system can often be deduced from first principles, which naturally align with the basis functions included in \eqref{eq: dictionary}. However, while such insight is able to provide the necessary structure, the exact parameters of the system frequently remain unknown. This is consistent with our assumption that the matrices $\mathcal{A}_i$ and $\mathcal{B}_i$ are entirely unknown.
\end{remark}

We proceed by stating the assumption that the matched perturbation $\mathcal{E}_i(x_i, t)$ is bounded, which is a common scenario in the sliding mode control theory.

\begin{assumption}[\textbf{Bound on $\mathcal{E}_i(x_i, t)$ \cite[Ch. 1]{incremona2019smc}}]\label{Assumption: bound_disturbance}
	There exists a known constant $\bar{\mathbfcal{E}_i} \in \Rp$ satisfying $\vert \mathcal{E}_i(x_i, t) \vert \leq \bar{\mathbfcal{E}_i}, \, \forall t \in \Rpz.$
\end{assumption}

To conclude this subsection, we finally note that hereafter the \emph{nominal} model of the \Subsys\ $\Upsilon_i$, \emph{i.e.,} the \Subsys\ $\Upsilon_i$ in \eqref{eq: final_subsys} when $\mathcal{E}_i(x_i, t) \equiv \bbzero_{n_i}$, is denoted by $\Upsilon_i^\ast$.

Since the primary objective of this work is to perform stability analysis of interconnected uncertain networks composed of individual subsystems described by \eqref{eq: final_subsys}, we proceed to introduce the network topology in the subsequent subsection and explain how these subsystems form interconnected networks.

\subsection{Interconnected Network}
Here, we formally define the interconnection among the subsystems $\Upsilon_i$. Let $\Upsilon$ represent an interconnected network comprising $N \in \Np$ subsystems $\Upsilon_i$, with internal inputs and their associated matrices partitioned as
\begin{subequations}\label{eq: partitioned}
	\begin{align}
		w_i  &= [w_{i 1} ; \ldots ; w_{i(i-1)} ; w_{i(i+1)} ; \ldots ; w_{i N}], \label{eq: partitioned_w}\\
		\mathcal{D}_i  & = [\mathcal{D}_{i 1} \;\; \ldots \;\; \mathcal{D}_{i(i-1)} \;\; \mathcal{D}_{i(i+1)} \;\; \ldots \;\; \mathcal{D}_{i N}], \label{eq: partitioned_D}
	\end{align}
\end{subequations}
where $\mathcal{D}_{ij}\in\R^{n_i\times n_j}$. It is assumed that the dimension of $w_{ij}$ matches that of $x_j$, which is a reasonable assumption within the framework of \emph{small-gain} analysis. If no connection exists from subsystem $\Upsilon_j$ to $\Upsilon_i$, the corresponding internal input is identically zero, \emph{i.e.,} $w_{ij} \equiv \bbzero_{n_j}$; otherwise, it is given by $w_{ij} = x_j$.

Within the interconnected network, internal inputs represent the influence of \emph{neighboring subsystems} based on the interconnection topology (cf. interconnection constraint \eqref{eq: internal-connection}). In contrast, control inputs and perturbations are external influences that do not participate in defining the interconnection structure. Having had these in mind, we now move forward with the formal definition of an interconnected network.

\begin{definition}[\textbf{Interconnected Network}]\label{def: network}
	Consider $N \in \Np$ subsystems $\Upsilon_i = (\mathcal{A}_i, \mathcal{Z}_i, \mathcal{B}_i, \mathcal{D}_i, \R^{n_i}, \R^{m_i}, \R^{\psi_i}, \R^{n_i}), i \in \{1, \ldots, N\}$, with internal inputs and their associated matrices partitioned as described in \eqref{eq: partitioned}. Suppose that the subsystems $\Upsilon_i$, $i \in \{1, \ldots, N\}$, are interconnected under the following interconnection constraint:
	\begin{align}
		w_{ij} = x_j, \quad \forall i, j \in \{1, \ldots, N\}, \; i \neq j. \label{eq: internal-connection}
	\end{align}
	Then an interconnected network $\Upsilon = \mathscr{N}(\Upsilon_1, \ldots, \Upsilon_N)$ is formed with the dynamics according to
	\begin{align}
		\Upsilon : \dot{x} = \mathcal{A}\mathcal{Z}(x) + \mathcal{B}u + \mathcal{E}(x, t), \label{eq: network}
	\end{align}
	where $\mathcal{A} \in \R^{n \times \mathbb{Z}},$ in which $n \coloneq \sum_{i=1}^{N} n_i$ and $\mathbb{Z} \coloneq \sum_{i=1}^{N} z_i,$ is a block matrix with diagonal blocks $(\mathcal{A}_1, \ldots, \mathcal{A}_N)$ and off-diagonal entries $\hat{\mathcal{D}}_i = [\hat{\mathcal{D}}_{i 1} \; \ldots \; \hat{\mathcal{D}}_{i(i-1)} \; \hat{\mathcal{D}}_{i(i+1)} \; \ldots \; \hat{\mathcal{D}}_{i N}],$ which depend on the interconnection topology, where $\hat{\mathcal{D}}_{ij}  = [\mathcal{D}_{ij} \;\; \bbzero_{n_i \times (z_j - n_j)}]$. Moreover, $\mathcal{E}(x, t) = [\mathcal{E}_1(x_1, t); \ldots; \mathcal{E}_N(x_N, t)] : \R^n \times \Rpz \rightarrow \R^n$, and $\mathcal{B} = \mathsf{blkdiag}(\mathcal{B}_1, \ldots, \mathcal{B}_N) \in \R^{n \times m}$, with $u \in \R^{m}$ and $m \coloneq \sum_{i=1}^N m_i$, while $\mathcal{Z}(x) = [\mathcal{Z}_1(x_1); \dots; \mathcal{Z}_N(x_N)] \in \R^{\mathbb{Z}}$. We use the tuple $\Upsilon = (\mathcal{A}, \mathcal{Z}, \mathcal{B}, \R^n, \R^m, \R^n)$ to represent the interconnected network in \eqref{eq: network}.
\end{definition}

Following the approach used to define the nominal model of each subsystem, thenceforth, the \emph{nominal } model of the interconnected network, \emph{i.e.,} the network in \eqref{eq: network} when $\mathcal{E}(x, t) \equiv \bbzero_{n}$, is indicated by $\Upsilon^\ast$.

Having defined the interconnected network, as in Definition \ref{def: network}, we now move on to formally presenting the global asymptotic stability (GAS) for the interconnected network.

\subsection{GAS of Interconnected Networks}
We start by providing a formal definition of the GAS property for an interconnected network, as introduced in \cite{angeli2000characterization}.

\begin{definition}[\textbf{GAS Property}]\label{def: GAS}
	The origin is said to be a globally asymptotically stable (GAS) equilibrium point for an interconnected network $\Upsilon = \mathscr{N}(\Upsilon_1, \ldots, \Upsilon_N)$ if there exists a function $\beta$, belonging to class $\mathcal{KL}$, such that for any initial condition $x(0) \in \R^n$, the inequality
	\begin{align*}
		\vert x(t) \vert \leq \beta(\vert x(0) \vert, t)
	\end{align*}
	holds. This implies that all trajectories of $\Upsilon$ approach the origin, which serves as the equilibrium point, as time $t \rightarrow \infty$.
\end{definition}

Before proceeding, we emphasize that the upcoming theorem and definition are specifically formulated for the \emph{nominal} interconnected network $\Upsilon^\ast$ and \emph{nominal} subsystems $\Upsilon_i^\ast$, respectively. This focus aligns with our objective of designing an integral sliding mode (ISM) controller to mitigate the effects of the perturbations $\mathcal{E}_i(x_i, t)$, and accordingly, $\mathcal{E}(x, t)$.  More precisely, for each subsystem $\Upsilon_i$, the control action is expressed as $u_i = u_{i}^\ast + u_{i}^{\mathrm{ISM}}$, where $u_{i}^\ast \in \R^{m_i}$ represents a confined control law that renders the state $x_i$ a GAS equilibrium point for the \emph{nominal} subsystem $\Upsilon_i^\ast$. At the same time, $u_{i}^{\mathrm{ISM}} \in \R^{m_i}$ serves as a \emph{discontinuous} control signal intended to make the system robust against perturbations $\mathcal{E}_i(x_i, t)$ (cf. Subsection \ref{subsec: ISM}).

We now state the following theorem, borrowed from \cite{sontag1996new}, which outlines the sufficient conditions to achieve GAS property  across the nominal network $\Upsilon^\ast$\!, as per Definition \ref{def: GAS}.

\begin{theorem}[\textbf{CLFs for Interconnected Networks}]\label{Theorem: model-based Network}
	Consider a nominal interconnected network $\Upsilon^\ast = \mathscr{N}(\Upsilon_1^\ast, \ldots, \Upsilon_N^\ast)$, comprising $N$ nominal subsystems $\Upsilon_i^\ast$. Assume the existence of a \emph{control Lyapunov function (CLF)} $\V : \R^n \to \Rpz$ and positive constants $\alpha_1, \alpha_2, \kappa \in \Rp$, satisfying
	\begin{itemize}
		\item For all $x \in \R^n$:
		\begin{subequations}
			\begin{equation}\label{eq: clf-con1-network}
				\alpha_1 \vert x \vert^2 \leq \V(x) \leq \alpha_2 \vert x \vert^2,
			\end{equation}
			\item For all $x \in \R^n$, there exists $u^\ast \in \R^m$ such that:
			\begin{align}\label{eq: clf-con2-network}
				&\mathsf{L} \V(x) \leq - \kappa \V(x),
			\end{align}
		\end{subequations}
	\end{itemize}
	where $\mathsf{L} \V$ denotes the \emph{Lie derivative} of $\V : \R^n \to \Rpz$ with respect to the \emph{nominal} dynamics of \eqref{eq: network}, given by
	\begin{align}
		\mathsf{L} \V(x) = \partial_{x} \V(x)(\mathcal{A} \mathcal{Z}(x) + \mathcal{B} u^\ast),
	\end{align}
	with $\partial_{x} \V(x) = \frac{\partial \V(x)}{\partial x}$. Under these conditions, the \emph{nominal} interconnected network $\Upsilon^\ast = \mathscr{N}(\Upsilon_1^\ast, \ldots, \Upsilon_N^\ast)$ is endowed with the GAS property in the sense of Definition \ref{def: GAS}.
\end{theorem}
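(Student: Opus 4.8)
The plan is to run the classical Lyapunov argument: convert the pointwise CLF decrease condition \eqref{eq: clf-con2-network} into an exponential decay estimate for $\V$ along closed-loop trajectories, and then transfer that estimate to the state norm using the quadratic sandwich bound \eqref{eq: clf-con1-network}.

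First I would fix an arbitrary initial condition $x(0) \in \R^n$ and invoke the CLF hypothesis. Since for every $x \in \R^n$ there exists $u^\ast \in \R^m$ with $\mathsf{L}\V(x) \leq -\kappa \V(x)$, one obtains — via a locally Lipschitz (or at least measurable) selection $u^\ast = k(x)$, which is the standard device for passing from an Artstein-type CLF to an actual stabilizing feedback, e.g.\ through Sontag's universal formula — a closed-loop nominal vector field along whose solutions the map $t \mapsto \V(x(t))$ is absolutely continuous and satisfies, for almost every $t$,
\[
\tfrac{d}{dt}\V(x(t)) \;=\; \partial_x \V(x(t))\bigl(\mathcal{A}\mathcal{Z}(x(t)) + \mathcal{B}u^\ast\bigr) \;=\; \mathsf{L}\V(x(t)) \;\leq\; -\kappa\,\V(x(t)).
\]
Applying the comparison (Grönwall) lemma to this differential inequality yields $\V(x(t)) \leq \V(x(0))\,e^{-\kappa t}$ for all $t \geq 0$; in particular $\V$ is nonincreasing, so trajectories stay bounded and are defined for all $t \geq 0$. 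Combining this with \eqref{eq: clf-con1-network} gives
\[
\alpha_1 \vert x(t)\vert^2 \;\leq\; \V(x(t)) \;\leq\; \V(x(0))\,e^{-\kappa t} \;\leq\; \alpha_2 \vert x(0)\vert^2\, e^{-\kappa t},
\]
hence $\vert x(t)\vert \leq \sqrt{\alpha_2/\alpha_1}\;\vert x(0)\vert\; e^{-\kappa t/2}$. Setting $\beta(r,s) \coloneq \sqrt{\alpha_2/\alpha_1}\; r\; e^{-\kappa s/2}$, it remains to verify $\beta \in \mathcal{KL}$: for fixed $s$ it is continuous, strictly increasing in $r$ and vanishes at $r=0$, so it is a $\mathcal{K}$ function; and for fixed $r \in \Rp$ it is strictly decreasing in $s$ with $\beta(r,s) \to 0$ as $s \to \infty$. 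This is exactly the bound in Definition \ref{def: GAS}, so $\Upsilon^\ast$ is GAS.

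The only delicate step is the first one — extracting from the pointwise existence of $u^\ast$ a feedback regular enough that the closed-loop nominal network admits well-defined (Carathéodory/Filippov) solutions along which $\V$ decays at the stated rate; everything afterwards is the routine comparison-lemma computation together with the $\mathcal{KL}$ bookkeeping. Since the theorem is borrowed from \cite{sontag1996new}, this regularity issue is handled precisely as in that reference, and no structure specific to the interconnected setting is needed here — the compositional construction of $\V$ itself is carried out later in Section \ref{Section: Compositional}.
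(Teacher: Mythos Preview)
The paper does not supply its own proof of this theorem: it is stated as ``borrowed from \cite{sontag1996new}'' and immediately followed by the discussion leading to Definition~\ref{def: ISS Lyapunov}, with no proof environment in between. Your argument is the standard Lyapunov/comparison computation (decay inequality $\Rightarrow$ Gr\"onwall $\Rightarrow$ exponential bound on $\V$ $\Rightarrow$ sandwich via \eqref{eq: clf-con1-network} $\Rightarrow$ explicit $\mathcal{KL}$ bound), and it is correct; your caveat about extracting a sufficiently regular feedback from the pointwise existence of $u^\ast$ is exactly the point handled in the cited Sontag reference, so nothing is missing relative to what the paper claims.
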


Designing a CLF to ensure the GAS property of the nominal network $\Upsilon^\ast$ is, in most cases, a computationally intensive task, even when the accurate mathematical model is available. To overcome this difficulty, one can instead evaluate the GAS property of the nominal interconnected network $\Upsilon^\ast$ by leveraging the ISS characteristics of its individual nominal subsystems $\Upsilon_i^\ast, \forall i \in \{1, \ldots, N\}$, as outlined in the subsequent definition.

\begin{definition}[\textbf{ISS Lyapunov Functions}]\label{def: ISS Lyapunov}
	For a given nominal subsystem $\Upsilon_i^\ast$, a function $\V_i : \R^{n_i} \to \Rpz$ is referred to as an \emph{input-to-state stable (ISS)} Lyapunov function if there exist constants $\alpha_{1_i}, \alpha_{2_i}, \kappa_i \in \Rp$ and $\rho_i \in \Rpz$, fulfilling
	\begin{itemize}
		\item For all $x_i \in \R^{n_i}$:
		\begin{subequations}
			\begin{eqnarray}\label{eq: ISS-con1}
				\alpha_{1_i} \vert x_i \vert^2 \leq \V_i(x_i) \leq \alpha_{2_i} \vert x_i \vert^2,
			\end{eqnarray}
			\item For all $x_i \in \R^{n_i}$, there exists $u_i^\ast \in \R^{m_i}$ such that for all $w_i \in \R^{\psi_i}$:
			\begin{align}\label{eq: ISS-con2}
				\mathsf{L} \V_i(x_i) \leq - \kappa_i \V_i(x_i) + \rho_i \vert w_i \vert^2,
			\end{align}
		\end{subequations}
	\end{itemize}
	where $\mathsf{L} \V_i$ represents the \emph{Lie derivative} of $\V_i : \R^{n_i} \to \Rpz$ with respect to the nominal dynamics of \eqref{eq: final_subsys}, given by
	\begin{align}
		\mathsf{L} \V_i(x_i) = \partial_{x_i} \V_i(x_i) (\mathcal{A}_i \mathcal{Z}_i(x_i) + \mathcal{B}_i u_i^\ast + \mathcal{D}_i w_i). \label{eq: Lie derivative-subsystem}
	\end{align}
\end{definition}

\subsection{ISM Control}\label{subsec: ISM}
As briefly discussed in the preceding subsection, we aim to design a local controller for each subsystem, represented by $u_i = u_{i}^\ast + u_{i}^{\mathrm{ISM}}$, where $u_{i}^\ast$ is the ideal controller \cite{utkin1996integral}, and $u_{i}^{\mathrm{ISM}}$ is designed to ensure robustness against matched perturbations $\mathcal{E}_i(x_i, t)$ from the initial time instant $t_0 \in \Rpz$. For the case where $m_i > 1$, meaning each subsystem $\Upsilon_i$ has multiple control inputs, the discontinuous control law is derived using the unit vector approach, \emph{i.e.,}
\begin{align}
	u_{i}^{\mathrm{ISM}} = -\Theta_i \frac{\sigma_i(x_i)}{\vert \sigma_i(x_i) \vert}, \label{eq: ISM_U}
\end{align}
where $\Theta_i \in \Rp$ is a constant gain selected so that the worst realization of the perturbation is dominated, while $\sigma_i(x_i): \R^{n_i} \to \R^{m_i}$ is the integral sliding variable, defined as
\begin{align}
	\sigma_i(x_i) \!=\! \sigma_{0_i}(x_i) \!+\! \zeta_i(t), ~\text{where}~ \sigma_{0_i}(x_i(t_0)) \!=\! -\zeta_i(t_0). \label{eq: integral sliding variable}
\end{align}
The term $\sigma_{0_i}(x_i) : \R^{n_i} \to \R^{m_i}$ is selected by the designer, possibly as a linear combination of state variables $x_i$. On the other hand, the term $\zeta_i \in \R^{m_i}$ appearing in \eqref{eq: integral sliding variable} represents what is known as the \emph{transient function}, characterized by its dynamics defined as
\begin{align}
	\dot{\zeta}_i = -\frac{\partial \sigma_{0_i}(x_i)}{\partial x_i} (\mathcal{A}_i \mathcal{Z}_i(x_i) + \mathcal{B}_i u_i^\ast + \mathcal{D}_i w_i), \label{eq: transient}
\end{align}
with $\frac{\partial \sigma_{0_i}(x_i)}{\partial x_i} \in \R^{m_i \times n_i}$ and $\zeta_i(t_0) = - \sigma_{0_i}(x_i(t_0))$. As demonstrated in \cite{utkin1996integral}, selecting the constant gain $\Theta_i$ to sufficiently exceed the maximum possible impact of the matched perturbation, as specified in Assumption \ref{Assumption: bound_disturbance}, guarantees that a sliding mode $\sigma_i = \bbzero_{m_i}$ is achieved for all $t \geq t_0$, thereby ensuring that the perturbations are consistently rejected.

Although ISS Lyapunov functions, as defined in Definition \ref{def: ISS Lyapunov}, provide a pathway to constructing a CLF for an interconnected network under certain \emph{small-gain compositional} conditions (cf. Section \ref{Section: Compositional}), and ISM control effectively rejects the perturbations $\mathcal{E}_i(x_i, t)$, and accordingly $\mathcal{E}(x, t)$, it is evident that designing such ISS functions is not feasible due to the unknown matrices $\mathcal{A}_i$ and $\mathcal{B}_i$, and  unknown vectors $\mathcal{Z}_i(x_i)$ in \eqref{eq: Lie derivative-subsystem} and \eqref{eq: transient}. Given this fundamental challenge, we now formally state the primary problem we aim to address in this work.

\begin{resp}
	\begin{problem}\label{Problem 1}
		Consider an interconnected network $\Upsilon = \mathscr{N}(\Upsilon_1, \ldots, \Upsilon_N)$, which is formed by $N$ subsystems $\Upsilon_i$, where the matrices $\mathcal{A}_i$ and $\mathcal{B}_i$ are unknown, but an extended dictionary for $\mathcal{Z}_i$, as in \eqref{eq: dictionary}, is available. The objective is to construct an ISS Lyapunov function $\V_i$ and its corresponding controller $u_i^\ast$ for each nominal subsystem by utilizing only \emph{two input-state trajectories} collected from it. Subsequently, we design an ISM controller $u_i^{\mathrm{ISM}}$ as described in \eqref{eq: ISM_U} to ensure that the controlled subsystem remains robust against the matched perturbation $\mathcal{E}_i(x_i, t)$. Finally, we develop a \emph{compositional} approach based on \emph{small-gain reasoning} to combine these individual $\V_i$ into forming a control Lyapunov function $\V$ for the entire interconnected network, along with its corresponding controller $u$, while ensuring the GAS property of the network in the presence of external perturbations.
	\end{problem}
\end{resp}

The following section introduces our proposed data-driven framework to address the first part of Problem \ref{Problem 1}.

\section{Data-Driven Framework}\label{Section: Data Framework}
This section proposes our data-driven framework for synthesizing a local controller $u_i = u_{i}^\ast + u_{i}^{\mathrm{ISM}}$ for each \Subsys\ $\Upsilon_i$. More precisely, Subsection \ref{subsec: ISS_data} introduces our approach to constructing ISS Lyapunov functions $\V_i(x_i)$ and the associated subsystems' controller, \emph{i.e.,} $u_i^\ast$, while Subsection \ref{subsec: ISM_data} offers the methodology for designing the ISM control, \emph{i.e.,} $u_i^{\mathrm{ISM}}$, for each \Subsys\ $\Upsilon_i$, using data.

\subsection{Data-Driven Design of ISS Lyapunov Functions}\label{subsec: ISS_data}
Within our data-driven framework, the ISS Lyapunov function for each subsystem is assumed to have a \emph{quadratic} structure, expressed as $\V_i(x_i) = x_i^\top \MPI x_i$, where $\MPI \succ 0$. Data is then collected from each unknown nominal subsystem $\Upsilon_i^\ast$ over the time interval $[t_0, t_0 + (\mathcal{T} - 1)\tau]$, where $\mathcal{T} \in \Np$ represents the total number of samples, and $\tau \in \Rp$ denotes the sampling time:
\begin{subequations}\label{eq: data_ct} 
	\begin{align}
		\Ii &= \begin{bmatrix}
			u_i^\ast(t_0) & \! u_i^\ast(t_0 \!+\! \tau) & \! \dots & \! u_i^\ast(t_0 \!+\! (\mathcal{T} \!-\! 1)\tau)
		\end{bmatrix}\!\!,\label{eq: Ii}\\
		\Si &= \begin{bmatrix}
			x_i (t_0) & \! x_i (t_0 \!+\! \tau) & \! \dots & \! x_i (t_0 \!+\! (\mathcal{T} \!-\! 1)\tau)
		\end{bmatrix}\!\!,\label{eq: Si}\\
		\Wi &= \begin{bmatrix}
			w_i (t_0) & \! w_i (t_0 \!+\! \tau) & \! \dots & \! w_i (t_0 \!+\! (\mathcal{T} \!-\! 1)\tau)
		\end{bmatrix}\!\!,\label{eq: Wi}\\
		\Sip &= \begin{bmatrix}
			\dot{x}_i (t_0) & \! \dot{x}_i (t_0 \!+\! \tau) & \! \dots & \! \dot{x}_i (t_0 \!+\! (\mathcal{T} \!-\! 1)\tau)
		\end{bmatrix}\!\!.\label{eq: Sip}
	\end{align}
\end{subequations}
Using the collected data $\Si$ and the extended dictionary $\mathcal{Z}_i(x_i)$ in \eqref{eq: dictionary}, we proceed to construct the following matrix:
\begin{align}
	\PD \!=\! \begin{bmatrix}
		x_i (t_0) & x_i (t_0 \!+\! \tau) & \!\!\!\dots & x_i (t_0 \!+\! (\mathcal{T} \!-\! 1)\tau)\\
		\mathcal{M}(x_i (t_0)\!) & 	\!\!\mathcal{M}(x_i (t_0 \!+\! \tau)\!) & \!\!\!\dots & 	\!\!\!\!\!\mathcal{M}(x_i (t_0 \!+\! (\mathcal{T} \!-\! 1)\tau)\!)
	\end{bmatrix}\!\!. \label{eq: PD}
\end{align}
Similarly, an additional trajectory is collected for each nominal \Subsys\ over the same time interval, represented by $\barSi$, $\barWi$, $\barSip$, and $\barPD$, where the system is driven by a \emph{zero control input} $\barIi = \bbzero_{m_i \times \mathcal{T}}$, while both trajectories are required to originate from the \emph{same initial condition}. The trajectories specified in \eqref{eq: data_ct}, along with those obtained under a constant zero input, are treated as \emph{two distinct input-state trajectories}. It is important to highlight that the collection of the second input-state trajectory is necessary to enable the data-driven characterization of the control input matrix $\mathcal{B}_i$ as it is needed in Subsection \ref{subsec: ISM_data}. However, in scenarios where the control input matrix $\mathcal{B}_i$ is known, a single input-state trajectory would suffice.

\begin{remark}[\textbf{On Samples \eqref{eq: Sip}}]\label{Remark: samples}
	The data $\dot{x}_i \left(t_0 + k\tau\right)$, for $k \in \{0, 1, \dots, (\mathcal{T}-1)\}$, contained in $\Sip$ and $\barSip$, can be estimated using numerical differentiation. For instance, applying the forward difference approximation method yields
	$
	\dot{x}_{ji} \left(t_0 + k\tau\right) = \frac{x_{ji}\left(t_0 + (k+1)\tau\right) - x_{ji} \left(t_0 + k\tau\right)}{\tau} + e_{ji}\left(t_0 + k\tau\right), j \in \{1, \ldots, n_i\},
	$
	where the error term $e_{ji}\left(t_0 + k\tau\right)$ scales with $\tau$ and can be interpreted as noise \cite{guo2022data}. While this paper assumes noise-free data for the sake of a clear presentation, techniques such as total variation regularization can be utilized to reduce the effects of noise on derivative approximations in cases where the data is noisy \cite{rudin1992nonlinear}.
\end{remark}

Inspired by \cite{de2023learning}, the following lemma establishes the data-driven representation of each nominal \Subsys\ $\Upsilon_i^\ast$. This formulation further enables the characterization of the unknown matrix $\mathcal{B}_i$ from the second input-output trajectory of the nominal \Subsys\ $\Upsilon_i^\ast$, which is subsequently utilized in Subsection \ref{subsec: ISM_data}.

\begin{lemma}[\textbf{Data-based Representation of $\Upsilon_i^\ast$}]\label{Lemma-rep}
	Given a nominal \Subsys\ $\Upsilon_i^\ast$, let $\mathcal{G}_i = [\mathcal{G}_{1_i} \; \mathcal{G}_{2_i}] \in \R^{\mathcal{T} \times z_i}$ and $\mathcal{Q}_i \in \R^{\mathcal{T} \times \mathcal{T}}$ represent arbitrary matrices, where $\mathcal{G}_{1_i} \in \R^{\mathcal{T} \times n_i}$ and $\mathcal{G}_{2_i} \in \R^{\mathcal{T} \times (z_i - n_i)}$, that satisfy the following conditions:
	\begin{subequations}\label{eq: lemma_equalities}
		\begin{align}
			\I_{z_i} & = \PD \mathcal{G}_i, \label{eq: lemma_eq1}\\
			\PD & = \barPD \mathcal{Q}_i. \label{eq: lemma_eq2}
		\end{align}
	\end{subequations}
	By designing the nominal local controller as
	\begin{align}
		u_i^\ast = \mathbfcal{K}_i\mathcal{Z}_i(x_i), \label{eq: lemma_u}
	\end{align}
	where $\mathbfcal{K}_i = \Ii \mathcal{G}_i$,
	the closed-form data-based representation of $\mathcal{A}_i \mathcal{Z}_i(x_i) + \mathcal{B}_i u_i^\ast$ can be expressed as
	\begin{align}
		\mathcal{A}_i \mathcal{Z}_i(x_i) + \mathcal{B}_i u_i^\ast = \; & (\Sip - \mathcal{D}_i \Wi)\mathcal{G}_{1_i} x_i \notag\\
		&+ (\Sip - \mathcal{D}_i \Wi)\mathcal{G}_{2_i} \mathcal{M}_i(x_i), \label{eq: lemma_closed}
	\end{align}
	while the control input matrix $\mathcal{B}_i$ has the following data-based characterization:
	\begin{align}
		\mathcal{B}_i = (\Sip - (\barSip - \mathcal{D}_i \barWi) \mathcal Q_i - \mathcal{D}_i \Wi)\Ii^{\! \dagger}. \label{eq: lemma_B}
	\end{align}
\end{lemma}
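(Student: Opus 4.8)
The plan is to exploit the data equations \eqref{eq: lemma_eq1} and \eqref{eq: lemma_eq2} to substitute closed-form data expressions into the true dynamics, using the fundamental fact that along any recorded trajectory of the nominal subsystem $\Upsilon_i^\ast$ the samples satisfy $\Sip = \mathcal{A}_i \PD + \mathcal{B}_i \Ii + \mathcal{D}_i \Wi$, and likewise for the second (zero-input) trajectory $\barSip = \mathcal{A}_i \barPD + \mathcal{D}_i \barWi$ (since $\barIi = \bbzero$). These two identities encode everything the data knows about the unknown $\mathcal{A}_i$ and $\mathcal{B}_i$, and the rest is linear algebra.

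First I would establish \eqref{eq: lemma_closed}. Rearranging the first sample identity gives $\mathcal{A}_i \PD + \mathcal{B}_i \Ii = \Sip - \mathcal{D}_i \Wi$. Right-multiplying by $\mathcal{G}_i$ and invoking \eqref{eq: lemma_eq1}, i.e.\ $\PD \mathcal{G}_i = \I_{z_i}$, yields $\mathcal{A}_i + \mathcal{B}_i \Ii \mathcal{G}_i = (\Sip - \mathcal{D}_i \Wi)\mathcal{G}_i$. Since $\mathbfcal{K}_i = \Ii \mathcal{G}_i$ and $u_i^\ast = \mathbfcal{K}_i \mathcal{Z}_i(x_i)$, we have $\mathcal{A}_i \mathcal{Z}_i(x_i) + \mathcal{B}_i u_i^\ast = (\mathcal{A}_i + \mathcal{B}_i \mathbfcal{K}_i)\mathcal{Z}_i(x_i) = (\Sip - \mathcal{D}_i \Wi)\mathcal{G}_i \mathcal{Z}_i(x_i)$. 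Splitting $\mathcal{G}_i = [\mathcal{G}_{1_i}\;\mathcal{G}_{2_i}]$ and using the block form \eqref{eq: dictionary} of $\mathcal{Z}_i(x_i) = [x_i; \mathcal{M}_i(x_i)]$ gives exactly \eqref{eq: lemma_closed}. A small point to be careful about is that the rearrangement $\mathcal{A}_i + \mathcal{B}_i \mathbfcal{K}_i = (\Sip - \mathcal{D}_i\Wi)\mathcal{G}_i$ is an identity between $n_i \times z_i$ matrices that holds because the \emph{columns} of $\Sip - \mathcal{D}_i\Wi$ are precisely $\mathcal{A}_i \mathcal{Z}_i(x_i(t_0+k\tau)) + \mathcal{B}_i u_i^\ast(t_0+k\tau)$, and these are what $\PD$ and $\Ii$ record; feasibility of \eqref{eq: lemma_eq1} is what makes $\mathcal{G}_i$ a right inverse of $\PD$ in the first place.

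Next I would derive \eqref{eq: lemma_B}. From the zero-input trajectory, $\barSip - \mathcal{D}_i \barWi = \mathcal{A}_i \barPD$. Using \eqref{eq: lemma_eq2}, $\PD = \barPD \mathcal{Q}_i$, so $\mathcal{A}_i \PD = \mathcal{A}_i \barPD \mathcal{Q}_i = (\barSip - \mathcal{D}_i \barWi)\mathcal{Q}_i$; this furnishes a data expression for $\mathcal{A}_i \PD$ that does not involve $\mathcal{B}_i$. Substituting this into the first sample identity $\Sip = \mathcal{A}_i \PD + \mathcal{B}_i \Ii + \mathcal{D}_i \Wi$ isolates $\mathcal{B}_i \Ii = \Sip - (\barSip - \mathcal{D}_i \barWi)\mathcal{Q}_i - \mathcal{D}_i \Wi$, and right-multiplying by the pseudoinverse $\Ii^{\!\dagger}$ (which satisfies $\Ii \Ii^{\!\dagger} = \I_{m_i}$ under the implicit full-row-rank assumption on $\Ii$) gives \eqref{eq: lemma_B}.

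The main obstacle is not any single manipulation but making the logical status of the two feasibility conditions airtight: one must argue that \eqref{eq: lemma_eq1} and \eqref{eq: lemma_eq2} are indeed solvable (a rank/persistency-of-excitation condition on the collected data, presumably assumed or guaranteed by the data-collection protocol), and that the resulting expressions are \emph{independent of the particular choice} of $\mathcal{G}_i$, $\mathcal{Q}_i$ satisfying them — because the left-hand sides $\mathcal{A}_i\mathcal{Z}_i(x_i)+\mathcal{B}_i u_i^\ast$ and $\mathcal{B}_i$ are intrinsic quantities while $\mathcal{G}_i,\mathcal{Q}_i$ are not unique. The cleanest way to handle this is to note that every step above is an algebraic consequence of the sample identities alone, so the right-hand sides necessarily collapse to the intrinsic left-hand sides regardless of which feasible $\mathcal{G}_i,\mathcal{Q}_i$ one picks; I would state this invariance explicitly. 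A secondary technical point worth a remark is the requirement $f_i(\bbzero)=\bbzero$ together with $\mathcal{Z}_i(\bbzero)=\bbzero$, which guarantees $u_i^\ast = \mathbfcal{K}_i\mathcal{Z}_i(x_i)$ vanishes at the origin so that the closed loop keeps the origin as an equilibrium — needed for the ISS/GAS claims downstream but not strictly for the identities themselves.
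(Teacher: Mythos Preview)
Your proposal is correct and follows essentially the same approach as the paper: both arguments rest on the two sample identities $\Sip = \mathcal{A}_i \PD + \mathcal{B}_i \Ii + \mathcal{D}_i \Wi$ and $\barSip = \mathcal{A}_i \barPD + \mathcal{D}_i \barWi$, then use \eqref{eq: lemma_eq1} to obtain the closed-loop representation and \eqref{eq: lemma_eq2} to isolate $\mathcal{B}_i$. The only cosmetic difference is order (the paper proves \eqref{eq: lemma_B} first, then \eqref{eq: lemma_closed}), and your remarks on feasibility and invariance of the choice of $\mathcal{G}_i,\mathcal{Q}_i$ are extra commentary the paper defers to a separate remark rather than part of the proof itself.
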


\begin{proof}
	Based on the first input-state trajectory, the nominal \Subsys\ $\dot{x}_i  = \mathcal{A}_i \mathcal{Z}_i(x_i ) + \mathcal{B}_i u_i^\ast + \mathcal{D}_i w_i $ can be equivalently formulated as
	\begin{align}
		\Sip = \mathcal{A}_i \PD + \mathcal{B}_i \Ii + \mathcal{D}_i \Wi. \label{tmp1}
	\end{align}
	Accordingly, based on the second trajectory gathered  with a \emph{zero control input} $\barIi = \bbzero_{m_i \times \mathcal{T}}$, we have
	\begin{align*}
		\barSip = \mathcal{A}_i \barPD + \mathcal{D}_i \barWi. 
	\end{align*}
	Consequently, one has
	\begin{align}
		\mathcal{A}_i \barPD = \barSip - \mathcal{D}_i \barWi \overset{\! \eqref{eq: lemma_eq2}}{\Rightarrow} \mathcal{A}_i \PD = (\barSip - \mathcal{D}_i \barWi)\mathcal{Q}_i. \label{tmp2}
	\end{align}
	Thus, by substituting \eqref{tmp2} in \eqref{tmp1}, one gets
	\begin{align*}
		\mathcal{B}_i = (\Sip - (\barSip - \mathcal{D}_i \barWi) \mathcal Q_i - \mathcal{D}_i \Wi)\Ii^{\! \dagger},
	\end{align*}
	which is in accordance with \eqref{eq: lemma_B}. At the same time, one has
	\begin{align*}
		&\mathcal{A}_i \mathcal{Z}_i(x_i ) + \mathcal{B}_i u_i^\ast\\
		& \overset{\eqref{eq: lemma_u}}{=} \mathcal{A}_i \mathcal{Z}_i(x_i ) + \mathcal{B}_i \Ii \mathcal{G}_i\mathcal{Z}_i(x_i ) = (\mathcal{A}_i + \mathcal{B}_i \Ii \mathcal{G}_i) \mathcal{Z}_i(x_i )\\
		& \! \overset{\eqref{eq: lemma_eq1}}{=} (\mathcal{A}_i \overbrace{\PD \mathcal{G}_i}^{\I_{z_i}} + \mathcal{B}_i \Ii \mathcal{G}_i) \mathcal{Z}_i(x_i ) = (\mathcal{A}_i \PD + \mathcal{B}_i \Ii) \mathcal{G}_i \mathcal{Z}_i(x_i )\\
		& \overset{\eqref{tmp1}}{=} (\Sip - \mathcal{D}_i \Wi) \mathcal{G}_i \mathcal{Z}_i(x_i ) \!=\! (\Sip - \mathcal{D}_i \Wi) [\mathcal{G}_{1_i} \;\; \mathcal{G}_{2_i}] \mathcal{Z}_i(x_i )\\
		& \, \overset{\eqref{eq: dictionary}}{=}  (\Sip - \mathcal{D}_i \Wi)\mathcal{G}_{1_i} x_i  + (\Sip - \mathcal{D}_i \Wi)\mathcal{G}_{2_i} \mathcal{M}_i(x_i ),
	\end{align*}
	thereby concluding the proof.
\end{proof}

\begin{remark}[\textbf{Richness of Data}]\label{Remark-richness}
	If the data matrices $\PD$ and $\barPD$ are \emph{full-row rank}, it is always possible to construct $\mathcal{G}_i$ and $\mathcal{Q}_i$ that satisfy conditions \eqref{eq: lemma_equalities}. To ensure that $\PD$ and $\barPD$ have full-row rank, the number of samples $\mathcal{T}$ must satisfy the condition $\mathcal{T} > z_i$. Since both matrices $\PD$ and $\barPD$ are obtained from sampled data, meeting this full-row rank requirement is very straightforward.
\end{remark}

Now, by utilizing the data-based representation of $\mathcal{A}_i \mathcal{Z}_i(x_i ) + \mathcal{B}_i u_i^\ast$ established in Lemma \ref{Lemma-rep}, we present the following theorem as one of the key contributions of this work. This theorem provides a method for constructing ISS Lyapunov functions and their associated local controllers for each nominal subsystem directly from data.

\begin{theorem}[\textbf{Data-Driven ISS Lyapunov Functions}]\label{Theorem-data-iss}
	Consider a nominal unknown \Subsys\ $\Upsilon_i^\ast$, and its data-based representation $\mathcal{A}_i \mathcal{Z}_i(x_i) + \mathcal{B}_i u_i^\ast$ as in \eqref{eq: lemma_closed}. If, for some $\kappa_i, \mu_i \in \Rp,$ there exist matrices $\mathcal{G}_{2_i} \in \R^{\mathcal{T} \times (z_i - n_i)}, \mathcal{Y}_i \in \R^{\mathcal{T} \times n_i},$ and $\Phi_i \in \R^{n_i \times n_i}$, where $\Phi_i \succ 0$, fulfilling
	\begin{subequations}\label{eq: thm-iss}
		\begin{align}
			(\Sip - \mathcal{D}_i \Wi) \mathcal{G}_{2_i} &= \bbzero_{n_i \times (z_i - n_i)}, \label{eq : thm1}\\
			\PD \mathcal{G}_{2_i} &= \begin{bmatrix}
				\bbzero_{n_i \times (z_i - n_i)}\\
				\I_{z_i - n_i}
			\end{bmatrix}\!\!, \label{eq : thm2}\\
			\PD \mathcal{Y}_i &= \begin{bmatrix}
				\Phi_i\\
				\bbzero_{(z_i - n_i) \times n_i}
			\end{bmatrix}\!\!,\label{eq : thm3}\\
			\mathcal{Y}_i^\top (\Sip - \mathcal{D}_i \Wi)^{\! \top}&+ (\Sip - \mathcal{D}_i \Wi) \mathcal{Y}_i + \mu_i \I_{n_i} \preceq - \kappa_i \Phi_i, \label{eq : thm4}
		\end{align}
	\end{subequations}
	then $\V_i(x_i) = x_i^\top \MPI x_i$, with $\MPI \coloneq \Phi_i^{-1}$, is an ISS Lyapunov function for $\Upsilon_i^\ast$, with $\alpha_{1_i} = \lambda_{\min}(\MPI), \alpha_{2_i} = \lambda_{\max}(\MPI),$ and $\rho_i = \frac{\Vert \mathcal{D}_i \Vert^2}{\mu_i}$. Furthermore, the corresponding ISS controller $u_i^\ast$ is in the form of \eqref{eq: lemma_u} with $\mathbfcal{K}_i = \Ii \begin{bmatrix}
	\mathcal{Y}_i \MPI & \mathcal{G}_{2_i}
	\end{bmatrix}\!$.
\end{theorem}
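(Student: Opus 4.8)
The plan is to verify, for the candidate $\V_i(x_i) = x_i^\top\MPI x_i$ with $\MPI \coloneq \Phi_i^{-1}$, the two defining properties of an ISS Lyapunov function in Definition \ref{def: ISS Lyapunov}. Property \eqref{eq: ISS-con1} is immediate: $\Phi_i \succ 0$ gives $\MPI \succ 0$, hence $\lambda_{\min}(\MPI)\vert x_i\vert^2 \le x_i^\top\MPI x_i \le \lambda_{\max}(\MPI)\vert x_i\vert^2$, so one may take $\alpha_{1_i} = \lambda_{\min}(\MPI)$ and $\alpha_{2_i} = \lambda_{\max}(\MPI)$. The real work is the decay estimate \eqref{eq: ISS-con2}.

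First I would reconcile the theorem's data with the representation of Lemma \ref{Lemma-rep}. Setting $\mathcal{G}_{1_i} \coloneq \mathcal{Y}_i\MPI$ and $\mathcal{G}_i \coloneq [\mathcal{G}_{1_i}\;\;\mathcal{G}_{2_i}]$, the gain $\mathbfcal{K}_i = \Ii[\mathcal{Y}_i\MPI\;\;\mathcal{G}_{2_i}] = \Ii\mathcal{G}_i$ matches \eqref{eq: lemma_u}, and I would check hypothesis \eqref{eq: lemma_eq1}: since $\PD\mathcal{G}_i = [\,\PD\mathcal{Y}_i\MPI\;\;\PD\mathcal{G}_{2_i}\,]$, substituting \eqref{eq : thm3} gives $\PD\mathcal{Y}_i\MPI = [\Phi_i\MPI;\bbzero] = [\I_{n_i};\bbzero]$, while \eqref{eq : thm2} gives $\PD\mathcal{G}_{2_i} = [\bbzero;\I_{z_i-n_i}]$, whence $\PD\mathcal{G}_i = \I_{z_i}$. (Only \eqref{eq: lemma_eq1} is needed for the closed form \eqref{eq: lemma_closed}; condition \eqref{eq: lemma_eq2} and the $\mathcal{B}_i$-identity \eqref{eq: lemma_B} belong to the ISM step of Subsection \ref{subsec: ISM_data}.) Then \eqref{eq: lemma_closed} applies, and feeding in \eqref{eq : thm1}, which annihilates the $\mathcal{M}_i(x_i)$-term, collapses it to $\mathcal{A}_i\mathcal{Z}_i(x_i) + \mathcal{B}_i u_i^\ast = (\Sip - \mathcal{D}_i\Wi)\mathcal{Y}_i\MPI x_i$.

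Next I would compute the Lie derivative \eqref{eq: Lie derivative-subsystem}. With $\partial_{x_i}\V_i(x_i) = 2x_i^\top\MPI$ and the identity just obtained,
\[
\mathsf{L}\V_i(x_i) = 2x_i^\top\MPI(\Sip - \mathcal{D}_i\Wi)\mathcal{Y}_i\MPI x_i + 2x_i^\top\MPI\mathcal{D}_i w_i .
\]
Writing the first (quadratic) term as $\eta^\top(H + H^\top)\eta$ with $\eta \coloneq \MPI x_i$ and $H \coloneq (\Sip - \mathcal{D}_i\Wi)\mathcal{Y}_i$, inequality \eqref{eq : thm4} --- which reads $H + H^\top + \mu_i\I_{n_i} \preceq -\kappa_i\Phi_i$ --- bounds it by $-\kappa_i\,\eta^\top\Phi_i\eta - \mu_i\vert\eta\vert^2 = -\kappa_i\,x_i^\top\MPI x_i - \mu_i\vert\MPI x_i\vert^2 = -\kappa_i\V_i(x_i) - \mu_i\vert\MPI x_i\vert^2$, using $\MPI\Phi_i = \I_{n_i}$. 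For the cross term, Young's inequality $2a^\top b \le \mu_i\vert a\vert^2 + \mu_i^{-1}\vert b\vert^2$ with $a = \MPI x_i$, $b = \mathcal{D}_i w_i$, together with $\vert\mathcal{D}_i w_i\vert \le \Vert\mathcal{D}_i\Vert\,\vert w_i\vert$, gives $2x_i^\top\MPI\mathcal{D}_i w_i \le \mu_i\vert\MPI x_i\vert^2 + \Vert\mathcal{D}_i\Vert^2\mu_i^{-1}\vert w_i\vert^2$. Adding the two bounds, the $\pm\mu_i\vert\MPI x_i\vert^2$ contributions cancel and I obtain $\mathsf{L}\V_i(x_i) \le -\kappa_i\V_i(x_i) + (\Vert\mathcal{D}_i\Vert^2/\mu_i)\vert w_i\vert^2$, i.e. \eqref{eq: ISS-con2} with $\rho_i = \Vert\mathcal{D}_i\Vert^2/\mu_i$, and the controller is precisely \eqref{eq: lemma_u} with the claimed gain.

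No step is deep; the one point to get right is the use of \eqref{eq : thm4}. The congruence transformation by $\MPI$ (equivalently, evaluating the quadratic form at $\eta = \MPI x_i$) is what turns the $\Phi_i$-weighted LMI into the genuine decay term $-\kappa_i\V_i(x_i)$, and the auxiliary parameter $\mu_i$ must be budgeted so that the $-\mu_i\vert\MPI x_i\vert^2$ margin produced by \eqref{eq : thm4} exactly absorbs the slack $\mu_i\vert\MPI x_i\vert^2$ generated by Young's inequality --- which is why $\mu_i$ appears identically in both places and cancels. Everything else (the quadratic sandwich bounds, the matrix-norm estimate, and checking that \eqref{eq : thm1}--\eqref{eq : thm3} indeed match Lemma \ref{Lemma-rep}'s requirements) is routine.
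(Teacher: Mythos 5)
Your proposal is correct and follows essentially the same route as the paper's proof: verify \eqref{eq: ISS-con1} from the quadratic form, check that \eqref{eq : thm2}--\eqref{eq : thm3} yield $\PD[\mathcal{Y}_i\MPI\;\;\mathcal{G}_{2_i}]=\I_{z_i}$ so Lemma \ref{Lemma-rep}'s closed form applies, cancel the $\mathcal{M}_i$ term via \eqref{eq : thm1}, and combine the congruence by $\MPI$ in \eqref{eq : thm4} with Young's inequality on the $\mathcal{D}_i w_i$ cross term to obtain $\rho_i=\Vert\mathcal{D}_i\Vert^2/\mu_i$. The only difference is cosmetic bookkeeping: you extract the $-\mu_i\vert\MPI x_i\vert^2$ margin from \eqref{eq : thm4} first and let it cancel the Young slack, whereas the paper folds $\mu_i\I_{n_i}$ into the bracketed quadratic form before invoking \eqref{eq : thm4}.
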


\begin{proof}
	We initiate the proof by showing the satisfaction of condition \eqref{eq: ISS-con1}. Considering the \emph{quadratic} form of the ISS Lyapunov function $\V_i(x_i)$, we get
	\begin{align*}
		\lambda_{\min}(\MPI) \vert x_i \vert^2 \leq \V_i(x_i) \leq \lambda_{\max}(\MPI) \vert x_i \vert^2,
	\end{align*}
	concluding the fulfillment of condition \eqref{eq: ISS-con1} with $\alpha_{1_i} = \lambda_{\min}(\MPI)$ and $\alpha_{2_i} = \lambda_{\max}(\MPI).$
	
	We now continue by showing that under the conditions proposed in \eqref{eq: thm-iss}, condition \eqref{eq: ISS-con2} is also met. Let us define $\mathcal{G}_{1_i} \coloneq \mathcal{Y}_i \MPI,$ with $\MPI \coloneq \Phi_i^{-1}$. Subsequently, we demonstrate that conditions \eqref{eq : thm2} and \eqref{eq : thm3} together result in
	\begin{align*}
		\PD \begin{bmatrix}
			\mathcal{G}_{1_i} & \mathcal{G}_{2_i}
		\end{bmatrix} = \I_{z_i},
	\end{align*}
	which is in accordance with \eqref{eq: lemma_eq1}. Consequently, based on the Lie derivative in \eqref{eq: Lie derivative-subsystem}, one has
	\begin{align*}
		&\mathsf{L} \V_i(x_i)\\
		& = \partial_{x_i} \V_i(x_i) (\mathcal{A}_i \mathcal{Z}_i(x_i) + \mathcal{B}_i u_i^\ast + \mathcal{D}_i w_i)\\
		&= 2x_i^\top \MPI (\mathcal{A}_i \mathcal{Z}_i(x_i) + \mathcal{B}_i u_i^\ast + \mathcal{D}_i w_i)\\
		& \! \overset{\eqref{eq: lemma_closed}}{=} 2x_i^\top \MPI ((\Sip - \mathcal{D}_i \Wi)\mathcal{G}_{1_i} x_i  + (\Sip - \mathcal{D}_i \Wi)\mathcal{G}_{2_i} \mathcal{M}_i(x_i )\\
		& \hspace{1.6cm}+ \mathcal{D}_i w_i )\\
		& \!\! \overset{\eqref{eq : thm1}}{=} 2x_i^\top \MPI ((\Sip - \mathcal{D}_i \Wi)\mathcal{G}_{1_i} x_i + \mathcal{D}_i w_i )\\
		& = 2x_i^\top \MPI (\Sip - \mathcal{D}_i \Wi)\mathcal{G}_{1_i} x_i + 2 \overbrace{x_i^\top \MPI}^{a_i} \overbrace{\mathcal{D}_i w_i}^{b_i}.
	\end{align*}
	By utilizing the Cauchy-Schwarz inequality \cite{bhatia1995cauchy}, which states that $a_i b_i \leq \vert a_i \vert \vert b_i \vert$ for any $a_i^\top\!\!, b_i \in \R^{n_i}$, and subsequently applying Young's inequality \cite{young1912classes}, given by $\vert a_i \vert \vert b_i \vert \leq \frac{\mu_i}{2} \vert a_i \vert^2 + \frac{1}{2\mu_i} \vert b_i \vert^2$ for any $\mu_i \in \Rp$, one can follow that
	\begin{align*}
		&\mathsf{L} \V_i(x_i)\\
		& \leq 2x_i^\top \MPI (\Sip - \mathcal{D}_i \Wi)\mathcal{G}_{1_i} x_i + \mu_i x_i^\top \MPI \MPI x_i + \frac{1}{\mu_i} \vert \mathcal{D}_i w_i \vert^2.
	\end{align*}
	Through expanding the term obtained and factorizing $x_i^\top \MPI$ from left and $\MPI x_i$ from right, alongside employing the Cauchy-Schwarz inequality once again, and since $\mathcal{G}_{1_i} \coloneq \mathcal{Y}_i \MPI,$ we obtain
	\begin{align*}
		&\mathsf{L} \V_i(x_i)\\
		& \leq x_i^\top \MPI  \big[\overbrace{\MPI^{-1} \mathcal{G}_{1_i}^\top}^{\mathcal{Y}_i^\top} (\Sip - \mathcal{D}_i \Wi)^{\! \top} \!+\! (\Sip - \mathcal{D}_i \Wi) \overbrace{\mathcal{G}_{1_i} \MPI^{-1}}^{\mathcal{Y}_i } \\
		& \hspace{1.4cm} + \mu_i \I_{n_i} \big] \MPI x_i + \frac{1}{\mu_i} \Vert \mathcal{D}_i \Vert^2 \vert w_i \vert^2\\
		& = x_i^\top \MPI \big[\mathcal{Y}_i^\top (\Sip \!-\! \mathcal{D}_i \Wi)^{\! \top} \!+\! (\Sip \!-\! \mathcal{D}_i \Wi) \mathcal{Y}_i  \!+\! \mu_i \I_{n_i} \big] \MPI x_i\\
		& \hspace{.4cm}  + \frac{1}{\mu_i} \Vert \mathcal{D}_i \Vert^2 \vert w_i \vert^2\\
		& \! \overset{\eqref{eq : thm4}}{\leq} -\kappa_i \underbrace{x_i^\top \overbrace{\MPI \MPI^{-1}}^{\I_{n_i}} \MPI x_i}_{\V_i(x_i)} + \frac{1}{\mu_i} \Vert \mathcal{D}_i \Vert^2 \vert w_i \vert^2\\
		& = -\kappa_i \V_i(x_i) + \rho_i \vert w_i \vert^2,
	\end{align*}
	with $\rho_i = \frac{\Vert \mathcal{D}_i \Vert^2}{\mu_i},$ thereby concluding the proof.
\end{proof}

Having obtained the data-driven ISS Lyapunov function $\V_i(x_i)$ alongside its associated ISS controller $u_i^\ast$ for each nominal \Subsys\ $\Upsilon_i^\ast$, we now proceed with proposing our data-driven framework to design the ISM controller $u_{i}^{\mathrm{ISM}}$ for each subsystem $\Upsilon_i$, capable of rejecting matched perturbations $\mathcal{E}_i(x_i, t)$, in the subsequent subsection.

\subsection{Data-Driven ISM Control}\label{subsec: ISM_data}
In this subsection, we build upon the data-driven controller $u_i^\ast$, developed in the previous subsection, to design the discontinuous control law $u_{i}^{\mathrm{ISM}}$ as described in \eqref{eq: ISM_U} for each \Subsys\ $\Upsilon_i$ using data,  thereby eliminating the need for knowledge of the nominal dynamics required in classical ISM control \cite{utkin1996integral}. This leads to the following novel data-driven ISM control.

By leveraging the data-based representation of $\Upsilon_i^\ast$ in \eqref{eq: lemma_closed}, the dynamics of the transient function in \eqref{eq: transient} can be reformulated as
\begin{align}
	\dot{\zeta}_i & = -\frac{\partial \sigma_{0_i}(x_i)}{\partial x_i} (\mathcal{A}_i \mathcal{Z}_i(x_i) + \mathcal{B}_i u_i^\ast + \mathcal{D}_i w_i)\notag\\
	& \overset{\eqref{eq: lemma_closed}}{=} -\frac{\partial \sigma_{0_i}(x_i)}{\partial x_i}((\Sip - \mathcal{D}_i \Wi) \mathcal{G}_i \mathcal{Z}_i(x_i) + \mathcal{D}_i w_i), \label{eq: transient_new}
\end{align}
with $\mathcal{Z}_i(x_i)$ as in \eqref{eq: dictionary} and $\zeta_i(t_0) = - \sigma_{0_i}(x_i(t_0))$, where $t_0 \in \Rpz$ is the initial time instant.

We now proceed to address the second part of Problem \ref{Problem 1}, which involves designing the controller $u_{i}^{\mathrm{ISM}}$ in \eqref{eq: ISM_U} for each \Subsys\ $\Upsilon_i$ using data, ensuring that the controlled subsystem remains robust against the matched perturbation $\mathcal{E}_i(x_i, t)$. Note that the perturbation vector $\mathcal{E}_i(x_i, t)$ satisfies the matching condition $\mathcal{E}_i(x_i, t) = \mathcal{B}_i \gamma_i(x_i, t)$, where $\gamma_i(x_i, t) \in \R^{m_i}$. It is assumed that 
\begin{align}
	\gamma_i(x_i, t) \in \Gamma_i, \label{eq: Gamma}
\end{align}
where $\Gamma_i$ is a compact set that contains the origin. Furthermore, a known upper bound is defined as $$\Gamma_i^{\sup} = \sup_{\gamma_i(x_i, t) \in \Gamma_i} \vert \gamma_i(x_i, t) \vert,$$ which is in accordance with Assumption \ref{Assumption: bound_disturbance}.

Without loss of generality, the term $\sigma_{0_i}(x_i)$ in the integral sliding variable \eqref{eq: integral sliding variable} can be chosen as a linear combination of the state, expressed as $\sigma_{0_i}(x_i) = \mathcal{C}_i x_i$, where $\mathcal{C}_i = \frac{\partial \sigma_{0_i}(x_i)}{\partial x_i} \in \R^{m_i \times n_i}$. Note that the matrix $\mathcal{C}_i$ is designed to satisfy the condition $\mathcal{C}_i \mathcal{B}_i \succ 0$. However, since $\mathcal{B}_i$ is assumed to be unknown, we substitute it with its data-based characterization obtained in \eqref{eq: lemma_B}. Hence, the matrix $\mathcal{C}_i$ must be designed such that $\mathcal{C}_i \big( \! (\Sip - (\barSip - \mathcal{D}_i \barWi) \mathcal Q_i - \mathcal{D}_i \Wi)\Ii^{\! \dagger} \big) \succ 0$.

In the following lemma, we propose a condition under which the integral sliding manifold $\sigma_i(x_i) = \bbzero_{m_i}$ is maintained for all $t \in \Rpz$, thereby consistently rejecting the perturbation, that is maintaining the robustness properties of the model-based counterpart in \cite{utkin1996integral}.

\begin{lemma}[\textbf{Designing $\Theta_i$}]\label{Lemma: rho}
	Consider a \Subsys\ $\Upsilon_i$ as defined in \eqref{eq: final_subsys}, controlled by $u_i = u_i^\ast + u_{i}^{\mathrm{ISM}}$, where $u_i^\ast = \mathbfcal{K}_i \mathcal{Z}_i(x_i)$, $u_{i}^{\mathrm{ISM}}$ is given by \eqref{eq: ISM_U}, and the integral sliding variable is chosen as $\sigma_i(x_i) = \sigma_{0_i}(x_i) + \zeta_i(t)$, with $\zeta_i$ as obtained in \eqref{eq: transient_new}. Let \eqref{eq: Gamma} hold, and suppose the condition $\mathcal{C}_i \big( \! (\Sip - (\barSip - \mathcal{D}_i \barWi) \mathcal Q_i - \mathcal{D}_i \Wi)\Ii^{\! \dagger} \big) \succ 0$ is satisfied. If the control gain $\Theta_i \in \Rp$ is designed such that
	\begin{align}
		\Theta_i >\Gamma_i^{\sup}  \frac{\lambda_{\max}\big (\mathcal{C}_i \big( \! (\Sip - (\barSip - \mathcal{D}_i \barWi) \mathcal Q_i - \mathcal{D}_i \Wi)\Ii^{\! \dagger} \big)\!\big)}{\lambda_{\min}\big (\mathcal{C}_i \big( \! (\Sip - (\barSip - \mathcal{D}_i \barWi) \mathcal Q_i - \mathcal{D}_i \Wi)\Ii^{\! \dagger} \big)\! \big)}, \label{eq: Theta}
	\end{align}
	then a  sliding mode $\sigma_i (x_i) = \bbzero_{m_i}$  is enforced for any $t \in \Rpz$.
\end{lemma}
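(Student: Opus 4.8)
The plan is to show that the integral sliding variable $\sigma_i(x_i)$ stays at the origin by verifying the two standard ingredients of ISM control: first, that $\sigma_i(x_i(t_0)) = \bbzero_{m_i}$ thanks to the initialization $\zeta_i(t_0) = -\sigma_{0_i}(x_i(t_0))$; and second, that along the perturbed closed-loop dynamics a reachability (or $\eta$-reachability) condition of the form $\sigma_i^\top \dot{\sigma}_i < 0$ holds whenever $\sigma_i \neq \bbzero_{m_i}$. Together these guarantee $\sigma_i(x_i(t)) \equiv \bbzero_{m_i}$ for all $t \in \Rpz$. This mirrors the classical argument of \cite{utkin1996integral}, but the novelty is that the term $\mathcal{B}_i$ appearing in $\dot{\sigma}_i$ is replaced by its data-based characterization from \eqref{eq: lemma_B}, so the reachability inequality is established entirely in terms of the collected data matrices.

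\textbf{Key steps.} First I would differentiate $\sigma_i(x_i) = \mathcal{C}_i x_i + \zeta_i(t)$ along the \emph{perturbed} dynamics of $\Upsilon_i$ in \eqref{eq: final_subsys}, i.e. with $\dot{x}_i = \mathcal{A}_i \mathcal{Z}_i(x_i) + \mathcal{B}_i(u_i^\ast + u_i^{\mathrm{ISM}}) + \mathcal{D}_i w_i + \mathcal{E}_i(x_i,t)$. Using $\mathcal{C}_i = \partial \sigma_{0_i}(x_i)/\partial x_i$ and the transient dynamics $\dot{\zeta}_i$ in \eqref{eq: transient_new}, which is constructed precisely to cancel the nominal part $\mathcal{C}_i(\mathcal{A}_i\mathcal{Z}_i(x_i) + \mathcal{B}_i u_i^\ast + \mathcal{D}_i w_i)$, the derivative collapses to
\[
\dot{\sigma}_i = \mathcal{C}_i \mathcal{B}_i u_i^{\mathrm{ISM}} + \mathcal{C}_i \mathcal{E}_i(x_i,t) = \mathcal{C}_i \mathcal{B}_i\Big(-\Theta_i \frac{\sigma_i(x_i)}{\vert \sigma_i(x_i) \vert} + \gamma_i(x_i,t)\Big),
\]
where I have used the matching condition $\mathcal{E}_i(x_i,t) = \mathcal{B}_i \gamma_i(x_i,t)$. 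The second step is to replace $\mathcal{B}_i$ by the data-based expression $(\Sip - (\barSip - \mathcal{D}_i \barWi)\mathcal{Q}_i - \mathcal{D}_i \Wi)\Ii^{\dagger}$ from Lemma \ref{Lemma-rep}, so that $\mathcal{C}_i\mathcal{B}_i$ becomes the symmetric positive definite matrix $\mathcal{C}_i\big((\Sip - (\barSip - \mathcal{D}_i \barWi)\mathcal{Q}_i - \mathcal{D}_i \Wi)\Ii^{\dagger}\big)$ that appears in the hypothesis. Third, I would form $\sigma_i^\top \dot{\sigma}_i$ and bound it: the $-\Theta_i$ term contributes $-\Theta_i \vert\sigma_i\vert^{-1}\sigma_i^\top (\mathcal{C}_i\mathcal{B}_i)\sigma_i \leq -\Theta_i \lambda_{\min}(\mathcal{C}_i\mathcal{B}_i)\vert\sigma_i\vert$, while the perturbation term is bounded via Cauchy–Schwarz and $\vert\gamma_i\vert \leq \Gamma_i^{\sup}$ by $\vert\sigma_i\vert\,\Vert\mathcal{C}_i\mathcal{B}_i\Vert\,\Gamma_i^{\sup} = \vert\sigma_i\vert\,\lambda_{\max}(\mathcal{C}_i\mathcal{B}_i)\,\Gamma_i^{\sup}$ (using that $\mathcal{C}_i\mathcal{B}_i$ is symmetric positive definite, so its induced $2$-norm equals $\lambda_{\max}$). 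Combining, $\sigma_i^\top\dot{\sigma}_i \leq \vert\sigma_i\vert\big(\Gamma_i^{\sup}\lambda_{\max}(\mathcal{C}_i\mathcal{B}_i) - \Theta_i\lambda_{\min}(\mathcal{C}_i\mathcal{B}_i)\big)$, which is strictly negative for $\sigma_i \neq \bbzero_{m_i}$ exactly under condition \eqref{eq: Theta}. Finally, combining this reachability inequality with $\sigma_i(x_i(t_0)) = \bbzero_{m_i}$ (which follows from \eqref{eq: integral sliding variable}) yields, by the standard ISM argument, that $\sigma_i(x_i(t)) = \bbzero_{m_i}$ for all $t \in \Rpz$.

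\textbf{Main obstacle.} The routine part is the differentiation and the norm bounds; the subtlety I would be most careful about is the handling of the discontinuous right-hand side. Because $u_i^{\mathrm{ISM}}$ is discontinuous on the manifold $\{\sigma_i = \bbzero_{m_i}\}$, the derivative $\dot{\sigma}_i$ must be understood in the Filippov sense, and one should argue either via a Lyapunov-type function $\frac{1}{2}\sigma_i^\top\sigma_i$ (whose derivative off the manifold is given by the computation above, and which is non-increasing and initially zero, hence identically zero) or invoke the existence-of-sliding-mode results of \cite{utkin1996integral} directly. A secondary point worth stating explicitly is that the data-based characterization \eqref{eq: lemma_B} is an \emph{exact} identity for the true $\mathcal{B}_i$ (guaranteed by the rank/consistency conditions of Lemma \ref{Lemma-rep} and Remark \ref{Remark-richness}), so substituting it introduces no approximation error — this is what makes the data-driven gain condition \eqref{eq: Theta} genuinely equivalent to the model-based one.
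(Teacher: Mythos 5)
Your proposal is correct and takes essentially the same approach as the paper: the paper's proof uses the Lyapunov function $\mathbb{V}_{\! i} = \frac{1}{2}\sigma_i^\top\sigma_i$, whose derivative is exactly the quantity $\sigma_i^\top\dot{\sigma}_i$ you bound, cancels the nominal terms through the transient dynamics \eqref{eq: transient_new}, invokes the matching condition $\mathcal{E}_i = \mathcal{B}_i\gamma_i$, bounds via $\lambda_{\min}$ and $\lambda_{\max}$ of $\mathcal{C}_i\mathcal{B}_i$ with $\mathcal{B}_i$ replaced by its data-based characterization \eqref{eq: lemma_B}, and concludes from the reachability condition together with $\sigma_i(t_0)=\bbzero_{m_i}$. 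Your additional remarks on the Filippov interpretation (handled in the paper in Theorem \ref{Theorem: Filippov}) and on the exactness of \eqref{eq: lemma_B} are consistent with the paper's treatment.
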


\begin{proof}
	Consider a Lyapunov function $\mathbb{V}_{\! i} : \R^{m_i} \to \Rpz$ defined as
	\begin{align*}
		\mathbb{V}_{\! i} = \frac{1}{2} \sigma_i^\top  \sigma_i.
	\end{align*}
	By exploiting \eqref{eq: transient_new}, one has
	\begin{align}\notag
		\dot{\mathbb{V}}_{\! i} & = \sigma_i^\top \dot{\sigma}_i = \sigma_i^\top (\dot{\sigma}_{0_i} + \dot{\zeta}_i)\\\notag
		& \! \overset{\eqref{eq: transient_new}}{=} \sigma_i^\top (\mathcal{C}_i \dot{x}_i - \mathcal{C}_i ((\Sip - \mathcal{D}_i \Wi) \mathcal{G}_i \mathcal{Z}_i(x_i) + \mathcal{D}_i w_i))\\\notag
		& \overset{\eqref{eq: final_subsys}}{=} \sigma_i^\top (\mathcal{C}_i ( \mathcal{A}_i \mathcal{Z}_i(x_i) + \mathcal{B}_i (u_i^\ast + u_{i}^{\mathrm{ISM}}) + \mathcal{D}_i w_i + \mathcal{E}_i (x_i, t))\\\label{New6}
		& \hspace{1cm} - \mathcal{C}_i (\mathcal{A}_i \mathcal{Z}_i(x_i) + \mathcal{B}_i u_i^\ast + \mathcal{D}_i w_i)).
	\end{align}
	Since $\mathcal{E}_i(x_i, t) = \mathcal{B}_i \gamma_i(x_i, t),$ and the common terms in \eqref{New6} cancel each other, one gets
	\begin{align*}
		\dot{\mathbb{V}}_{\! i} & = \sigma_i^\top \mathcal{C}_i \mathcal{B}_i (u_{i}^{\mathrm{ISM}} + \gamma_i)\\
		& \! \overset{\eqref{eq: ISM_U}}{=} \sigma_i^\top \! \Big(-\Theta_i \mathcal{C}_i \mathcal{B}_i \, \frac{\sigma_i}{\vert \sigma_i \vert} + \mathcal{C}_i \mathcal{B}_i \gamma_i \Big)\\
		& = -\Theta_i \sigma_i^\top \mathcal{C}_i \mathcal{B}_i \, \frac{\sigma_i}{\vert \sigma_i \vert} + \gamma_i^\top (\mathcal{C}_i \mathcal{B}_i)^\top \sigma_i\\
		& \! \overset{\eqref{eq: Gamma}}{\leq} -\Theta_i \lambda_{\min}(\mathcal{C}_i \mathcal{B}_i) \frac{\sigma_i^\top \sigma_i}{\vert \sigma_i \vert} + \Gamma_i^{\sup}  \lambda_{\max}(\mathcal{C}_i \mathcal{B}_i) \vert \sigma_i \vert\\
		& = -(\Theta_i \lambda_{\min}(\mathcal{C}_i \mathcal{B}_i) - \Gamma_i^{\sup}  \lambda_{\max}(\mathcal{C}_i \mathcal{B}_i)) \vert \sigma_i \vert\\
		& = -\eta_i \vert \sigma_i \vert,
	\end{align*}
	with $\eta_i$ being obtained as the following by substituting $\mathcal{B}_i$ with its data-based representation in \eqref{eq: lemma_B}:
	\begin{align*}
		\eta_i & = \Theta_i \lambda_{\min}\big(\mathcal{C}_i \big( \! (\Sip - (\barSip - \mathcal{D}_i \barWi) \mathcal Q_i - \mathcal{D}_i \Wi)\Ii^{\! \dagger} \big) \! \big) \notag\\
		& \hspace{0.4cm} - \Gamma_i^{\sup}  \lambda_{\max}\big(\mathcal{C}_i \big( \! (\Sip - (\barSip - \mathcal{D}_i \barWi) \mathcal Q_i - \mathcal{D}_i \Wi)\Ii^{\! \dagger} \big) \! \big).
	\end{align*}
	Now, it is clear that $\eta_i$ is positive under the satisfaction of condition \eqref{eq: Theta}, rendering $\dot{\mathbb{V}}_{\! i} < 0.$ Since the so-called \emph{reachability condition} holds, one can conclude that there exists $\bar{t}>0$ such that $\sigma_i(t)=\bbzero_{m_i}$ for any $t\geq \bar{t}$. Moreover, since in $t_0 = 0,$  $\sigma_i(t_0)=\bbzero_{m_i}$ (according to \eqref{eq: integral sliding variable}), then $\sigma(t)=\bbzero_{m_i}$ for any $t \in \Rpz$, which concludes the proof.
\end{proof}

With Lemma \ref{Lemma: rho} having been proposed, we now continue with presenting the following theorem, which proves that the motion equation of each \Subsys\ $\Upsilon_i$ in sliding mode is equal to its nominal dynamics $\Upsilon_i^\ast$ controlled by $u_i^\ast$.

\begin{theorem}[\textbf{Motion Equation in Sliding Mode}]\label{Theorem: Filippov}
	Consider a \Subsys\ $\Upsilon_i$ in \eqref{eq: final_subsys}. Given the control law $u_i = u_i^\ast + u_{i}^{\mathrm{ISM}}$, with $u_i^\ast$ as in \eqref{eq: lemma_u}, $u_{i}^{\mathrm{ISM}}$ as in \eqref{eq: ISM_U}, and the sliding variable selected as $\sigma_i(x_i) = \sigma_{0_i}(x_i) + \zeta_i(t)$, with $\zeta_i$ as in \eqref{eq: transient_new},  if the sliding mode $\sigma_i = \bbzero_{m_i}$ is enforced for any $t \in \Rpz$, then the system motion in sliding mode is
	\begin{equation}
		\dot x_i = \mathcal{A}_i \mathcal{Z}_i(x_i) + \mathcal{B}_i u_i^\ast + \mathcal{D}_i w_i, \label{eq: ISSsys}
	\end{equation}
	which is ISS according to Theorem \ref{Theorem-data-iss}.
\end{theorem}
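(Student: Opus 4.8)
The plan is to describe the closed-loop motion on the sliding manifold via Utkin's \emph{equivalent control} method together with Filippov's solution concept for discontinuous systems. Since the discontinuous term $u_{i}^{\mathrm{ISM}}$ in \eqref{eq: ISM_U} is undefined exactly on $\sigma_i(x_i) = \bbzero_{m_i}$, the motion constrained to the manifold must be obtained by replacing $u_{i}^{\mathrm{ISM}}$ with its equivalent control $u_{i,\mathrm{eq}}^{\mathrm{ISM}}$, i.e.\ the unique continuous input enforcing $\dot{\sigma}_i = \bbzero_{m_i}$ while $\sigma_i$ remains at the origin. Well-posedness and uniqueness of $u_{i,\mathrm{eq}}^{\mathrm{ISM}}$ are guaranteed by the standing condition $\mathcal{C}_i \mathcal{B}_i \succ 0$ (available, in its data-based form, from the setup of Lemma \ref{Lemma: rho}), and by Lemma \ref{Lemma: rho} the manifold is indeed invariant for every $t \in \Rpz$, so this description is legitimate.

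First I would differentiate $\sigma_i(x_i) = \sigma_{0_i}(x_i) + \zeta_i(t) = \mathcal{C}_i x_i + \zeta_i(t)$ along the full perturbed dynamics \eqref{eq: final_subsys} under $u_i = u_i^\ast + u_{i,\mathrm{eq}}^{\mathrm{ISM}}$, which gives $\dot{\sigma}_i = \mathcal{C}_i(\mathcal{A}_i \mathcal{Z}_i(x_i) + \mathcal{B}_i u_i^\ast + \mathcal{B}_i u_{i,\mathrm{eq}}^{\mathrm{ISM}} + \mathcal{D}_i w_i + \mathcal{E}_i(x_i,t)) + \dot{\zeta}_i$. Using the matching condition $\mathcal{E}_i(x_i,t) = \mathcal{B}_i \gamma_i(x_i,t)$ and the data-based form of $\dot{\zeta}_i$ in \eqref{eq: transient_new}, which by Lemma \ref{Lemma-rep} satisfies $(\Sip - \mathcal{D}_i \Wi)\mathcal{G}_i \mathcal{Z}_i(x_i) = \mathcal{A}_i \mathcal{Z}_i(x_i) + \mathcal{B}_i u_i^\ast$, the contributions $\mathcal{C}_i(\mathcal{A}_i \mathcal{Z}_i(x_i) + \mathcal{B}_i u_i^\ast + \mathcal{D}_i w_i)$ cancel, leaving $\dot{\sigma}_i = \mathcal{C}_i \mathcal{B}_i\big(u_{i,\mathrm{eq}}^{\mathrm{ISM}} + \gamma_i(x_i,t)\big)$. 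Imposing the invariance condition $\dot{\sigma}_i = \bbzero_{m_i}$ and invoking invertibility of $\mathcal{C}_i \mathcal{B}_i$ then yields $u_{i,\mathrm{eq}}^{\mathrm{ISM}} = -\gamma_i(x_i,t)$, i.e.\ the equivalent control exactly cancels the matched perturbation.

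Substituting $u_{i,\mathrm{eq}}^{\mathrm{ISM}} = -\gamma_i(x_i,t)$ back into \eqref{eq: final_subsys} and again using $\mathcal{E}_i(x_i,t) = \mathcal{B}_i \gamma_i(x_i,t)$, the terms $\mathcal{B}_i u_{i,\mathrm{eq}}^{\mathrm{ISM}}$ and $\mathcal{E}_i(x_i,t)$ annihilate each other, producing exactly \eqref{eq: ISSsys}. The ISS claim is then immediate: \eqref{eq: ISSsys} is precisely the nominal closed-loop dynamics of $\Upsilon_i^\ast$ driven by $u_i^\ast = \mathbfcal{K}_i \mathcal{Z}_i(x_i)$, for which Theorem \ref{Theorem-data-iss} has already exhibited the ISS Lyapunov function $\V_i(x_i) = x_i^\top \MPI x_i$ satisfying \eqref{eq: ISS-con1}--\eqref{eq: ISS-con2}.

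The main obstacle is conceptual rather than computational: one must argue that the equivalent-control (Filippov) motion on the manifold is a faithful description of the closed-loop behavior. This rests on three ingredients already in place—Lemma \ref{Lemma: rho} (the sliding mode is established for all $t \in \Rpz$, so the trajectory starts on and never leaves the manifold), the condition $\mathcal{C}_i \mathcal{B}_i \succ 0$ (uniqueness of $u_{i,\mathrm{eq}}^{\mathrm{ISM}}$), and the matching condition (so that $\mathcal{E}_i$ lies in $\mathsf{span}\{\mathcal{B}_i\}$ and can be fully absorbed by the equivalent control). Once these are invoked, the remaining algebraic cancellations that collapse the dynamics to \eqref{eq: ISSsys} are routine.
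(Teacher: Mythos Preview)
Your proposal is correct and follows essentially the same route as the paper: compute the equivalent control from $\dot\sigma_i=\bbzero_{m_i}$ in Filippov's sense, use $\mathcal{C}_i\mathcal{B}_i\succ 0$ and the matching condition to obtain $u_{i,\mathrm{eq}}^{\mathrm{ISM}}=-\gamma_i$, substitute back to recover the nominal dynamics \eqref{eq: ISSsys}, and invoke Theorem~\ref{Theorem-data-iss} for the ISS conclusion. Your write-up is in fact more detailed than the paper's, making the cancellation via \eqref{eq: transient_new} and Lemma~\ref{Lemma-rep} explicit, but the underlying argument is identical.
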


\begin{proof}
	By applying Lemma \ref{Lemma: rho}, it is ensured that the ISM is $\sigma_i = \bbzero_{m_i}$ for all $t \in \Rpz$, which also implies that $\dot{\sigma}_i = \bbzero_{m_i}$ in Filippov's sense \cite{filippov2013differential}. The latter enables the computation of the so-called equivalent control, denoted as $u_{i_\text{eq}}^{\mathrm{ISM}}$, which is expressed as
	\[
	\mathcal{C}_i \mathcal{B}_i (u_{i_\text{eq}}^{\mathrm{ISM}} + \gamma_i) = \bbzero_{m_i}.
	\]
	Under the assumption that $\mathcal{C}_i \mathcal{B}_i\succ 0$, it follows that $u_{i_\text{eq}}^{\mathrm{ISM}} = -\gamma_i$. By substituting this equivalent control into the \Subsys\ $\Upsilon_i$ in \eqref{eq: final_subsys}, which is now controlled by $u_i = u_i^\ast + u_{i_\text{eq}}^{\mathrm{ISM}}$, one can conclude that the nominal dynamic \eqref{eq: ISSsys} is obtained as the motion equation of each \Subsys\ $\Upsilon_i$ in sliding mode, which according to Theorem \ref{Theorem-data-iss}, is ISS, thereby completing the proof.
\end{proof}

With both components of each subsystem's controller designed using data, the following section presents our compositional condition grounded in \emph{small-gain reasoning} to develop a control Lyapunov function for the nominal network and its associated controller. This approach utilizes ISS Lyapunov functions of the individual nominal subsystems obtained from data. We subsequently extend this result to the derived with perturbations, ensuring the GAS property of the entire network.

\section{Compositional Framework}\label{Section: Compositional}

We analyze the nominal interconnected network $\Upsilon^\ast = \mathscr{N}(\Upsilon_1^\ast, \ldots, \Upsilon_N^\ast)$ by formulating a small-gain compositional condition and constructing a CLF derived from the ISS Lyapunov functions of its nominal subsystems. To achieve this, we define 
\begin{align}\label{new11}
	\mathcal{H} \coloneq \mathsf{diag}(\kappa_1, \dots, \kappa_N), \hat{\varrho} \coloneq \{\hat{\rho}_{ij}\}, ~\text{where}~ \hat{\rho}_{ij} = \frac{\rho_i}{\alpha_{1_j}},
\end{align}
with $\hat{\rho}_{ii} = 0$ for all $i \in \{1, \dots, N\}$.

The following theorem establishes the conditions under which a CLF for an unknown nominal network can be constructed using the data-driven ISS Lyapunov functions of its nominal subsystems.

\begin{theorem}[\textbf{Compositional Results}]\label{Theorem: comp}
	Consider a nominal interconnected network $\Upsilon^\ast = \mathscr{N}(\Upsilon_1^\ast, \ldots, \Upsilon_N^\ast),$ consisting of $N \in \Np$ nominal subsystems $\Upsilon_i^\ast$. Assume that each nominal subsystem $\Upsilon_i^\ast$ is equipped with a data-driven ISS Lyapunov function $\V_i$, as constructed in Theorem~\ref{Theorem-data-iss}. If the following condition holds:
	\begin{align}
		\mathds{1}_N^\top (-\mathcal{H} + \hat{\varrho}) \coloneq [\Xi_1 ; \dots ; \Xi_N]^\top < \bbzero_{1 \times N}, \label{eq: comp con}
	\end{align}
	or equivalently, $\Xi_i < 0$ for all $i \in \{1, \dots, N\}$, then
	\begin{align}\label{New9}
		\V(x) \coloneq \sum_{i=1}^N \V_i(x_i) = \sum_{i=1}^N x_i^\top \MPI x_i
	\end{align}
	is a CLF for the nominal network $\Upsilon^\ast$, where
	\begin{align*}
		&\kappa \coloneq -\Xi, \quad \max_{1 \leq i \leq N} \Xi_i < \Xi < 0, \\
		&\alpha_1 \coloneq \min_i \{\alpha_{1_i}\}, \quad \alpha_2 \coloneq \max_i \{\alpha_{2_i}\}.
	\end{align*}
	Furthermore, $ u^\ast \!=\! [u_1^\ast;\dots; u_N^\ast]$, where $u_i^\ast = \mathbfcal{K}_i \mathcal{Z}_i(x_i), i \in \{1, \ldots, N\},$ with $\mathbfcal{K}_i = \Ii \begin{bmatrix}
		\mathcal{Y}_i \MPI & \mathcal{G}_{2_i}
	\end{bmatrix}\!$, is a controller that ensures the GAS property of the nominal network.
\end{theorem}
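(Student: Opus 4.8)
The plan is to show that $\V$ defined in \eqref{New9}, paired with the concatenated controller $u^\ast = [u_1^\ast;\dots;u_N^\ast]$ (each $u_i^\ast$ being the data-driven ISS controller of Theorem~\ref{Theorem-data-iss}), satisfies the two hypotheses of Theorem~\ref{Theorem: model-based Network}; the GAS conclusion for $\Upsilon^\ast$ then follows directly from that theorem. The bound \eqref{eq: clf-con1-network} is immediate: since $|x|^2 = \sum_{i=1}^N|x_i|^2$ and $\V(x) = \sum_{i=1}^N\V_i(x_i)$, summing \eqref{eq: ISS-con1} over $i$ yields $\min_i\{\alpha_{1_i}\}\,|x|^2 \le \V(x) \le \max_i\{\alpha_{2_i}\}\,|x|^2$, i.e.\ \eqref{eq: clf-con1-network} with $\alpha_1 = \min_i\{\alpha_{1_i}\}$ and $\alpha_2 = \max_i\{\alpha_{2_i}\}$.

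For \eqref{eq: clf-con2-network}, I would first decompose the network Lie derivative. Since $\partial_x\V(x) = [\partial_{x_1}\V_1(x_1)\ \cdots\ \partial_{x_N}\V_N(x_N)]$ is compatible with the block structure of $\mathcal{A}$ (diagonal blocks $\mathcal{A}_i$, off-diagonal blocks $\hat{\mathcal{D}}_{ij}$), $\mathcal{B} = \mathsf{blkdiag}(\mathcal{B}_1,\dots,\mathcal{B}_N)$, $\mathcal{Z}(x) = [\mathcal{Z}_1(x_1);\dots;\mathcal{Z}_N(x_N)]$, and since $\hat{\mathcal{D}}_{ij}\mathcal{Z}_j(x_j) = [\mathcal{D}_{ij}\ \bbzero_{n_i\times(z_j-n_j)}][x_j;\mathcal{M}_j(x_j)] = \mathcal{D}_{ij}x_j = \mathcal{D}_{ij}w_{ij}$ under the interconnection constraint \eqref{eq: internal-connection}, the $i$-th block of $\mathcal{A}\mathcal{Z}(x) + \mathcal{B}u^\ast$ equals $\mathcal{A}_i\mathcal{Z}_i(x_i) + \mathcal{B}_i u_i^\ast + \mathcal{D}_i w_i$ with $w_i = [x_1;\dots;x_{i-1};x_{i+1};\dots;x_N]$. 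Hence $\mathsf{L}\V(x) = \sum_{i=1}^N\mathsf{L}\V_i(x_i)$, and applying the dissipation inequality \eqref{eq: ISS-con2} to each summand gives $\mathsf{L}\V(x) \le \sum_{i=1}^N\big(-\kappa_i\V_i(x_i) + \rho_i|w_i|^2\big)$.

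Next I would absorb the internal-input terms into the subsystem Lyapunov functions. From $|w_i|^2 = \sum_{j\ne i}|x_j|^2$ and the lower bound in \eqref{eq: ISS-con1}, $|x_j|^2 \le \V_j(x_j)/\alpha_{1_j}$, so $\rho_i|w_i|^2 \le \sum_{j\ne i}\hat\rho_{ij}\V_j(x_j) = \sum_{j=1}^N\hat\rho_{ij}\V_j(x_j)$ using $\hat\rho_{ii}=0$ from \eqref{new11}. Summing over $i$, interchanging the order of summation, and collecting the coefficient of each $\V_j(x_j)$,
\begin{align*}
	\mathsf{L}\V(x) \;\le\; \sum_{j=1}^N\Big(-\kappa_j + \sum_{i=1}^N\hat\rho_{ij}\Big)\V_j(x_j) \;=\; \sum_{j=1}^N\Xi_j\,\V_j(x_j),
\end{align*}
the last equality being the entrywise reading of $\mathds{1}_N^\top(-\mathcal{H}+\hat\varrho) = [\Xi_1;\dots;\Xi_N]^\top$ from \eqref{eq: comp con}. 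Here the small-gain condition enters: as $\Xi_j<0$ for every $j$ and $\V_j(x_j)\ge0$, choosing $\Xi$ with $\max_{1\le i\le N}\Xi_i < \Xi < 0$ gives $\Xi_j\V_j(x_j)\le\Xi\V_j(x_j)$ for all $j$, whence $\mathsf{L}\V(x)\le\Xi\sum_{j=1}^N\V_j(x_j) = \Xi\V(x) = -\kappa\V(x)$ with $\kappa=-\Xi>0$. This is exactly \eqref{eq: clf-con2-network}, so $\V$ is a CLF for $\Upsilon^\ast$ and Theorem~\ref{Theorem: model-based Network} delivers the GAS property under $u^\ast = [u_1^\ast;\dots;u_N^\ast]$.

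The argument is analytically light; the care required is essentially bookkeeping. First, one must verify that the cross-terms in the network Lie derivative collapse precisely to $\mathcal{D}_i w_i$ — this hinges on the zero-padding in $\hat{\mathcal{D}}_{ij} = [\mathcal{D}_{ij}\ \bbzero_{n_i\times(z_j-n_j)}]$ together with $w_{ij}=x_j$ — so that $\mathsf{L}\V$ genuinely splits as $\sum_i\mathsf{L}\V_i$. Second, the reindexing $\sum_i\sum_j\hat\rho_{ij}\V_j = \sum_j\big(\sum_i\hat\rho_{ij}\big)\V_j$ must be matched against $\Xi_j$ being the $j$-th column sum of $-\mathcal{H}+\hat\varrho$ rather than a row sum. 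No feasibility or positivity obstruction arises beyond $\MPI = \Phi_i^{-1}\succ0$, which is already supplied by Theorem~\ref{Theorem-data-iss}.
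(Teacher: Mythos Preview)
Your proposal is correct and follows essentially the same route as the paper's proof: both establish \eqref{eq: clf-con1-network} by summing the quadratic bounds \eqref{eq: ISS-con1}, then for \eqref{eq: clf-con2-network} split $\mathsf{L}\V$ as $\sum_i\mathsf{L}\V_i$, apply \eqref{eq: ISS-con2}, bound $|w_i|^2=\sum_{j\ne i}|x_j|^2$ via $|x_j|^2\le\V_j(x_j)/\alpha_{1_j}$, reindex to obtain $\sum_j\Xi_j\V_j(x_j)$, and finish with $\Xi_j\le\Xi<0$. If anything, you are more explicit than the paper about why the network Lie derivative decomposes blockwise (via $\hat{\mathcal{D}}_{ij}\mathcal{Z}_j(x_j)=\mathcal{D}_{ij}x_j$), and you bypass the intermediate $\max$ formulation the paper introduces, arriving at the bound $\mathsf{L}\V\le\Xi\V$ more directly.
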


\begin{proof}
	We begin by demonstrating that condition \eqref{eq: clf-con1-network} is fulfilled. Based on condition \eqref{eq: ISS-con1}, it follows that
	\begin{align*}
		\vert x \vert^2 = \sum_{i = 1}^{N}\vert x_i \vert^2 \leq \sum_{i = 1}^{N} \frac{1}{\alpha_{1_i}}\V_i(x_i) \leq \delta \sum_{i = 1}^{N} \V_i(x_i) = \delta \V(x),
	\end{align*}
	where $\delta = \underset{i}{\max}\{\frac{1}{\alpha_{1_i}}\}$. Then, through selecting $\alpha_1 = \frac{1}{\delta} = \underset{i}{\min}\{\alpha_{1_i}\}$, one gets
		\begin{align*}
			\alpha_1\vert x \vert^2 \leq \V(x).
		\end{align*}
		Similarly, from the upper bound in \eqref{eq: ISS-con1}, we have
		\begin{align*}
			\V(x) \!=\! \sum_{i = 1}^{N} \V_i(x_i) \!\leq \!\sum_{i = 1}^{N} \!\alpha_{2_i} \vert x_i \vert^2 \leq \alpha_2\! \sum_{i = 1}^{N} \!\vert x_i \vert^2 \!= \!\alpha_2 \vert x \vert^2,
		\end{align*}
		with $\alpha_2=\underset{i}{\max}\{\alpha_{2_i}\}$, indicating that condition \eqref{eq: clf-con1-network} is met with $\alpha_1 = \underset{i}{\min}\{\alpha_{1_i}\}$ and $\alpha_2=\underset{i}{\max}\{\alpha_{2_i}\}$.
		
		Next, we demonstrate that condition \eqref{eq: clf-con2-network} is also satisfied. Utilizing condition \eqref{eq: ISS-con2}, the compositional condition $\mathds{1}_N^{\top}(-\mathcal{H} + \hat{\varrho}) < \bbzero_{1 \times N}$, and by defining
		\begin{align}\label{new12}
			-\kappa s & \coloneq \max \big\{\mathds 1_N^{\top}(-\mathcal H+\hat \varrho) \overline{\V}(x) \mid \mathds 1_N^{\top} \overline{\V}(x)=s\big\}
		\end{align}
		where $\overline{\V}(x)=[\V_1( x_1) ; \dots ; \V_N(x_N)]$, we obtain the subsequent chain of inequalities:
		\begin{align*}
			&\mathsf{L} \V(x) \\
			&\overset{\eqref{New9}}{=} \mathsf{L}\sum_{i=1}^{N} \V_i(x_i) = \sum_{i=1}^{N} \mathsf{L} \V_i(x_i) \overset{\eqref{eq: ISS-con2}}{\leq} \sum_{i=1}^{N}\big(\!-\kappa_i \V_i(x_i)+\rho_i\vert w_i\vert^2\big)\\
			&=     \sum_{i=1}^{N}  \big(  -  \kappa_i \V_i(x_i)  +    \sum_{\substack{j=1 \\ j\neq i}}^{N}  \rho_i\vert w_{ij}\vert^2\big)     \\
			& =     \sum_{i=1}^{N}  \big(  -  \kappa_i \V_i(x_i)  +    \sum_{\substack{j=1 \\ j\neq i}}^{N}  \rho_i\vert x_j\vert^2\big)\\
			&\overset{\eqref{eq: ISS-con1}}{\leq}     \sum_{i=1}^{N}  \big(  -  \kappa_i \V_i(x_i)  +    \sum_{\substack{j=1 \\ j\neq i}}^{N}  \frac{\rho_i}{\alpha_{1_j}}  \V_j(x_j)  \big)    \\
			&\overset{\eqref{new11}}{=}    \sum_{i=1}^{N}  \big(  -  \kappa_i \V_i(x_i)  +    \sum_{\substack{j=1 \\ j\neq i}}^{N}  \hat \rho_{i j}  \V_j(x_j)  \big)\\
			&\overset{\eqref{new11}}{=} \mathds 1_N^\top(-\mathcal H + \hat \varrho)\underbrace{[\V_1(x_1); \dots; \V_N(x_N)]}_{\overline{\V}(x)}\overset{\eqref{new12}}{\leq} -\kappa \V(x).
		\end{align*}
		We now illustrate that $\kappa>0$ by choosing $\kappa = -\Xi$. Since $\mathds 1_N^{\top}(-\mathcal H+\hat \varrho)\coloneq[\Xi_1 ; \dots ; \Xi_N]^{\top}<\bbzero_{1 \times N}$, and $\underset{1 \leq i \leq N}{\max}\Xi_i<\Xi<0$, one gets
		\begin{align*}
			&-\kappa s\\
			&\overset{\eqref{new12}}{=}\mathds 1_N^{\top}(-\mathcal H+\hat \varrho) \overline{\V}(x)\!\overset{\eqref{eq: comp con}}{=}\![\Xi_1 ; \dots ; \Xi_N]^{\top}[\V_1(x_1);\dots ; \V_N(x_N)]\\
			&=\Xi_1 \V_1(x_1)\!+\!\dots\!+\!\Xi_N \V_N(x_N) \!\leq\! \Xi\big(\underbrace{\V_1(x_1)\!+\!\dots\!+\!\V_N(x_N)}_{\mathds 1_N^{\top} \overline{\V}(x) = s}\big)\\
			&\overset{\eqref{new12}}{=}\Xi s.
		\end{align*}
		Subsequently, $-\kappa s \leq \Xi s$, and consequently, $-\kappa \leq \Xi$ as $s$ is positive. Since $\underset{1 \leq i \leq N}{\max}\Xi_i<\Xi<0$, then $\kappa=-\mu > 0$, implying that condition \eqref{eq: clf-con2-network} is met, completing the proof.
\end{proof}

We now move on to extending the compositional result to the interconnected network $\Upsilon$ in the presence of external perturbations, as proposed in the following corollary.

\begin{algorithm}[t!]
		\caption{Data-driven design of the network's CLF and its corresponding controller}\label{Alg:1}
		\begin{center}
			\begin{algorithmic}[1]
				\REQUIRE 
				An available dictionary $\mathcal{Z}_i(x_i)$
                \FOR{$i = 1, \dots, N$}
                \STATE Collect two input-state trajectories $\Ii, \Si, \Wi, \Sip,$ $ \barSi, \barWi, \barSip$ as in \eqref{eq: data_ct} \label{STEP2}
                \STATE Given $\mathcal{Z}_i(x_i),$ construct $\PD$ and $\barPD$ as in \eqref{eq: PD}
                \STATE Compute $\mathcal{Q}_i$ from \eqref{eq: lemma_eq2}
                \STATE For a fixed $\kappa_i$ and $\mu_i,$ solve \eqref{eq : thm1}-\eqref{eq : thm4}, and design $\mathcal{G}_{2_i}, \mathcal{Y}_i,$ and $\Phi_i \succ 0$
                \STATE Given $\MPI = \Phi_i^{-1}$, compute $\alpha_{1_i}, \alpha_{2_i},$ and $\rho_i$ as in Theorem \ref{Theorem-data-iss}
                \STATE Compute $\V_i(x_i) = x_i^\top \MPI x_i$ as an ISS Lyapunov function for $\Upsilon_i^\ast$ with its local controller $u_i^\ast = \Ii \begin{bmatrix}
	               \mathcal{Y}_i \MPI & \mathcal{G}_{2_i}
	               \end{bmatrix}\mathcal{Z}_i(x_i)$ as in Theorem \ref{Theorem-data-iss}
                \STATE Determine $\mathcal{C}_i$ such that $\mathcal{C}_i \big( \! (\Sip - (\barSip - \mathcal{D}_i \barWi) \mathcal Q_i - \mathcal{D}_i \Wi)\Ii^{\! \dagger} \big) \succ 0$
                \STATE Construct $\sigma_{0_i}(x_i) = \mathcal{C}_i x_i$ and compute $\zeta_i$ from \eqref{eq: transient_new}
                \STATE Design $\Theta_i$ according to \eqref{eq: Theta}
                \STATE Obtain each local integral sliding manifold as $\sigma_i(x_i) = \sigma_{0_i}(x_i) + \zeta_i(t),$ and the local ISM controller $u_{i}^{\mathrm{ISM}}$ as in \eqref{eq: ISM_U}
		          \ENDFOR
                \IF{compositional condition \eqref{eq: comp con} is met}
                \STATE $\V(x) = \sum_{i=1}^N x_i^\top \MPI x_i$ is a CLF for the network, with $ \sigma \!=\! [\sigma_1;\dots; \sigma_N]$ being the integral sliding manifold of the network, where $\sigma_i(x_i) = \sigma_{0_i}(x_i) + \zeta_i(t)$, and with $ u \!=\! [u_1;\dots; u_N]$  being the network's GAS controller, where $u_i = u_i^\ast + u_{i}^{\mathrm{ISM}},$ according to Corollary \ref{Corollary: ext}
                \ELSE
                \STATE Return to Step \ref{STEP2}; collect different trajectories—potentially with a longer horizon—and proceed with the rest of steps
                \ENDIF
				\ENSURE
				CLF $\V(x) = \sum_{i=1}^N x_i^\top \MPI x_i$, integral sliding manifold $ \sigma \!=\! [\sigma_1;\dots; \sigma_N],$ with $\sigma_i(x_i) = \sigma_{0_i}(x_i) + \zeta_i(t),$ and GAS controller $ u \!=\! [u_1;\dots; u_N],$ with $u_i = u_i^\ast + u_{i}^{\mathrm{ISM}}$ 
			\end{algorithmic}
		\end{center}
	\end{algorithm}

\begin{corollary}[\textbf{Network's GAS Property}]\label{Corollary: ext}
	Given an interconnected network $\Upsilon = \mathscr{N}(\Upsilon_1, \ldots, \Upsilon_N),$ which consists of $N \in \Np$ \Subsys\ $\Upsilon_i,$ its controller $ u \!=\! [u_1;\dots; u_N],$ with $u_i = u_i^\ast + u_{i}^{\mathrm{ISM}}, i \in \{1, \ldots, N\},$ and an integral sliding variable selected as $ \sigma \!=\! [\sigma_1;\dots; \sigma_N],$ with $\sigma_i(x_i) = \sigma_{0_i}(x_i) + \zeta_i(t),$ if the sliding control gain is selected as $\Theta \!=\! [\Theta_1;\dots; \Theta_N],$ with $\Theta_i$ satisfying \eqref{eq: Theta} for all $i \in \{1,\dots,N\}$, then $\sigma = \bbzero_{m}$ is enforced for all $t \in \Rpz$. Consequently, the network motion in sliding mode is equal to the nominal controlled network $\Upsilon^\ast\!$, which is endowed with the GAS property.
\end{corollary}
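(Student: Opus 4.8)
The plan is to obtain the corollary as a componentwise application of the per-subsystem results, closed off by the compositional Theorem~\ref{Theorem: comp}. First I would note that the stacked sliding variable $\sigma = [\sigma_1;\dots;\sigma_N]$ is zero precisely when each block $\sigma_i$ is zero, so proving $\sigma = \bbzero_m$ for all $t \in \Rpz$ reduces to showing $\sigma_i(x_i) = \bbzero_{m_i}$ for every $i \in \{1,\dots,N\}$ and all $t$. For each $\Upsilon_i$ the ingredients are exactly those required by Lemma~\ref{Lemma: rho}: the control is $u_i = u_i^\ast + u_i^{\mathrm{ISM}}$ with $u_i^\ast = \mathbfcal{K}_i \mathcal{Z}_i(x_i)$ and $u_i^{\mathrm{ISM}}$ as in \eqref{eq: ISM_U}; the sliding variable is $\sigma_i(x_i) = \sigma_{0_i}(x_i) + \zeta_i(t)$ with $\zeta_i$ governed by \eqref{eq: transient_new}; the matching/boundedness hypothesis \eqref{eq: Gamma} holds; $\mathcal{C}_i$ is designed so that the data-based surrogate of $\mathcal{C}_i\mathcal{B}_i$ is positive definite; and $\Theta_i$ satisfies \eqref{eq: Theta}. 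Hence Lemma~\ref{Lemma: rho} applies verbatim to each subsystem and gives $\sigma_i(x_i) = \bbzero_{m_i}$ for all $t \in \Rpz$; stacking over $i$ yields the first claim $\sigma = \bbzero_m$ for all $t \in \Rpz$.

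Next I would identify the sliding-mode motion of the whole network. Since $\sigma_i = \bbzero_{m_i}$ is enforced for all $t$, Theorem~\ref{Theorem: Filippov} applies to each $\Upsilon_i$ separately: the equivalent control is $u_{i_\text{eq}}^{\mathrm{ISM}} = -\gamma_i$, it cancels the matched perturbation $\mathcal{E}_i(x_i,t) = \mathcal{B}_i\gamma_i(x_i,t)$, and the reduced $i$-th motion is $\dot x_i = \mathcal{A}_i\mathcal{Z}_i(x_i) + \mathcal{B}_i u_i^\ast + \mathcal{D}_i w_i$, i.e., the nominal subsystem $\Upsilon_i^\ast$ driven by $u_i^\ast$. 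Concatenating these $N$ reduced equations, and substituting the interconnection constraint $w_{ij} = x_j$ from \eqref{eq: internal-connection} together with the block structure of $(\mathcal{A},\mathcal{Z},\mathcal{B})$ in Definition~\ref{def: network} (in particular $\hat{\mathcal{D}}_{ij} = [\mathcal{D}_{ij}\;\;\bbzero_{n_i\times(z_j-n_j)}]$), one recovers exactly $\dot x = \mathcal{A}\mathcal{Z}(x) + \mathcal{B}u^\ast$, which is the nominal controlled network $\Upsilon^\ast$ with $u^\ast = [u_1^\ast;\dots;u_N^\ast]$. Finally, since each $\Upsilon_i^\ast$ carries the data-driven ISS Lyapunov function $\V_i$ constructed in Theorem~\ref{Theorem-data-iss} and the small-gain condition \eqref{eq: comp con} is assumed to hold, Theorem~\ref{Theorem: comp} tells us that $\V(x) = \sum_{i=1}^N x_i^\top \MPI x_i$ is a CLF for $\Upsilon^\ast$ and that $u^\ast$ renders $\Upsilon^\ast$ GAS in the sense of Definition~\ref{def: GAS}. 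Therefore the network's motion in sliding mode inherits the GAS property, completing the argument.

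The hard part will be the gluing step: one must make sure the $N$ equivalent-control reductions are mutually consistent, i.e., that cancelling $\mathcal{E}_i$ in the $i$-th input channel does not perturb the other channels or the interconnection terms. This is exactly where the structural assumptions do the work---the perturbation is block-diagonal through $\mathcal{E}(x,t) = [\mathcal{E}_1(x_1,t);\dots;\mathcal{E}_N(x_N,t)]$ and $\mathcal{B} = \mathsf{blkdiag}(\mathcal{B}_1,\dots,\mathcal{B}_N)$, so the ISM channels are decoupled, while the only coupling, carried by the off-diagonal blocks $\hat{\mathcal{D}}_{ij}$, enters solely through the internal inputs $w_i$, which the discontinuous ISM term never touches. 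A related subtlety is well-posedness of the \emph{simultaneous} sliding motion in Filippov's sense: because the discontinuity surfaces $\{\sigma_i = \bbzero_{m_i}\}$ live in disjoint input channels and each individually satisfies the reachability condition of Lemma~\ref{Lemma: rho}, the product surface $\{\sigma = \bbzero_m\}$ supports a well-defined sliding mode and no argument beyond the componentwise one is needed---this is the point I would want to spell out carefully.
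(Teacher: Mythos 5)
Your proposal is correct and follows essentially the same route as the paper's own proof: apply Lemma~\ref{Lemma: rho} componentwise to enforce $\sigma_i = \bbzero_{m_i}$ (hence $\sigma = \bbzero_m$), invoke Theorem~\ref{Theorem: Filippov} for each subsystem to identify the sliding-mode motion with the nominal controlled dynamics, and then conclude GAS of the network from Theorem~\ref{Theorem: comp}. The extra care you give to the gluing step (block-diagonal $\mathcal{B}$ and $\mathcal{E}$, coupling only through $w_i$) is a sound elaboration of what the paper leaves implicit, not a different argument.
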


\begin{proof}
	The proof directly follows from Lemma \ref{Lemma: rho}, Theorem \ref{Theorem: Filippov}, and Theorem \ref{Theorem: comp}. More precisely, since each $\Theta_i$ satisfies condition \eqref{eq: Theta}, it follows that each integral sliding variable $\sigma_i (x_i) = \bbzero_{m_i}$ for all $t \in \Rpz$ according to Lemma \ref{Lemma: rho}. Since $ \sigma \!=\! [\sigma_1;\dots; \sigma_N],$ we readily get $\sigma = \bbzero_{m},$ for all $t \in \Rpz$. Therefore, according to Theorem \ref{Theorem: Filippov}, it is clear that each subsystem's motion equation is equal to its nominal controlled dynamic. As a result, one can conclude that the result of Theorem \ref{Theorem: comp} extends to the network with perturbations, ensuring that the network $\Upsilon$ possesses the GAS property.
\end{proof}

We present Algorithm \ref{Alg:1}, outlining all the necessary steps for data-driven designing of the local ISS Lyapunov functions and associated controllers, local ISM controllers, and the network's CLF together with its corresponding sliding manifold and controller, rendering the GAS property for the interconnected network.

\section{Simulation Results}\label{Section: Simulation}

\begin{table*}[h!]
		\centering
		\caption{A broad overview of our data-driven results across \emph{five distinct topologies} of interconnected networks together with two computational cost indices \textsf{RT} and \textsf{MU}, where \textsf{RT} denotes the running time (seconds), and \textsf{MU} indicates the memory usage (Mbit), required for each subsystem.
        \label{table: info}}
		\resizebox{\linewidth}{!}{\begin{tabular}{@{}ccccccc@{}}
			\toprule
			\multirow{2}{*}[-0.25em]{Network} & \multirow{2}{*}[-0.25em]{Available $\mathcal Z_i(x_i)$} & \multicolumn{3}{c}{Network parameters} & \multicolumn{2}{c}{Computation costs}\\
			%			\addlinespace[10pt]
			\cmidrule(lr){3-5} \cmidrule(lr){6-7} \vspace{-0.25cm}\\
			{} & {} & Topology & $N$ & $\mathcal{T}$  & \textsf{RT} (sec) & \textsf{MU} (Mb)\\
			\midrule
			\myalign{l}{\multirow{5}{*}{}} & \multirow{5}{*}{$[x_{i_1}; x_{i_2}; x_{i_1}^2; x_{i_1} x_{i_2}; x_{i_2}^2; \sin{(x_{i_1}x_{i_2})}; \cos{(x_{i_1}x_{i_2})};\ln{(1 + x_{i_1}^2)}; \ln{(1 + x_{i_2}^2)}]$} & \myalign{l}{Fully connected}  & $1000$ & $10$  & $< 0.2$ & $8.80$ \\
			{} & {} & \myalign{l}{Ring}   & $2000$  & $10$ & $< 0.2$ & $  32.45$\\
			% \midrule
			{} & {}  &  \myalign{l}{Binary tree}   &	$4095$	&	$ 10	$		&	$< 0.2$	& $134.60$\\
			{}	& {} &	\myalign{l}{Star}	&		$2000$	&		$10$	&	$< 0.2$	& $32.45$	\\
			% \midrule
			{  }	&  {}	&	\myalign{l}{Line}	&		$2000$	&		$10$	&		$< 0.2$ & $32.45$	\\
			\bottomrule
		\end{tabular}}
	\end{table*}

\begin{figure}[t!]
    \centering
    \begin{subfigure}[b]{0.18\textwidth}
        \includegraphics[width=\textwidth]{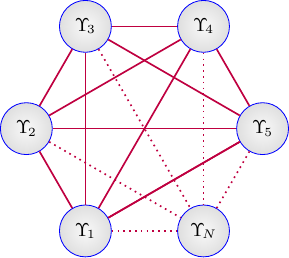}
        \caption{Fully connected}
        \label{fig:subfig1}
    \end{subfigure}\vspace{0.4cm}
    \hfill
    \begin{subfigure}[b]{0.18\textwidth}
        \includegraphics[width=\textwidth]{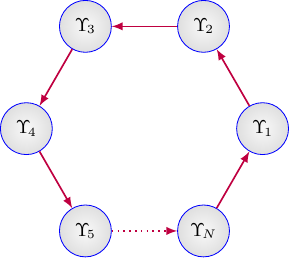}
        \caption{Ring}
        \label{fig:subfig2}
    \end{subfigure}
    \hfill
    \begin{subfigure}[b]{0.22\textwidth}
        \includegraphics[width=\textwidth]{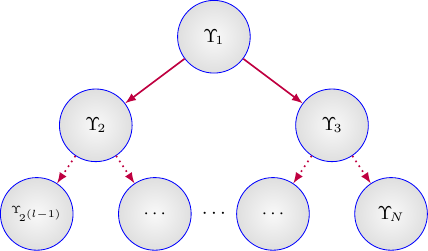}
        \caption{Binary tree}
        \label{fig:subfig3}
    \end{subfigure}
    \hfill
    \begin{subfigure}[b]{0.18\textwidth}
        \includegraphics[width=\textwidth]{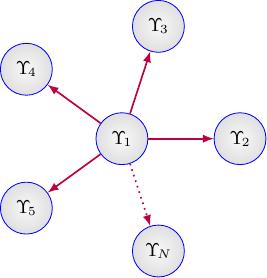}
        \caption{Star}
        \label{fig:subfig4}
    \end{subfigure}\vspace{0.4cm}
    \hfill
    \begin{subfigure}[b]{0.25\textwidth}
        \includegraphics[width=\textwidth]{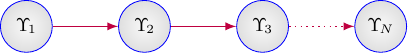}
        \caption{Line}
        \label{fig:subfig5}
    \end{subfigure}
    
    \caption{The distinct interconnection topologies used in the simulations (cf. Table \ref{table: info}). We note that in the binary tree topology, the number of subsystems is computed as $N = 2^l - 1$, where $l$ denotes the number of tree levels.}
    \label{fig: topologies}
\end{figure}

In this section, we demonstrate the effectiveness of our compositional data-driven approach by employing it for a large-scale interconnected network with complex nonlinearities. Using this network as a backbone, we examine five distinct interconnection topologies—fully connected, ring, binary tree, star, and line (cf. Figure \ref{fig: topologies})—and show that our approach ensures the GAS property of these networks. This holds despite the unknown dynamics of each subsystem (\emph{i.e.,} the matrices $\mathcal{A}_i$ and $\mathcal{B}_i$ are unknown, with only an \emph{exaggerated} dictionary \eqref{eq: dictionary} available for each subsystem) and the presence of external perturbations. For the reader's convenience, Table \ref{table: info} provides a general yet concise overview of these scenarios. All simulations were performed using \textsc{Matlab} \textit{R2023b} on a MacBook Pro (Apple M2 Max, 32GB memory). To maintain compactness, the time evolution of the signals is illustrated only for two representative topologies: the \emph{fully interconnected topology}, as the most challenging and strong case, and the \emph{line topology}.

Our central purpose is to design a CLF and its corresponding controller for the interconnected network with an unknown mathematical model. To achieve this, we collect two sets of input-state trajectories from nominal subsystems, as described in \eqref{eq: data_ct} and \eqref{eq: PD}, ensuring that conditions \eqref{eq : thm1}-\eqref{eq : thm4} are satisfied. This enables the construction of ISS Lyapunov functions and their corresponding local controllers for all nominal subsystems. This step is subsequently followed by designing a local ISM controller for each subsystem to ensure robustness against matched perturbations, utilizing Lemma \ref{Lemma: rho}. Ultimately, leveraging the compositional results of Theorem \ref{Theorem: comp} and Corollary \ref{Corollary: ext}, we design the CLF and its corresponding controller for the interconnected network in a compositional manner, ensuring the GAS property of the network in the presence of external perturbations. In the following subsections, we provide detailed descriptions of each scenario and present the corresponding results.

\subsection{Fully-connected Topology}\label{subsec:fully}
In this subsection, we consider a fully-connected network with $1000$ subsystems. It is worth noting that, while we choose a homogeneous network for simplicity, our framework is also capable of handling networks with \emph{heterogeneous} subsystems. Every subsystem has two state variables $x_i = [x_{i_1}; \, x_{i_2}], \forall i \in \{1, \dots, 1000\},$ and is affected by the rest of all subsystems in the interconnection topology. The dynamic upon which each subsystem evolves is as
\begin{align}
    \begin{split}\label{eq: ex1}
        \dot{x}_{i_1} & = x_{i_1} + x_{i_2} + 5 \times 10^{-4} \sum_{\substack{j=1 \\ j \neq i}}^{1000} x_{j_2},\\
        \dot{x}_{i_2} & = x_{i_1}^2 + x_{i_1} x_{i_2} + \ln{(1 + x_{i_2}^2)} + \cos{(x_{i_1}x_{i_2}) + u_i} \\ & \;\;\;\;+ 5 \times 10^{-4} \sum_{\substack{j=1 \\ j \neq i}}^{1000} x_{j_1} + \underbrace{20 \sin{(100\,t)}}_{\gamma_i}.
    \end{split}
\end{align}
One can rewrite the dynamics in \eqref{eq: ex1} in the form of~\eqref{eq: final_subsys} with
\begin{align}\label{New}
\mathcal{Z}_i(x_i) = [x_{i_1}; x_{i_2}; x_{i_1}^2; x_{i_1} x_{i_2}; \ln{(1 + x_{i_2}^2)}; \cos{(x_{i_1}x_{i_2})}],
\end{align}
and
\begin{align*}
    \mathcal{A}_i & = \begin{bmatrix}
        1 & 1 & 0 & 0 & 0 & 0\\
        0 & 0 & 1 & 1 & 1 & 1
    \end{bmatrix}\!,  \mathcal{B}_i  = \begin{bmatrix}
        0\\1
    \end{bmatrix}\!,  \mathcal{E}_i = \begin{bmatrix}
        0\\ 20 \sin{(100\, t)}
    \end{bmatrix}\!,
\end{align*}
and the adversarial matrix partitions
\begin{align*}
    \mathcal{D}_i = 5 \times 10^{-4} \times \begin{bmatrix}
        0 & 1\\1 & 0
    \end{bmatrix}\!.
\end{align*}
Subsequently, based on Definition \ref{def: network}, one gets
\begin{align*}
    \mathcal{A} &= \begin{bmatrix}
        \mathcal{A}_1 & \hat{\mathcal{D}}_{1\, 2} & \hat{\mathcal{D}}_{1\, 3} & \cdots & \hat{\mathcal{D}}_{1\, 1000}\\
        \hat{\mathcal{D}}_{2\, 1} & \mathcal{A}_2 & \hat{\mathcal{D}}_{2\, 3} & \cdots & \hat{\mathcal{D}}_{2\, 1000}\\
        \vdots &  & \ddots &  &  \vdots\\
        \hat{\mathcal{D}}_{999\, 1} & \cdots & \hat{\mathcal{D}}_{999\, 998} & \mathcal{A}_{999} & \hat{\mathcal{D}}_{999\, 1000}\\
        \hat{\mathcal{D}}_{1000\, 1} & \cdots & \hat{\mathcal{D}}_{1000\, 998} & \hat{\mathcal{D}}_{1000\, 999} & \mathcal{A}_{1000}
    \end{bmatrix}\!\!,\\
    \mathcal{B} &= \mathsf{blkdiag} (\mathcal{B}_1, \dots, \mathcal{B}_{1000}), \quad \mathcal{E}(t) = [\mathcal{E}_1(t); \dots; \mathcal{E}_{1000}(t)].
\end{align*}
We emphasize that $ \mathcal{A}_i, \mathcal{B}_i ,$ and $\mathcal{E}_i $ are all \emph{assumed to be unknown}, and an \emph{extensive dictionary}—different from the actual one in~\eqref{New}—is available as
\begin{align*}
	\mathcal{Z}_i(x_i) = [&x_{i_1}; x_{i_2}; x_{i_1}^2; x_{i_1} x_{i_2}; x_{i_2}^2; \sin{(x_{i_1}x_{i_2})}; \cos{(x_{i_1}x_{i_2})};\\ &\ln{(1 + x_{i_1}^2)}; \ln{(1 + x_{i_2}^2)}].
\end{align*} 

To meet our goal of guaranteeing the GAS property for this network, we employ our framework to design a CLF and its corresponding controller for the network. In doing so, we first start by gathering two sets of input-state trajectories from the nominal subsystems. More precisely, we gather $\mathcal{T} = 10$ samples in each experiment, with $\tau = 0.01$. 
To proceed, we set $\kappa_i = 2$ and $\mu_i = 1$, followed by solving the SDP \eqref{eq: thm-iss}. Consequently, we get $\rho_i = 2.4975 \times 10^{-4}$ and
\begin{align*}
    \mathbfcal{P}_i & = \begin{bmatrix}
        15.0904  &  3.5319\\
        3.5319   & 0.9630
    \end{bmatrix}\!, \alpha_{1_i} = 0.1292, \alpha_{2_i} = 15.9241,\\
    \mathbfcal{V}_i &= 15.0904\, x_{i_1}^2 + 7.0638\, x_{i_1} x_{i_2} + 0.9630\, x_{i_2}^2,
\end{align*}
satisfying the compositional condition \eqref{eq: comp con}.
Additionally, we design
\begin{align*}
    \mathbfcal{K}_i = \begin{bmatrix}
        -64.2573 &  -16.2470&  \!\! -1&  \!\! -1&  \!\! 0&  \!\! 0&   \!\! -1&   \!\! 0&  \!\! -1
    \end{bmatrix}\!.
\end{align*}
Having obtained $u_i^\ast$, we now proceed with designing the \emph{local ISM controller} $u_{i}^{\mathrm{ISM}}$. To this end, after solving \eqref{eq: lemma_eq2} and computing \eqref{eq: lemma_B}, we obtain $\mathcal{C}_i = [1 \;\; 1]$. Having these variables at hand, we set $\Theta_i = 20.1$ (cf. Lemma \ref{Lemma: rho} and \eqref{eq: Theta}). Thus, the local ISM controller $u_{i}^{\mathrm{ISM}}$ is also designed. Therefore, the CLF and the GAS controller for the unknown interconnected network are obtained as
\begin{align*}
    \mathbfcal{V} = \sum_{i=1}^{1000} \mathbfcal{V}_i(x_i), \quad u \!=\! [u_1;\dots; u_{1000}],
\end{align*}
where $u_i = u_i^\ast + u_{i}^{\mathrm{ISM}}$. Finally, $\alpha_1 = 0.1292, \alpha_2 = 15.9241,$ and $\kappa = 0.0688$.

We first simulate the situation in which we solely apply the local ISS controllers $u_i^\ast$ to the subsystems \emph{without} utilizing the local ISM controllers $u_{i}^{\mathrm{ISM}}$. Figure \ref{fig:ISS_Controller} depicts the simulation result of this scenario. As illustrated in Figure \ref{fig:ISS_Controller}, the origin is not a GAS equilibrium point for the network as expected. We now simulate the scenario in which the local ISM controllers $u_{i}^{\mathrm{ISM}}$ are also applied together with the local ISS controllers $u_i^\ast$. To do so, we choose the initial conditions from the set $x(0) \in [-5\times 10^5, \, 5\times 10^5]$, which is adequately large to be a testament to the effectiveness of our proposed methodology. The simulation result of this scenario is depicted in Figure \ref{fig:ISM_Controller}. As it can be seen, the trajectories are steered to the origin \emph{despite the existence of external perturbation}s, rendering the origin a GAS equilibrium point for the network. Furthermore, the time behavior of the sliding  variable components is given in Figure \ref{fig:SM_Comp}, which aligns precisely with our theoretical expectations (cf. Lemma \ref{Lemma: rho}).

\begin{remark}[\textbf{On the Number of Subsystems}]\label{Remark: Fully}
    It should be acknowledged that satisfying the compositional condition \eqref{eq: comp con} for interconnected networks with a fully connected topology can often be challenging. This is due primarily to the strong coupling between all subsystems, which is why this topology has fewer subsystems than other topologies (cf. Table \ref{table: info}).
\end{remark}

\begin{figure}[t!]
	\centering
	\includegraphics[width=0.87\linewidth]{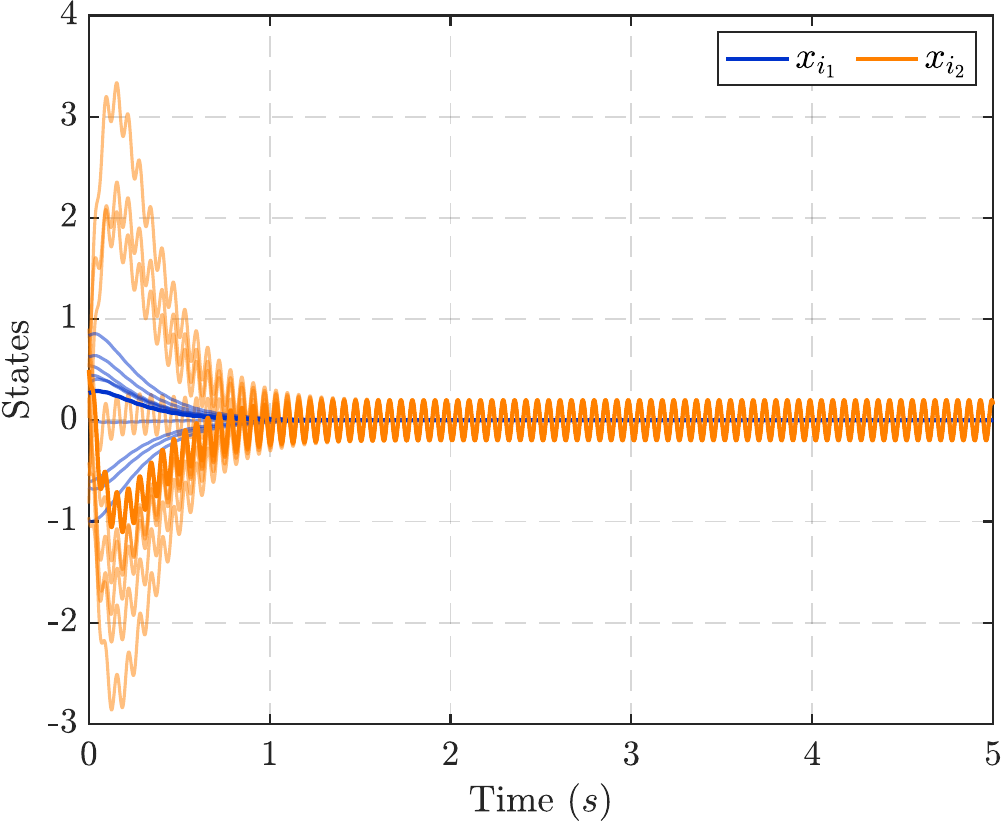}
	\caption{$10$ representative trajectories of the network (with the fully connected topology) with the designed local ISS controllers $u_i^\ast$, starting from different initial conditions  $x(0) \in [-1, \, 1]$, illustrating fluctuations around the origin.}
	\label{fig:ISS_Controller}
\end{figure}

\begin{figure}[t!]
	\centering
	\includegraphics[width=0.87\linewidth]{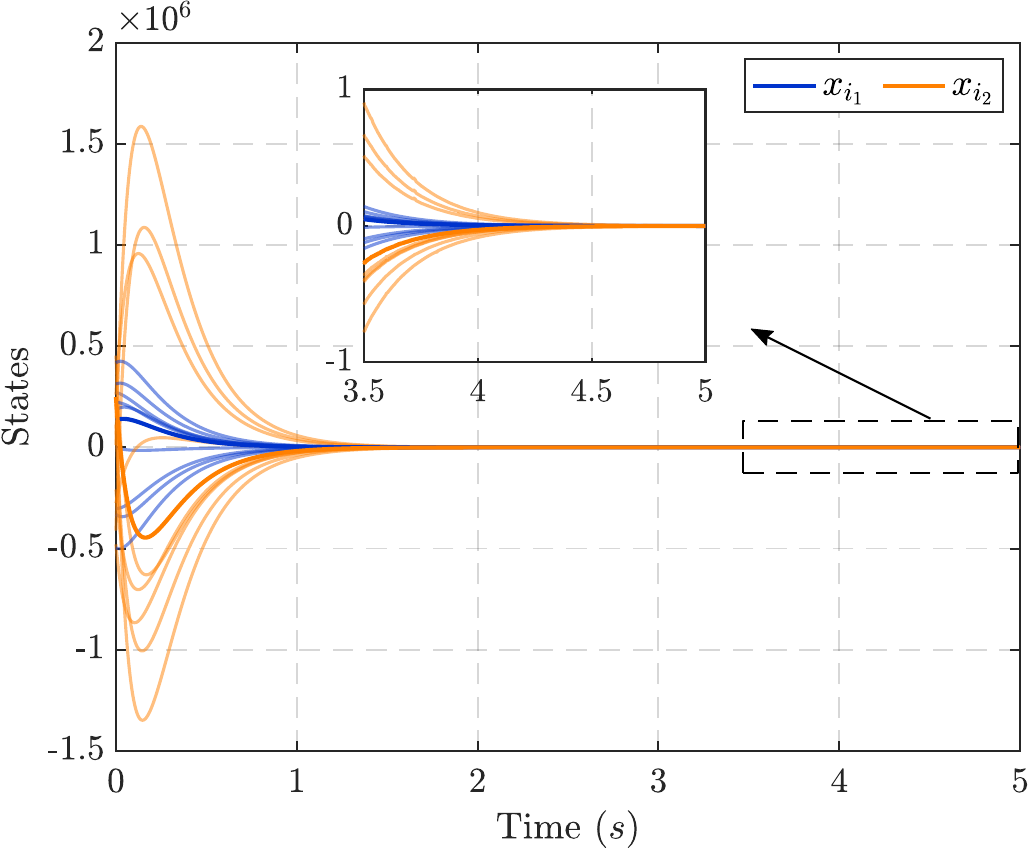}
	\caption{$10$ representative trajectories of the network (with the fully connected topology) under the designed local controllers as in Corollary \ref{Corollary: ext}, starting from different initial conditions in $x(0) \in [-5\times 10^5, \, 5\times 10^5].$ }
	\label{fig:ISM_Controller}
\end{figure}

\begin{figure}[t!]
	\centering
	\includegraphics[width=0.87\linewidth]{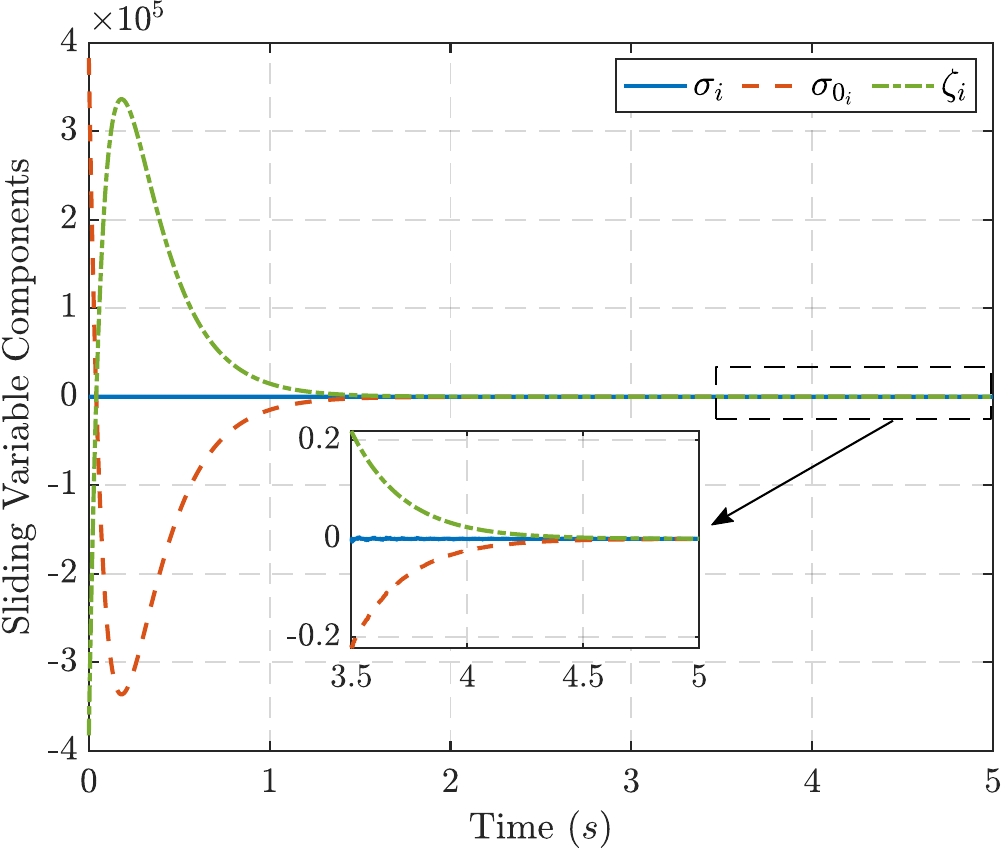}
	\caption{Representative behavior of the sliding variable components. For clarity, only the sliding variable components of a single subsystem are shown.}
	\label{fig:SM_Comp}
\end{figure}

\subsection{Ring Topology}\label{subsec:ring}
Here, we consider the system in \eqref{eq: ex1} under the \emph{ring} topology and aim to design a GAS controller for such a network. For all $i \in \{2, \dots, 2000\}$, the dynamic of each subsystem is described as
\begin{align}
    \begin{split}\label{eq: ex2}
        \dot{x}_{i_1} & = x_{i_1} + x_{i_2} + 0.01\, x_{(i-1)_2},\\
        \dot{x}_{i_2} & = x_{i_1}^2 + x_{i_1} x_{i_2} + \ln{(1 + x_{i_2}^2)} + \cos{(x_{i_1}x_{i_2}) + u_i} \\ & \;\;\;\;+ 0.01\, x_{(i-1)_1} + \underbrace{20 \sin{(100\,t)}}_{\gamma_i},
    \end{split}
\end{align}
with the first subsystem being affected by the last one. The dictionary and all the network matrices are the same as those reported in Subsection \ref{subsec:fully}, except for $\mathcal{D}_i$ and $\mathcal{A}$ as 
\begin{align*}
    \mathcal{D}_i = \begin{bmatrix}
        0 & 0.01\\0.01 & 0
    \end{bmatrix}\!,
\end{align*}
\begin{align*}
    \mathcal{A} = \begin{bmatrix}
        \mathcal{A}_1 & \bbzero_{2 \times 6} & \bbzero_{2 \times 6} & \cdots & \hat{\mathcal{D}}_{1\, 2000}\\
        \hat{\mathcal{D}}_{2\, 1} & \mathcal{A}_2 & \bbzero_{2 \times 6} & \cdots & \bbzero_{2 \times 6}\\
        \vdots &  & \ddots &  &  \vdots\\
        \bbzero_{2 \times 6} & \cdots & \hat{\mathcal{D}}_{1999\, 1998} & \mathcal{A}_{1999} & \bbzero_{2 \times 6}\\
        \bbzero_{2 \times 6} & \cdots & \bbzero_{2 \times 6} & \hat{\mathcal{D}}_{2000\, 1999} & \mathcal{A}_{2000}
    \end{bmatrix}\!\!.
\end{align*}
We now follow the similar procedure as in Subsection \ref{subsec:fully} and set $\kappa_i = 1$ and $\mu_i = 1$. By solving the SDP \eqref{eq: thm-iss}, we obtain $\rho_i = 1 \times 10^{-4}$, and
\begin{align*}
    \mathbfcal{P}_i & = \begin{bmatrix}
        13.0865  &  3.9115\\
        3.9115   & 1.3517
    \end{bmatrix}\!, \alpha_{1_i} = 0.1674, \alpha_{2_i} = 14.2708,\\
    \mathbfcal{V}_i &= 13.0865\, x_{i_1}^2 + 7.8230\, x_{i_1} x_{i_2} + 1.3517\, x_{i_2}^2,
\end{align*}
fulfilling the compositional condition \eqref{eq: comp con}.
We also design
\begin{align*}
    \mathbfcal{K}_i = \begin{bmatrix}
        -44.7986 &  -14.1349 &  \!\! -1&  \!\! -1&  \!\! 0&  \!\! 0&   \!\! -1&   \!\! 0&  \!\! -1
    \end{bmatrix}\!.
\end{align*}
The values related to the local ISM controllers remain unchanged and are the same as those reported in Subsection \ref{subsec:fully}. Finally, the CLF and the GAS controller are designed as
\begin{align*}
    \mathbfcal{V} = \sum_{i=1}^{2000} \mathbfcal{V}_i(x_i), \quad u \!=\! [u_1;\dots; u_{2000}],
\end{align*}
where $u_i = u_i^\ast + u_{i}^{\mathrm{ISM}},$ with $\alpha_1 = 0.1674, \alpha_2 = 14.2708,$ and $\kappa = 0.9994$.

\subsection{Binary Tree Topology}
In this subsection, we continue to assess the effectiveness of our approach by applying it to an interconnected network with a \emph{binary tree} topology. We assume that the network has $4095$ subsystems, with the dynamic of each subsystem being described as follows:
\begin{align*}
    \begin{split}
        \dot{x}_{i_1} & = x_{i_1} + x_{i_2} + 10^{-2}\, \Omega(i =2j\, \vee \, i = 2j+1)x_{j_2},\\
        \dot{x}_{i_2} & = x_{i_1}^2 + x_{i_1} x_{i_2} + \ln{(1 + x_{i_2}^2)} + \cos{(x_{i_1}x_{i_2}) + u_i} \\ & \;\;\;\;+ 10^{-2}\, \Omega(i =2j\, \vee \, i = 2j+1)x_{j_1} + \underbrace{20 \sin{(100\,t)}}_{\gamma_i},
    \end{split}
\end{align*}
where $\vee$ stands for the logical \textsf{OR} operation and $\Omega$ is an indicator that we define it as
\begin{align*}
    \Omega(i =2j\, \vee \, i = 2j+1) = \begin{cases}
        1, & i =2j\, \vee \, i = 2j+1,\\
        0, & \text{Otherwise.}
    \end{cases}
\end{align*}
The dictionary and all network matrices remain the same as those specified in Subsection \ref{subsec:ring}, except for $\mathcal{A}$, which is characterized as follows:
\begin{align*}
    \mathcal{A} = \{\mathcal{A}_{ij}\} = \begin{cases}
        \mathcal{A}_i, & i = j,\\
        \hat{\mathcal{D}}_{i\, j}, & i =2j\, \vee \, i = 2j+1,\\
        \bbzero_{2 \times 6}, & \text{Otherwise.}
    \end{cases}
\end{align*}
Following the similar procedure outlined in Subsection \ref{subsec:fully}, we assign $\kappa_i = 0.5$ and $\mu_i = 1.2$. Subsequently, solving the SDP \eqref{eq: thm-iss} yields $\rho_i = 8.3333 \times 10^{-5}$ and
\begin{align*}
    \mathbfcal{P}_i & = \begin{bmatrix}
        9.9541  &  3.4329\\
        3.4329   & 1.3718
    \end{bmatrix}\!, \quad \alpha_{1_i} = 0.1676, \quad \alpha_{2_i} = 11.1583, \\
    \mathbfcal{V}_i &= 9.9541\, x_{i_1}^2 + 6.8658\, x_{i_1} x_{i_2} + 1.3718\, x_{i_2}^2,
\end{align*}
satisfying the compositional condition \eqref{eq: comp con}. 
Furthermore, we synthesize
\begin{align*}
    \mathbfcal{K}_i = \begin{bmatrix}
        -34.9103 &  -12.5503 &  \!\! -1 &  \!\! -1 &  \!\! 0 &  \!\! 0 &   \!\! -1 &   \!\! 0 &  \!\! -1
    \end{bmatrix}\!.
\end{align*}
The parameters associated with the local ISM controllers remain unchanged, as specified in Subsection \ref{subsec:fully}. Finally, the CLF and the GAS controller are designed as
\begin{align*}
    \mathbfcal{V} = \sum_{i=1}^{4095} \mathbfcal{V}_i(x_i), \quad u = [u_1;\dots; u_{4095}],
\end{align*}
where $u_i = u_i^\ast + u_i^{\mathrm{ISM}},$ with $\alpha_1 = 0.1676$, $\alpha_2 = 11.1583$, and $\kappa = 0.4990$.

\subsection{Star Topology}
We further evaluate the efficacy of our proposed approach by constructing an interconnected network with a \emph{star} topology. The network is assumed to consist of $2000$ subsystems, where, for all $i \in \{2, \dots, 2000\}$, the dynamics of each subsystem are characterized as 
\begin{align*}
    \begin{split}
        \dot{x}_{i_1} & = x_{i_1} + x_{i_2} + 10^{-2}\, x_{1_2},\\
        \dot{x}_{i_2} & = x_{i_1}^2 + x_{i_1} x_{i_2} + \ln{(1 + x_{i_2}^2)} + \cos{(x_{i_1}x_{i_2}) + u_i} \\ & \;\;\;\;+ 10^{-2}\, x_{1_1} + \underbrace{20 \sin{(100\,t)}}_{\gamma_i},
    \end{split}
\end{align*}
while the first subsystem evolves based on
\begin{align}
    \begin{split}\label{eq: ex4}
        \dot{x}_{1_1} & = x_{1_1} + x_{1_2},\\
        \dot{x}_{1_2} & = x_{1_1}^2 + x_{1_1} x_{1_2} + \ln{(1 + x_{1_2}^2)} + \cos{(x_{1_1}x_{1_2}) + u_1} \\ & \;\;\;\;+ \underbrace{20 \sin{(100\,t)}}_{\gamma_1}.
    \end{split}
\end{align}
Therefore, one has
\begin{align*}
    \mathcal{A} = \{\mathcal{A}_{ij}\} = \begin{cases}
        \mathcal{A}_i, & i = j,\\
        \hat{\mathcal{D}}_{i\, j}, & j=1, i\neq j,\\
        \bbzero_{2 \times 6}; & \text{Otherwise.}
    \end{cases}
\end{align*}
In accordance to the procedure described in Subsection \ref{subsec:fully}, we set $\kappa_i = 1.5$ and $\mu_i = 0.7$. By solving the SDP \eqref{eq: thm-iss}, we obtain $\rho_i = 1.4286 \times 10^{-4}$ along with the following results:
\begin{align*}
    \mathbfcal{P}_i &= \begin{bmatrix}
        17.0622  &  4.4805\\
        4.4805   & 1.3808
    \end{bmatrix}\!, \quad \alpha_{1_i} = 0.1909, \quad \alpha_{2_i} = 18.2521, \\
    \mathbfcal{V}_i &= 17.0622\, x_{i_1}^2 + 8.9610\, x_{i_1} x_{i_2} + 1.3808\, x_{i_2}^2,
\end{align*}
which satisfy the compositional requirement \eqref{eq: comp con}. Moreover, we derive
\begin{align*}
    \mathbfcal{K}_i = \begin{bmatrix}
        -50.0880 &  -14.1279 &  \!\! -1 &  \!\! -1 &  \!\! 0 &  \!\! 0 &   \!\! -1 &   \!\! 0 &  \!\! -1
    \end{bmatrix}\!.
\end{align*}
The parameters related to the local ISM controllers remain unchanged, as in Subsection \ref{subsec:fully}. Finally, the CLF and the GAS controller are designed as
\begin{align*}
    \mathbfcal{V} &= \sum_{i=1}^{2000} \mathbfcal{V}_i(x_i), \quad u = [u_1;\dots; u_{2000}],
\end{align*}
where $u_i = u_i^\ast + u_i^{\mathrm{ISM}}$, with the parameters $\alpha_1 = 0.1909$, $\alpha_2 = 18.2521$, and $\kappa = 4.2379 \times 10^{-3}$.

\begin{figure}[t!]
	\centering
	\begin{subfigure}[b]{0.475\textwidth}
		\includegraphics[width=0.87\textwidth]{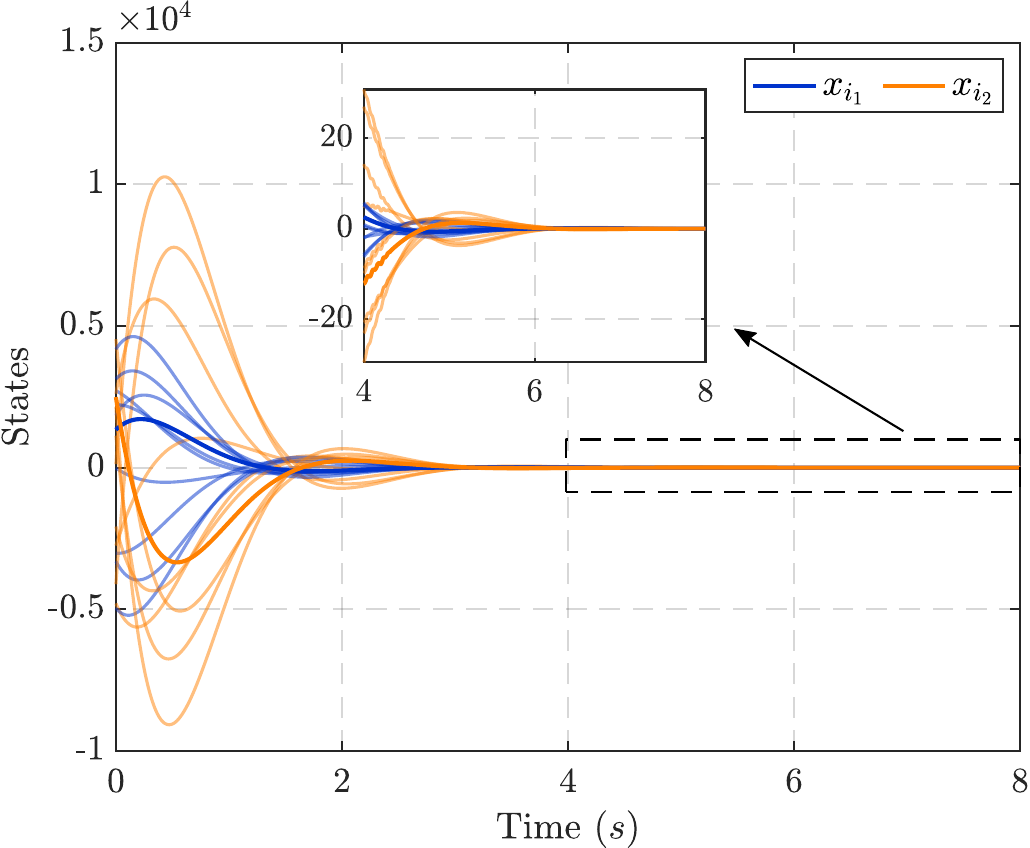}
		\caption{State variables evolution}
		\label{fig:subfig1_1}
	\end{subfigure}\vspace{0.4cm}

	\begin{subfigure}[b]{0.475\textwidth}
		\includegraphics[width=0.87\textwidth]{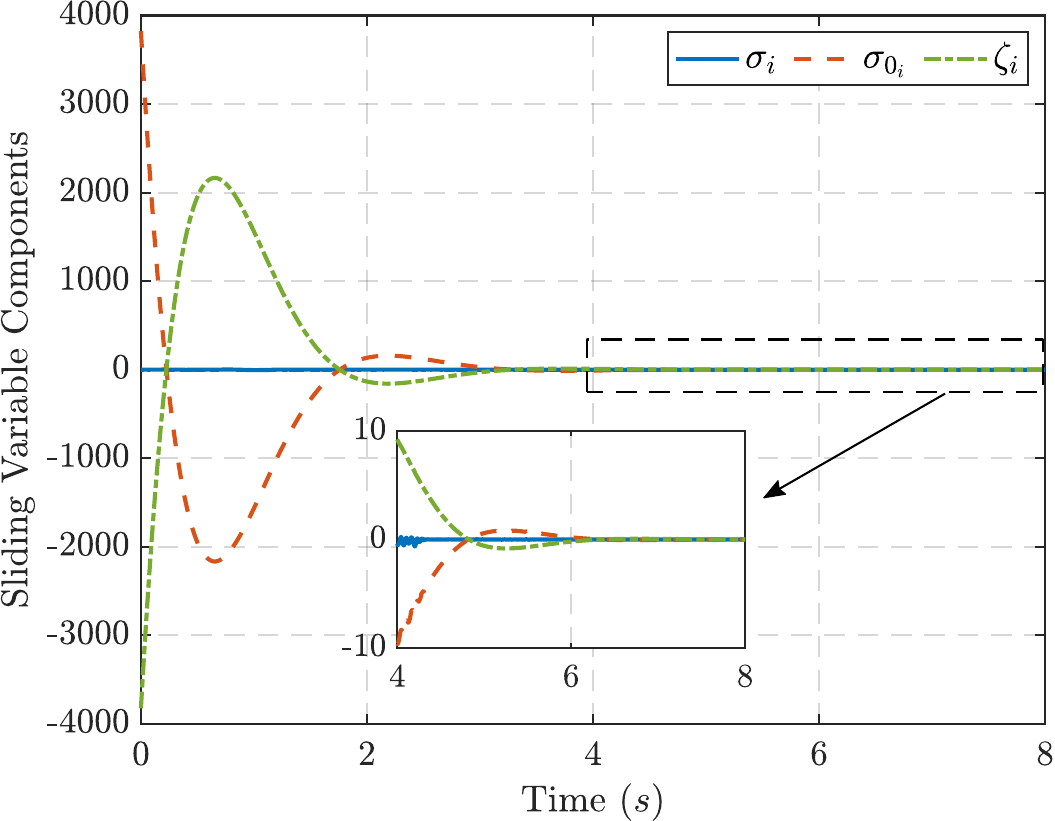}
		\caption{Sliding variable components evolution}
		\label{fig:subfig2_1}
	\end{subfigure}
	
	\caption{Figure \ref{fig:subfig1_1} illustrates $10$ representative trajectories of the network (with a line topology) governed by the locally designed controllers, as described in Corollary \ref{Corollary: ext}, with initial states \( x(0) \in [-5000, 5000] \). Moreover, Figure \ref{fig:subfig2_1} depicts the sample behavior of the sliding variable components.
	}
	\label{fig:2}
\end{figure}

\subsection{Line Topology}
Ultimately, we conclude this section by assessing our data-driven framework for an interconnected network with a \emph{line} topology, which comprises $2000$ subsystems. For all $i \in \{2, \dots, 2000\}$, the dynamic of each subsystem is similar to \eqref{eq: ex2}, whereas the first subsystem's dynamic is similar to \eqref{eq: ex4}. Hence, we get
\begin{align*}
    \mathcal{A} = \begin{bmatrix}
        \mathcal{A}_1 & \bbzero_{2 \times 6} & \bbzero_{2 \times 6} & \cdots & \bbzero_{2 \times 6}\\
        \hat{\mathcal{D}}_{2\, 1} & \mathcal{A}_2 & \bbzero_{2 \times 6} & \cdots & \bbzero_{2 \times 6}\\
        \vdots &  & \ddots &  &  \vdots\\
        \bbzero_{2 \times 6} & \cdots & \hat{\mathcal{D}}_{1999\, 1998} & \mathcal{A}_{1999} & \bbzero_{2 \times 6}\\
        \bbzero_{2 \times 6} & \cdots & \bbzero_{2 \times 6} & \hat{\mathcal{D}}_{2000\, 1999} & \mathcal{A}_{2000}
    \end{bmatrix}\!\!.
\end{align*}
Following the procedure of our framework, we first set $\kappa_i = 0.33$ and $\mu_i = 0.15$. Moreover, solving the SDP \eqref{eq: thm-iss} results in $\rho_i = 6.6667 \times 10^{-4}$ together with
\begin{align*}
    \mathbfcal{P}_i &= \begin{bmatrix}
        8.3454  &  2.9286\\
        2.9286   & 1.5218
    \end{bmatrix}\!, \quad \alpha_{1_i} = 0.4373, \quad \alpha_{2_i} = 9.4299, \\
    \mathbfcal{V}_i &= 8.3454\, x_{i_1}^2 + 5.8572\, x_{i_1} x_{i_2} + 1.5218\, x_{i_2}^2,
\end{align*}
which fulfill the compositional condition \eqref{eq: comp con}. We also design
\begin{align*}
    \mathbfcal{K}_i = \begin{bmatrix}
        -11.3491 &  -4.3779 &  \!\! -1 &  \!\! -1 &  \!\! 0 &  \!\! 0 &   \!\! -1 &   \!\! 0 &  \!\! -1
    \end{bmatrix}\!.
\end{align*}
The parameters associated with the local ISM controllers are aligned with those specified in Subsection \ref{subsec:fully}. The CLF and the GAS controller can be determined as
\begin{align*}
    \mathbfcal{V} &= \sum_{i=1}^{2000} \mathbfcal{V}_i(x_i), \quad u = [u_1;\dots; u_{2000}],
\end{align*}
where $u_i = u_i^\ast + u_i^{\mathrm{ISM}}$, with the parameters $\alpha_1 = 0.4373$, $\alpha_2 = 9.4299$, and $\kappa_i = 0.3285$. The simulation result of this topology is given in Figure \ref{fig:2}, which is exactly in accordance with our theoretical expectations.

\section{Conclusion}\label{Section: Conclusion}
In this work, we developed a compositional data-driven approach to guarantee the GAS property of large-scale nonlinear networks with unknown mathematical models subjected to external disturbances. The methodology consists of collecting two sets of data from each unknown nominal subsystem without external perturbations. The gathered data were then used to design an ISS Lyapunov function along with its corresponding controller for each nominal subsystem, ensuring their ISS properties. To achieve this, sufficient conditions were formulated as data-driven SDPs, enabling the simultaneous design of ISS Lyapunov functions and associated local controllers. To reject the effects of external disturbances on each subsystem and consequently on the entire network, a local ISM controller was designed for each subsystem using the collected data. Under small-gain compositional conditions, the designed data-driven ISS Lyapunov functions were utilized to construct a CLF for the network, thereby guaranteeing the GAS property of the nominal network. The compositional results were then extended to network models affected by external perturbations, where we demonstrated that the synthesized ISM controllers maintained the GAS property even in the presence of disturbances. We finally assessed the effectiveness of our proposed approach over large-scale interconnected networks with five different interconnection topologies.

\section*{References}
\vspace{-0.6906cm}

\bibliographystyle{IEEEtran}
\bibliography{biblio}	

% Generated by IEEEtran.bst, version: 1.14 (2015/08/26)
\begin{thebibliography}{10}
\providecommand{\url}[1]{#1}
\csname url@samestyle\endcsname
\providecommand{\newblock}{\relax}
\providecommand{\bibinfo}[2]{#2}
\providecommand{\BIBentrySTDinterwordspacing}{\spaceskip=0pt\relax}
\providecommand{\BIBentryALTinterwordstretchfactor}{4}
\providecommand{\BIBentryALTinterwordspacing}{\spaceskip=\fontdimen2\font plus
\BIBentryALTinterwordstretchfactor\fontdimen3\font minus
  \fontdimen4\font\relax}
\providecommand{\BIBforeignlanguage}[2]{{%
\expandafter\ifx\csname l@#1\endcsname\relax
\typeout{** WARNING: IEEEtran.bst: No hyphenation pattern has been}%
\typeout{** loaded for the language `#1'. Using the pattern for}%
\typeout{** the default language instead.}%
\else
\language=\csname l@#1\endcsname
\fi
#2}}
\providecommand{\BIBdecl}{\relax}
\BIBdecl

\bibitem{isidori1985nonlinear}
A.~Isidori, \emph{Nonlinear control systems: an introduction}.\hskip 1em plus
  0.5em minus 0.4em\relax Springer, 1985.

\bibitem{neisic2002integral}
D.~Neisic and D.~Angeli, ``Integral versions of {ISS} for sampled-data
  nonlinear systems via their approximate discrete-time models,'' \emph{IEEE
  Transactions on Automatic Control}, vol.~47, no.~12, pp. 2033--2037, 2002.

\bibitem{agrachev2008input}
A.~A. Agrachev, A.~S. Morse, E.~D. Sontag, H.~J. Sussmann, V.~I. Utkin, and
  E.~D. Sontag, ``Input to state stability: Basic concepts and results,''
  \emph{Nonlinear and Optimal Control Theory: Lectures Given at the CIME Summer
  School Held in Cetraro, Italy June 19--29, 2004}, pp. 163--220, 2008.

\bibitem{jiang1994small}
Z.~P. Jiang, A.~R. Teel, and L.~Praly, ``Small-gain theorem for {ISS} systems
  and applications,'' \emph{Mathematics of Control, Signals and Systems},
  vol.~7, pp. 95--120, 1994.

\bibitem{jiang1996lyapunov}
Z.-P. Jiang, I.~M. Mareels, and Y.~Wang, ``A {L}yapunov formulation of the
  nonlinear small-gain theorem for interconnected {ISS} systems,''
  \emph{Automatica}, vol.~32, no.~8, pp. 1211--1215, 1996.

\bibitem{dashkovskiy2007iss}
S.~Dashkovskiy, B.~S. R{\"u}ffer, and F.~R. Wirth, ``An {ISS} small gain
  theorem for general networks,'' \emph{Mathematics of Control, Signals, and
  Systems}, vol.~19, pp. 93--122, 2007.

\bibitem{dashkovskiy2010small}
S.~N. Dashkovskiy, B.~S. R{\"u}ffer, and F.~R. Wirth, ``Small gain theorems for
  large scale systems and construction of {ISS} {L}yapunov functions,''
  \emph{SIAM Journal on Control and Optimization}, vol.~48, no.~6, pp.
  4089--4118, 2010.

\bibitem{kawan2020lyapunov}
C.~Kawan, A.~Mironchenko, A.~Swikir, N.~Noroozi, and M.~Zamani, ``A
  {L}yapunov-based small-gain theorem for infinite networks,'' \emph{IEEE
  Transactions on Automatic Control}, vol.~66, no.~12, pp. 5830--5844, 2020.

\bibitem{utkin1992smc}
V.~I. Utkin, \emph{Sliding Modes in Control and Optimization}.\hskip 1em plus
  0.5em minus 0.4em\relax Springer Berlin, Heidelberg, 1992.

\bibitem{incremona2019smc}
A.~Ferrara, G.~P. Incremona, and M.~Cucuzzella, \emph{Advanced and Optimization
  Based Sliding Mode Control: Theory and Applications}.\hskip 1em plus 0.5em
  minus 0.4em\relax Philadelphia, PA: Society for Industrial and Applied
  Mathematics, 2019.

\bibitem{slotine1991nonlin}
J.~Slotine and W.~Li, \emph{Applied Nonlinear Control}.\hskip 1em plus 0.5em
  minus 0.4em\relax Prentice Hall, 1991.

\bibitem{levant2014hosmc}
Y.~Shtessel, C.~Edwards, L.~Fridman, and A.~Levant, ``Higher-order sliding mode
  controllers and differentiators,'' in \emph{Sliding Mode Control and
  Observation}, ser. Control Engineering.\hskip 1em plus 0.5em minus
  0.4em\relax New York, USA: Springer, 2014, pp. 213--249.

\bibitem{dinuzzo2009hosmc}
F.~Dinuzzo and A.~Ferrara, ``Higher order sliding mode controllers with optimal
  reaching,'' \emph{IEEE Transactions on Automatic Control}, vol.~54, no.~9,
  pp. 2126--2136, 2009.

\bibitem{utkin1996integral}
V.~Utkin and J.~Shi, ``Integral sliding mode in systems operating under
  uncertainty conditions,'' in \emph{Proceedings of 35th Conference on Decision
  and Control (CDC)}, vol.~4.\hskip 1em plus 0.5em minus 0.4em\relax IEEE,
  1996, pp. 4591--4596.

\bibitem{incremona2017mpcism}
G.~P. Incremona, A.~Ferrara, and L.~Magni, ``Asynchronous networked mpc with
  ism for uncertain nonlinear systems,'' \emph{IEEE Transactions on Automatic
  Control}, vol.~62, no.~9, pp. 4305--4317, 2017.

\bibitem{incremona2017robot}
------, ``Mpc for robot manipulators with integral sliding modes generation,''
  \emph{IEEE/ASME Transactions on Mechatronics}, vol.~22, no.~3, pp.
  1299--1307, 2017.

\bibitem{kerschen2006past}
G.~Kerschen, K.~Worden, A.~F. Vakakis, and J.-C. Golinval, ``Past, present and
  future of nonlinear system identification in structural dynamics,''
  \emph{Mechanical Systems and Signal Processing}, vol.~20, no.~3, pp.
  505--592, 2006.

\bibitem{hou2013model}
Z.-S. Hou and Z.~Wang, ``From model-based control to data-driven control:
  {S}urvey, classification and perspective,'' \emph{Information Sciences}, vol.
  235, pp. 3--35, 2013.

\bibitem{dorfler2022bridging}
F.~D{\"o}rfler, J.~Coulson, and I.~Markovsky, ``Bridging direct and indirect
  data-driven control formulations via regularizations and relaxations,''
  \emph{IEEE Transactions on Automatic Control}, vol.~68, no.~2, pp. 883--897,
  2022.

\bibitem{willems2005note}
J.~C. Willems, P.~Rapisarda, I.~Markovsky, and B.~L. De~Moor, ``A note on
  persistency of excitation,'' \emph{Systems \& Control Letters}, vol.~54,
  no.~4, pp. 325--329, 2005.

\bibitem{guo2021data}
M.~Guo, C.~De~Persis, and P.~Tesi, ``Data-driven stabilization of nonlinear
  polynomial systems with noisy data,'' \emph{IEEE Transactions on Automatic
  Control}, vol.~67, no.~8, pp. 4210--4217, 2021.

\bibitem{strasser2021data}
R.~Str{\"a}sser, J.~Berberich, and F.~Allg{\"o}wer, ``Data-driven control of
  nonlinear systems: {B}eyond polynomial dynamics,'' in \emph{Proceedings of
  the 60th IEEE Conference on Decision and Control (CDC)}.\hskip 1em plus 0.5em
  minus 0.4em\relax IEEE, 2021, pp. 4344--4351.

\bibitem{de2023learning}
C.~De~Persis, M.~Rotulo, and P.~Tesi, ``Learning controllers from data via
  approximate nonlinearity cancellation,'' \emph{IEEE Transactions on Automatic
  Control}, vol.~68, no.~10, pp. 6082--6097, 2023.

\bibitem{berberich2020data}
J.~Berberich, J.~K{\"o}hler, M.~A. M{\"u}ller, and F.~Allg{\"o}wer,
  ``Data-driven model predictive control with stability and robustness
  guarantees,'' \emph{IEEE Transactions on Automatic Control}, vol.~66, no.~4,
  pp. 1702--1717, 2020.

\bibitem{taylor2021towards}
A.~J. Taylor, V.~D. Dorobantu, S.~Dean, B.~Recht, Y.~Yue, and A.~D. Ames,
  ``Towards robust data-driven control synthesis for nonlinear systems with
  actuation uncertainty,'' in \emph{Proceedings of the 60th IEEE Conference on
  Decision and Control (CDC)}.\hskip 1em plus 0.5em minus 0.4em\relax IEEE,
  2021, pp. 6469--6476.

\bibitem{zaker2024certified}
M.~Zaker, D.~Angeli, and A.~Lavaei, ``Certified learning of incremental iss
  controllers for unknown nonlinear polynomial dynamics,'' \emph{arXiv:
  2412.03901}, 2024.

\bibitem{kenanian2019data}
J.~Kenanian, A.~Balkan, R.~M. Jungers, and P.~Tabuada, ``Data driven stability
  analysis of black-box switched linear systems,'' \emph{Automatica}, vol. 109,
  2019.

\bibitem{boffi2021learning}
N.~Boffi, S.~Tu, N.~Matni, J.-J. Slotine, and V.~Sindhwani, ``Learning
  stability certificates from data,'' in \emph{Conference on Robot
  Learning}.\hskip 1em plus 0.5em minus 0.4em\relax PMLR, 2021, pp. 1341--1350.

\bibitem{lavaei2022data}
A.~Lavaei, P.~M. Esfahani, and M.~Zamani, ``Data-driven stability verification
  of homogeneous nonlinear systems with unknown dynamics,'' in
  \emph{Proceedings of the 61st Conference on Decision and Control
  (CDC)}.\hskip 1em plus 0.5em minus 0.4em\relax IEEE, 2022, pp. 7296--7301.

\bibitem{chen2024data}
H.~Chen, A.~Bisoffi, and C.~De~Persis, ``Data-driven input-to-state
  stabilization,'' \emph{arXiv:2407.06044}, 2024.

\bibitem{sperl2023approximation}
M.~Sperl, J.~Mysliwitz, and L.~Gr{\"u}ne, ``Approximation of separable control
  {L}yapunov functions with neural networks,'' 2023.

\bibitem{liu2024compositionally}
J.~Liu, Y.~Meng, M.~Fitzsimmons, and R.~Zhou, ``Compositionally verifiable
  vector neural {L}yapunov functions for stability analysis of interconnected
  nonlinear systems,'' \emph{arXiv:2403.10007}, 2024.

\bibitem{zhang2023compositional}
S.~Zhang, Y.~Xiu, G.~Qu, and C.~Fan, ``Compositional neural certificates for
  networked dynamical systems,'' in \emph{Learning for Dynamics and Control
  Conference}.\hskip 1em plus 0.5em minus 0.4em\relax PMLR, 2023, pp. 272--285.

\bibitem{de2008z3}
L.~De~Moura and N.~Bj{\o}rner, ``Z3: An efficient {SMT} solver,'' in
  \emph{Proceedings of the International conference on Tools and Algorithms for
  the Construction and Analysis of Systems}, 2008, pp. 337--340.

\bibitem{gao_-complete_2012}
S.~Gao, J.~Avigad, and E.~M. Clarke, ``$\delta$-{complete} {decision}
  {procedures} for {satisfiability} {over} {the} {reals},'' in \emph{Automated
  {Reasoning}}, ser. Lecture {Notes} in {Computer} {Science}, 2012, pp.
  286--300.

\bibitem{cimatti_mathsat5_2013}
A.~Cimatti, A.~Griggio, and R.~Schaafsma, B. J.and~Sebastiani, ``The {MathSAT5}
  {SMT} {solver},'' in \emph{Tools and {Algorithms} for the {Construction} and
  {Analysis} of {Systems}}, ser. Lecture {Notes} in {Computer} {Science}, 2013,
  pp. 93--107.

\bibitem{wongpiromsarn2015automata}
T.~Wongpiromsarn, U.~Topcu, and A.~Lamperski, ``Automata theory meets barrier
  certificates: Temporal logic verification of nonlinear systems,'' \emph{IEEE
  Transactions on Automatic Control}, vol.~61, no.~11, pp. 3344--3355, 2015.

\bibitem{abate2022neural}
A.~Abate, A.~Edwards, and M.~Giacobbe, ``Neural abstractions,'' \emph{Advances
  in Neural Information Processing Systems}, vol.~35, pp. 26\,432--26\,447,
  2022.

\bibitem{lavaei2023data}
A.~Lavaei and D.~Angeli, ``Data-driven stability certificate of interconnected
  homogeneous networks via {ISS} properties,'' \emph{IEEE Control Systems
  Letters}, vol.~7, pp. 2395--2400, 2023.

\bibitem{calafiore2006scenario}
G.~C. Calafiore and M.~C. Campi, ``The scenario approach to robust control
  design,'' \emph{IEEE Transactions on Automatic Control}, vol.~51, no.~5, pp.
  742--753, 2006.

\bibitem{campi2009scenario}
M.~C. Campi, S.~Garatti, and M.~Prandini, ``The scenario approach for systems
  and control design,'' \emph{Annual Reviews in Control}, vol.~33, no.~2, pp.
  149--157, 2009.

\bibitem{esfahani2014performance}
P.~M. Esfahani, T.~Sutter, and J.~Lygeros, ``Performance bounds for the
  scenario approach and an extension to a class of non-convex programs,''
  \emph{IEEE Transactions on Automatic Control}, vol.~60, no.~1, pp. 46--58,
  2014.

\bibitem{sacchi2024dnn}
N.~Sacchi, E.~Vacchini, G.~P. Incremona, and A.~Ferrara, ``On neural networks
  application in integral sliding mode control,'' \emph{Journal of the Franklin
  Institute}, vol. 361, no.~13, 2024.

\bibitem{ferrara2024nnsmc}
A.~Ferrara, G.~P. Incremona, E.~Vacchini, and N.~Sacchi, ``Design of neural
  networks based sliding mode control and observation: An overview,'' in
  \emph{17th International Workshop on Variable Structure Systems (VSS)}, Abu
  Dhabi, UAE, Oct. 2024, pp. 142--147.

\bibitem{riva2024ddsmc}
G.~Riva, G.~P. Incremona, S.~Formentin, and A.~Ferrara, ``A data-driven
  approach for integral sliding mode control design,'' in \emph{63rd IEEE
  Conference on Decision and Control (CDC)}, Milan, Italy, Dec. 2024, pp.~--.

\bibitem{campi2022vrft}
M.~Campi, A.~Lecchini, and S.~Savaresi, ``Virtual reference feedback tuning: a
  direct method for the design of feedback controllers,'' \emph{Automatica},
  vol.~38, no.~8, pp. 1337--1346, 2002.

\bibitem{angeli2000characterization}
D.~Angeli, E.~D. Sontag, and Y.~Wang, ``A characterization of integral
  input-to-state stability,'' \emph{IEEE Transactions on Automatic Control},
  vol.~45, no.~6, pp. 1082--1097, 2000.

\bibitem{sontag1996new}
E.~D. Sontag and Y.~Wang, ``New characterizations of input-to-state
  stability,'' \emph{IEEE Transactions on Automatic Control}, vol.~41, no.~9,
  pp. 1283--1294, 1996.

\bibitem{guo2022data}
M.~Guo, C.~De~Persis, and P.~Tesi, ``Data-driven stabilizer design and
  closed-loop analysis of general nonlinear systems via {T}aylor's expansion,''
  \emph{arXiv:2209.01071}, 2022.

\bibitem{rudin1992nonlinear}
L.~I. Rudin, S.~Osher, and E.~Fatemi, ``Nonlinear total variation based noise
  removal algorithms,'' \emph{Physica D: nonlinear phenomena}, vol.~60, no.
  1-4, pp. 259--268, 1992.

\bibitem{bhatia1995cauchy}
R.~Bhatia and C.~Davis, ``A {C}auchy-{S}chwarz inequality for operators with
  applications,'' \emph{Linear Algebra and its Applications}, vol. 223, pp.
  119--129, 1995.

\bibitem{young1912classes}
W.~H. Young, ``On classes of summable functions and their {F}ourier series,''
  \emph{Proceedings of the Royal Society of London. Series A, Containing Papers
  of a Mathematical and Physical Character}, vol.~87, no. 594, pp. 225--229,
  1912.

\bibitem{filippov2013differential}
A.~F. Filippov, \emph{Differential equations with discontinuous right-hand
  sides: {C}ontrol systems}.\hskip 1em plus 0.5em minus 0.4em\relax Springer
  Science \& Business Media, 2013, vol.~18.

\end{thebibliography}

\begin{IEEEbiography}[{\includegraphics[width=1in,height=1.3in,clip,keepaspectratio]{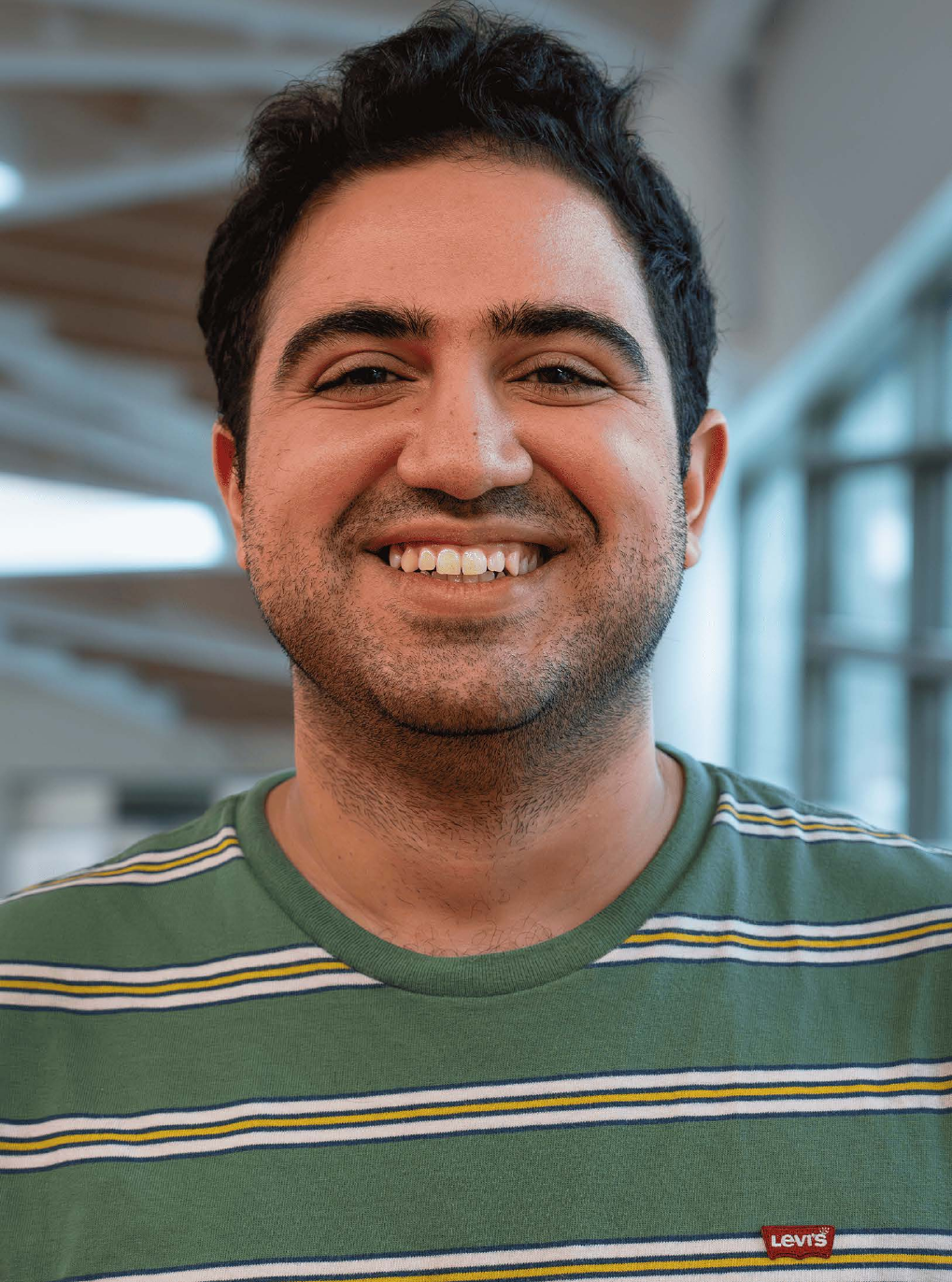}}]{Behrad Samari}~(Student Member, IEEE) received his B.Sc. and M.Sc. degrees in electrical engineering, control major, from K. N. Toosi University of Technology, Tehran, Iran, and University of Tehran (UT), Tehran, Iran, in 2019 and 2022, respectively. He is currently pursuing his PhD in the School of Computing at Newcastle University, U.K. He is the Best Repeatability Prize Finalist at the 8$^{\text{th}}$ IFAC Conference on Analysis and Design of Hybrid Systems (ADHS), 2024. His research interests include (nonlinear) control and system theory, data-driven approaches, and formal methods.
\end{IEEEbiography}

\begin{IEEEbiography}[{\includegraphics[width=1in,height=1.3in,clip,keepaspectratio]{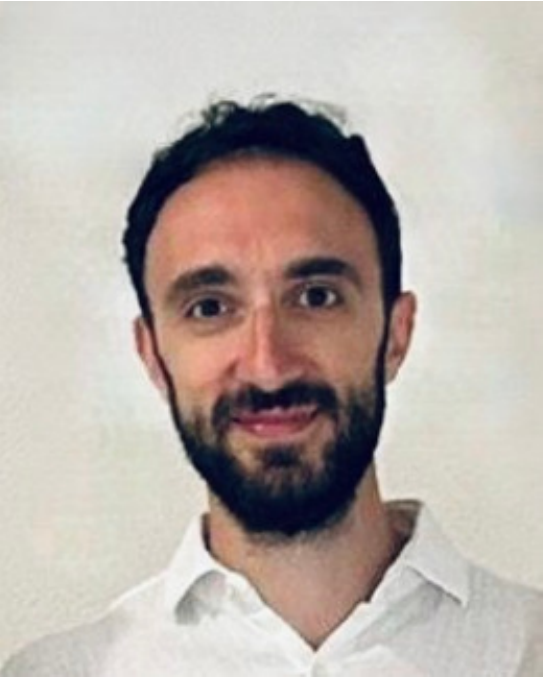}}]{Gian Paolo Incremona}
(M’10, SM’23) is associate professor of automatic control at Politecnico di Milano. He was a student of the Almo Collegio Borromeo of Pavia, and of the Institute for Advanced Studies IUSS of Pavia. He received the bachelor’s and master’s degree’s summa cum laude in Electric Engineering, and the Ph.D. degree in Electronics, Electric and Computer Engineering from the University of Pavia in 2010, 2012 and 2016, respectively. From October to December 2014, he was with the Dynamics and Control Group at the Eindhoven Technology University, The Netherlands. He was a recipient of the 2018 Best Young Author Paper Award from the Italian Chapter of the IEEE Control Systems Society, and since 2018 he has been a member of the conference editorial boards of the IEEE Control System Society and of the European Control Association. At present, he is Associate Editor of the Journal Nonlinear Analysis: Hybrid Systems, International Journal of Control, and IEEE Control Systems Letters. His research is focused on sliding mode control, model predictive control and switched systems with application mainly to train control, robotics and power plants.
\end{IEEEbiography}

\begin{IEEEbiography}[{\includegraphics[width=1in,height=1.3in,clip,keepaspectratio]{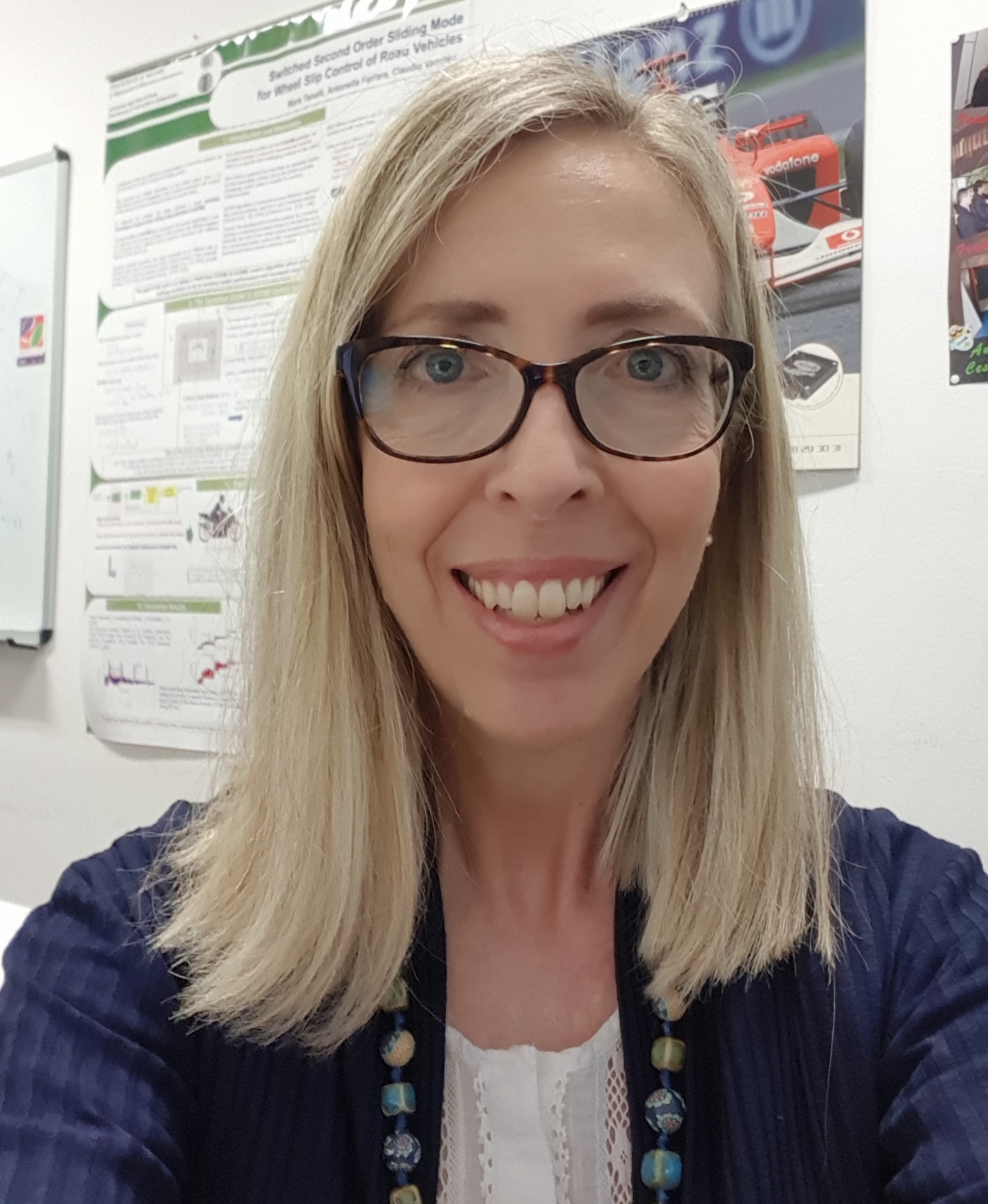}}]{Antonella Ferrara}~(SM'03--Fellow'20)
Antonella Ferrara received the M.Sc. degree in Electronic Engineering and the Ph.D. degree in Electronic Engineering and Computer Science from the University of Genoa, Italy, in 1987 and 1992, respectively. Since 2005, she has been Full Professor of Automatic Control at the University of Pavia, Italy. Her research activities are mainly in the area of nonlinear control, with a special emphasis on sliding mode control, and application to robotics, power systems and road traffic. She is author and co-author of more than 450 publications including more than 170 journal papers, 2 monographs (published by Springer Nature and SIAM, respectively) and one edited book (IET). She is currently serving as Associate Editor of Automatica, and Senior Editor of the IEEE Open Journal of Intelligent Transportation Systems. She served as Senior Editor of the IEEE Transactions on Intelligent Vehicles, as well as Associate Editor of the IEEE Transactions on Control Systems Technology, IEEE Transactions on Automatic Control, IEEE Control Systems Magazine and International Journal of Robust and Nonlinear Control. Antonella Ferrara is the Chair of the EUCA Conference Editorial Board, the Director of Operations of the IEEE Control Systems Society, the Vice-Chair for Industry of the IFAC TC on Nonlinear Control Systems (2024-2026), a member of the IFAC Industry Board and of the IFAC Conference Board. She has been appointed as one of the two Program Chairs of the 24th IFAC Word Congress to be held in Amsterdam, The Netherlands, in 2029. Among several awards, she was a co-recipient of the 2020 IEEE Transactions on Control Systems Technology Outstanding Paper Award. She is a Fellow of IEEE, Fellow of IFAC and Fellow of AAIA. She is also a Senior Fellow of the Brussels Institute for Advanced Studies (BrIAS).
\end{IEEEbiography}

\begin{IEEEbiography}
	[{\includegraphics[width=1in,height=1.25in,clip,keepaspectratio]{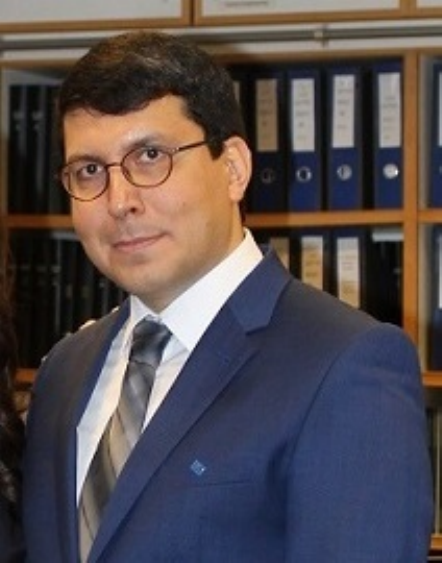}}]{Abolfazl Lavaei}~(M'17--SM'22)
	is an Assistant Professor in the School of Computing at Newcastle University, United Kingdom. Between January 2021 and July 2022, he was a Postdoctoral Associate in the Institute for Dynamic Systems and Control at ETH Zurich, Switzerland. He was also a Postdoctoral Researcher in the Department of Computer Science at LMU Munich, Germany, between November 2019 and January 2021. He received the Ph.D. degree in Electrical Engineering from the Technical University of Munich (TUM), Germany, in 2019. He obtained the M.Sc. degree in Aerospace Engineering with specialization in Flight Dynamics and Control from the University of Tehran (UT), Iran, in 2014. He is the recipient of several international awards in the acknowledgment of his work including ADHS Best Repeatability Prize 2024 (Finalist, as an advisor) and 2021, HSCC Best Demo/Poster Awards 2022 and 2020, IFAC Young Author Award Finalist 2019, and Best Graduate Student Award 2014 at University of Tehran with the full GPA (20/20). His research interests revolve around the intersection of Control Theory, Formal Methods in Computer Science, and Data Science.
\end{IEEEbiography}

\end{document}